\documentclass[10pt]{article}

\usepackage[margin=1in]{geometry}
\usepackage{mathtools}
\usepackage{footnote}
\usepackage{bbm}
\usepackage[noend]{algpseudocode}
\usepackage{tikz}
\usetikzlibrary{positioning,decorations.pathreplacing,backgrounds}
\usepackage{xspace}
\usepackage[ruled,vlined,linesnumbered]{algorithm2e}
\usepackage{makecell}
\usepackage{booktabs}
\usepackage{authblk}
\usepackage{url}
\usepackage{hyperref}
\usepackage{enumerate}
\usepackage{amsmath}
\usepackage{amsthm}
\usepackage{amssymb}
\usepackage{array}
\usepackage{tabularx}
\usepackage{threeparttable}
\usepackage{booktabs}
\usepackage[table]{xcolor}

\newcommand{\treeChildren}{
    \begin{tikzpicture}[
            scale=0.75,
            every node/.style={
            circle,
            draw=none,
            minimum size=2.5em,
            inner sep=0pt,
            font=\normalsize
            },
            every edge/.style={>=Stealth, line width=0.7, draw=black},
            ]
        \node at (0,0.5) {$X$};
        \node at (-0.5,-1) {$\nu_1$};
        \node at (0,-1) {$\hdots$};
        \node at (0.5,-1) {$\nu_k$};
        \draw (-0.5,-0.75) -- (-0.1,0.25);
        \draw (0.5,-0.75) -- (0.1,0.25);
    \end{tikzpicture}
}

\newcommand{\treeChildrenReturn}{
    \begin{tikzpicture}[
            scale=0.75,
            every node/.style={
            circle,
            draw=none,
            minimum size=2.5em,
            inner sep=0pt,
            font=\normalsize
            },
            every edge/.style={>=Stealth, line width=0.7, draw=black},
            ]
        \node at (0,1) {$V'_X(\vect S \setminus \{X\})$};
        \node at (0,0) {$V_X(\vect S)$};
        \node at (-1,-1) {$V_1(\vect S_1)$};
        \node at (0,-1) {$\hdots$};
        \node at (1,-1) {$V_k(\vect S_k)$};
        \draw (-0.75,-0.75) -- (-0.1,-0.25);
        \draw (0.75,-0.75) -- (0.1,-0.25);
        \draw (0,0.25) -- (0, 0.75);
    \end{tikzpicture}
}

\newtheorem{theorem}{Theorem}
\newtheorem{lemma}[theorem]{Lemma}

\newtheorem{proposition}[theorem]{Proposition}

\newtheorem{corollary}[theorem]{Corollary}
\newtheorem{example}[theorem]{Example}
\newtheorem{definition}[theorem]{Definition}

\newtheorem{conjecture}[theorem]{Conjecture}

\definecolor{burgundy}{RGB}{170,30,80}
\newcommand{\eden}[1]{\textcolor{red}{Eden: #1}}
\newcommand{\andrei}[1]{\textcolor{blue}{AD: #1}}
\newcommand{\haozhe}[1]{\textcolor{red}{Haozhe: #1}}
\newcommand{\haozheupdate}[1]{\textcolor{blue}{#1}}

\newcommand{\nop}[1]{}

\newcommand{\N}{\mathbb{N}}
\renewcommand{\H}{\mathcal{H}}
\newcommand{\V}{\mathcal{V}}
\newcommand{\T}{\mathcal{T}}
\newcommand{\E}{\mathcal{E}}

\newcommand{\Z}{\mathbb{Z}}
	
\newcommand{\R}{\mathbb{R}}

\newcommand{\Qtop}{\widehat{Q}}    

\newcommand{\bigO}{\mathcal{O}}

\newcommand{\D}{\mathcal{D}}

\newcommand{\vect}[1]{\mathbf{#1}}

\newcommand{\set}[1]{\left\{#1\right\}}

\newcommand{\defeq}{\stackrel{\text{def}}{=}}

\newcommand{\at}{\mathsf{at}}
\newcommand{\Bound}{\mathsf{Bound}}
\newcommand{\Free}{\mathsf{Free}}
\newcommand{\vars}{\mathsf{vars}}

\newcommand{\roots}{\mathsf{roots}}
\newcommand{\leaves}{\mathsf{leaves}}
\newcommand{\Dom}{\mathsf{Dom}}
\newcommand{\fhtw}{\mathsf{fhtw}}

\newcommand{\pathh}{\mathsf{path}}
\newcommand{\nodes}{\mathsf{nodes}}
\newcommand{\FTD}{\mathsf{FTD}}
\newcommand{\TD}{\mathsf{TD}}
\newcommand{\width}{\mathsf{width}}

\algrenewcommand\algorithmicforall{\textbf{foreach}}
\begin{document}

\title{The Role of Semirings in Incremental View Maintenance}

\author{Eden Chmielewski, Andrei Draghici, Dan Olteanu, Haozhe Zhang}

\affil{University of Zurich}

\date{}

\maketitle

\begin{abstract}
    We study the problem of incremental view maintenance (IVM) under inserts to semiring-annotated databases. The key observation put forward in this paper is that the complexity of the IVM problem depends fundamentally on the underlying semiring.

    We introduce a class of conjunctive queries called p-hierarchical. For a zero-sum free and zero-divisor free commutative semiring $K$, we show that for any p-hierarchical query with fractional hypertree width $\fhtw$ and any insert-only update sequence of length $N$ to an initially empty $K$-database, we can construct a data structure that can be updated in $\bigO(N^{\fhtw-1})$ amortized time and supports the enumeration of the query result with constant delay. In particular, the amortized update time for any p-hierarchical $\alpha$-acyclic query is constant.
    
    For a class of semirings used to model a wide range of computational problems, we give conditional lower bounds showing that any conjunctive query without self-joins that is not p-hierarchical cannot be maintained with constant amortized update time and constant enumeration delay under inserts. This class includes the natural semiring and its generalizations to the provenance and covariance semirings, as well as idempotent and strictly ordered semirings such as the tropical semiring.

    When put together, our upper and lower bounds imply a dichotomy for the insert-only maintenance of conjunctive queries without self-joins over every semiring in our class: such a query can be maintained with constant amortized update time and constant enumeration delay if and only if it is p-hierarchical $\alpha$-acyclic.
    
    Our dichotomy is sandwiched between two known maintenance dichotomies, both for the Boolean semiring. On the one hand, tractable maintenance under inserts-only can only be achieved for the free-connex $\alpha$-acyclic queries, which form a strict superset of the p-hierarchical $\alpha$-acyclic queries. On the other hand, tractable maintenance under both inserts and deletes can only be achieved for the q-hierarchical queries, which form a strict subset of the p-hierarchical $\alpha$-acyclic queries.
\end{abstract}

\nop{
\section*{Acknowledgements}
This work was partially supported by Swiss NSF 200021-231956.
}

\section{Introduction}

The Incremental View Maintenance (IVM) problem for conjunctive queries is fundamental to databases~\cite{IVM:GemsPODS:2024}: 
Given a query and a database subject to a sequence of single-tuple updates, the IVM problem is to maintain the result of the query under the updates and to allow for the enumeration of the tuples in the query result after each update. A key observation that motivates the widespread adoption of IVM approaches is that maintaining the result of a query under updates can be more efficient than re-evaluating the query from scratch.

Several IVM systems from academia and industry, including DBToaster~\cite{DBT:VLDBJ:2014}, DynYannaka\-kis~\cite{DynYannakakis:SIGMOD:2017}, F-IVM~\cite{FIVM,IVM:GemsPODS:2024}, DBSP~\cite{DBSP:VLDB:2023},  CROWN~\cite{CROWN}, and RAIVM~\cite{RAIVM2026}, maintain queries under updates~\cite{IVM:GemsPODS:2024}. The deployment of such systems is rarely restricted to maintaining plain conjunctive queries over standard relational databases. Instead, IVM systems often maintain the provenance, multiplicities, or covariance information associated with the query result. Such sophisticated tasks can be modeled as query maintenance over 
$K$-databases~\cite{green2007provenance}, where tuples are mapped to payloads from a commutative semiring $K$.
This calls for a systematic study of IVM over $K$-databases. We say that a query is \emph{tractable} if it can be maintained with constant amortized update time and constant enumeration delay; Sec.~\ref{sec:prelims} defines these measures. The main question addressed in this paper is whether the choice of the semiring $K$ can influence the tractability of query maintenance under inserts: {\em Are there conjunctive queries that are tractable over some semirings and not over others?}

\begin{figure}[t]
    \centering
    \begin{tikzpicture}[scale=0.82, every node/.style={transform shape},
    class label/.style={anchor=west, font=\normalsize\bfseries},
    example/.style={anchor=east, font=\small},
    dichotomy label/.style={anchor=west, font=\small, align=left, inner xsep=4pt},
    leader/.style={gray!60, line width=0.5pt},
]
    \draw[rounded corners=7pt, densely dotted, line width=0.8pt, draw=blue!45!black, fill=blue!3]
        (-6.0,-1.0) rectangle (6.0,2.75);
    \node[class label, text=blue!45!black] at (-5.7,2.4) {$\alpha$-acyclic};
    \node[example, text=blue!45!black] at (5.7,2.4) {$Q(X,Z)=\sum_Y R(X,Y)\cdot S(Y,Z)$};

    \draw[rounded corners=7pt, line width=0.8pt, draw=green!45!black, fill=green!4]
        (-5.5,-0.78) rectangle (5.5,1.95);
    \node[class label, text=green!35!black] at (-5.2,1.6) {free-connex $\alpha$-acyclic};
    \node[example, text=green!35!black] at (5.2,1.6) {$Q(X)=\sum_Y S(X,Y)\cdot T(Y)$};

    \draw[rounded corners=7pt, line width=1.2pt, draw=orange!65!black, fill=yellow!22]
        (-5.0,-0.55) rectangle (5.0,1.15);
    \node[class label, text=orange!45!black] at (-4.7,0.8) {p-hierarchical $\alpha$-acyclic};
    \node[example, text=orange!45!black] at (4.7,0.8) {$Q(X,Y)=R(X)\cdot S(X,Y)\cdot T(Y)$};

    \draw[rounded corners=7pt, line width=0.8pt, draw=gray!65!black, fill=white]
        (-4.5,-0.32) rectangle (4.5,0.32);
    \node[class label, text=gray!65!black] at (-4.2,0.0) {q-hierarchical};
    \node[example, text=gray!65!black] at (4.2,0.0) {$Q(X,Y)=R(X,Y)\cdot S(X)$};

    \draw[leader] (5.5,1.6) -- (6.1,1.6) -- (6.4,1.95);
    \node[dichotomy label, text=green!35!black] at (6.4,1.95)
        {insert-only\\Boolean semiring~\cite{CROWN,Free-connex:CSL:2007}};
    \draw[leader] (5.0,0.8) -- (6.4,0.8);
    \node[dichotomy label, text=orange!45!black] at (6.4,0.8)
        {insert-only\\semirings in $\mathcal{K}$ (Corollary~\ref{cor:dichotomy})};
    \draw[leader] (4.5,0.0) -- (6.1,0.0) -- (6.4,-0.35);
    \node[dichotomy label, text=gray!65!black] at (6.4,-0.35)
        {insert-delete\\Boolean semiring~\cite{QHierarchical}};
\end{tikzpicture}
    \caption{Three classes of tractable queries under different maintenance settings, drawn inside the class of $\alpha$-acyclic queries. For each class, we depict an example query that is not in the next smaller class. The labels at the right name the update setting and the semirings of each dichotomy. In the insert-delete setting, q-hierarchical queries remain tractable over a large class of semirings~\cite{insert-delete-all-semirings}.}
    \label{fig:query_hierarchy}
    \vspace*{-1.5em}
\end{figure}
\nop{
\begin{figure}[t]
    \centering
    \begin{tikzpicture}[scale=0.60, every node/.style={transform shape}, >=stealth]

        \draw[thick, dotted, fill=blue!5, draw=blue!30!black, opacity=0.8] (0, 0) ellipse (6.5cm and 4.5cm);
        \node[text=blue!30!black, font=\large\bfseries, opacity=0.8] at (0, 3.8) {$\alpha$-acyclic};

        \draw[thick, fill=green!5, draw=green!30!black, opacity=0.8] (0, -0.5) ellipse (5.5cm and 3.5cm);
        \node[text=green!30!black, font=\large\bfseries, opacity=0.8] (fc_node) at (0, 2.3) {free-connex $\alpha$-acyclic};

        \draw[ultra thick, fill=yellow!30, draw=orange!80!black] (0, -1) ellipse (4.5cm and 2.5cm);
        \node[text=orange!80!black, font=\large\bfseries] (p_node) at (0, 0.8) {p-hierarchical $\alpha$-acyclic};

        \draw[thick, fill=orange!10, draw=orange!40!black, opacity=0.8] (0, -1.8) ellipse (3cm and 1.2cm);
        \node[text=orange!40!black, font=\large\bfseries, opacity=0.8] (q_node) at (0, -1.8) {q-hierarchical};

        \node[align=left, text=green!30!black, opacity=0.8] (fc_setup) at (9.5, 2.3) {\textbf{Insert-Only~\cite{CROWN,InsertsVSDeletes}}  \\      
         \textit{Boolean Semiring}};
        \draw[->, thick, green!30!black, opacity=0.8] (fc_setup.west) -- (fc_node.east);

        \node[align=left, text=orange!80!black] (p_setup) at (9.5, 0.8) {\textbf{Insert-Only} [This paper] \\ \textit{$\mathcal{K}$ Semirings}};
        \draw[->, ultra thick, orange!80!black] (p_setup.west) -- (p_node.east);

        \node[align=left, text=orange!40!black, opacity=0.8] (q_setup) at (9.5, -1.8) {\textbf{Insert-Delete~\cite{QHierarchical}} \\ \textit{Boolean Semiring}};
        \draw[->, thick, orange!40!black, opacity=0.8] (q_setup.west) -- (q_node.east);
    \end{tikzpicture}
    \caption{The strict containment hierarchy of tractable classes of conjunctive queries. Each continuous line is a separation result: Any query in the class inside can be maintained with constant delay enumeration and (amortized) constant update time over $K$-databases, while any query without self-joins outside the class cannot enjoy such tractable maintenance. Such dichotomies are known for: q-hierarchical queries and the Boolean semiring~\cite{QHierarchical}; free-connex $\alpha$-acyclic queries and the Boolean semiring~\cite{CROWN,InsertsVSDeletes}; and the $\alpha$-acyclic p-hierarchical queries and the class $\mathcal{K}$ of semirings (this paper). \andrei{It is worth noting that for the insert-delete setting and a large class of semirings, q-hierarchical queries can be maintained in constant time~\cite{insert-delete-all-semirings}}.}
    \label{fig:query_hierarchy}
\end{figure}
}

The answer is yes. Consider the query $Q(X)=\sum_Y S(X,Y)\cdot T(Y)$ over a database under insert-only updates. Over the Boolean semiring, $Q(x)$ is true exactly when there is a value $y$ such that both $S(x,y)$ and $T(y)$ are true; under inserts, $Q(x)$ can therefore change only once, from false to true. The query is free-connex  $\alpha$-acyclic and tractable under inserts~\cite{CROWN}. Over the natural semiring, $Q(x)$ instead counts such witnesses $y$, so one insert into $T$ can change the counts of many output $x$-values, and later inserts can change the same counts repeatedly. Our lower bound shows that $Q$ is not tractable under inserts and over the natural semiring unless the Online Matrix-Vector Multiplication (OMv) conjecture~\cite{OMV:STOC:2015} fails. The same query is thus tractable over the Boolean semiring but not over the natural semiring.

A key result of this work, Cor.~\ref{cor:dichotomy}, characterizes the queries that admit tractable maintenance under inserts over a class $\mathcal{K}$ of semirings (defined in Sec.~\ref{sec:class-k}) that includes the natural and the tropical semirings. The characterization is a dichotomy: a conjunctive query without self-joins admits tractable maintenance over any semiring in $\mathcal{K}$ if and only if it belongs to the new class of \emph{p-hierarchical} $\alpha$-acyclic queries, unless the OMv conjecture fails. Table~\ref{tab:semiring-table} lists representative members of $\mathcal{K}$ together with their applications.

\nop{
This paper answers positively this question. It proceeds in two steps.

First, it gives a syntactic characterization of all conjunctive queries that admit tractable, indeed optimal, maintenance under inserts to $K$-databases for \eden{a class of semirings that are zero-sum free and zero-divisor free} \nop{that do not admit additive inverses \eden{(zero-sum free) and where the multiplication of two nonzero elements gives a nonzero result (zero-divisor free)}}. \eden{(I feel like we shouldn't define zero-sum free and zero-divisor free here if we will also do this in the preliminaries.)} This class, denoted by $\mathcal{K}$ in this paper, consists of: \eden{here is where the first confusion arises between the upper bound semiring class and the lower bound semiring class. also, $\mathcal{K}$ is never formally defined} the natural semiring and generalizations thereof (e.g., the provenance semiring~\cite{green2007provenance} and the covariance semiring~\cite{FIVM}) and any idempotent and strictly ordered semiring such as the tropical semiring. By tractable maintenance we mean maintenance with constant amortized update time and constant enumeration delay.\nop{\footnote{In this paper, we assume the semirings have elements of constant size and operations that take constant time.}} The queries in this class are called $\alpha$-acyclic p-hierarchical, a new notion introduced in this paper. Furthermore, we show that any non-p-hierarchical query without self-joins cannot admit tractable maintenance for the aforementioned semirings, unless widely held conjectures fail. 
When put together, our upper and lower bounds imply a dichotomy for the insert-only maintenance of conjunctive queries without self-joins and the aforementioned semirings: A query can be maintained with constant amortized update time and constant enumeration delay if and only if it is $\alpha$-acyclic p-hierarchical.
}

Our dichotomy differs from two well-known dichotomies in: (1) the class of tractable queries, (2) the semirings considered, or (3) the update setting (insert-only versus insert-delete). In the insert-only setting, a conjunctive query without self-joins admits tractable maintenance over the Boolean semiring if and only if it is free-connex $\alpha$-acyclic, unless the Boolean matrix multiplication conjecture fails: CROWN~\cite{CROWN} gives the upper bound, the lower bound immediately follows from the super-linear lower bound on the preprocessing time for the static evaluation of queries that are not free-connex $\alpha$-acyclic~\cite{Free-connex:CSL:2007} (the reduction is given in the proof of Prop.~\ref{prop:boolean-insert-only}). In the insert-delete setting, a conjunctive query without self-joins admits tractable maintenance over the Boolean semiring if and only if it is q-hierarchical, unless the OMv conjecture fails~\cite{QHierarchical}.

\nop{
Second, our dichotomy does not apply to the Boolean semiring.
\haozheupdate{For $\mathbb{B}$-databases, CROWN~\cite{CROWN} shows that every free-connex $\alpha$-acyclic query admits tractable maintenance under inserts. For the converse direction, consider a non-free-connex $\alpha$-acyclic query $Q$ without self-joins, and suppose that $Q$ can be maintained with constant amortized update time and constant enumeration delay. Given a static database $D$ of size $N$, we can construct the data structure for $Q(D)$ by inserting the $N$ tuples of $D$ into an initially empty database. This takes $\bigO(N)$ total time, after which the tuples in $Q(D)$ can be enumerated with constant delay. This contradicts the static lower bound of Bagan et al.~\cite{Free-connex:CSL:2007}, unless the Boolean Matrix Multiplication conjecture fails. Together, these results give a dichotomy for insert-only maintenance over the Boolean semiring, conditioned on the Boolean Matrix Multiplication conjecture: An $\alpha$-acyclic query without self-joins admits tractable maintenance if and only if it is free-connex.}
}

Fig.~\ref{fig:query_hierarchy} shows the tractable query classes of the three dichotomies and, for each class, an example query outside the next smaller class. The classes are strictly nested: q-hierarchical $\subsetneq$ p-hierarchical $\alpha$-acyclic $\subsetneq$ free-connex $\alpha$-acyclic. The dichotomies themselves are not nested, since they concern different update settings and semirings. In particular, in the insert-only setting, strictly fewer queries admit tractable maintenance over the semirings in $\mathcal{K}$ than over the Boolean semiring, unless the OMv conjecture fails.
\nop{
All $\alpha$-acyclic p-hierarchical queries are free-connex, yet there are free-connex $\alpha$-acyclic queries that are not p-hierarchical. On the other hand, all q-hierarchical queries admit tractable maintenance under inserts and deletes to $\mathbb{B}$-databases, while all non-q-hierarchical queries without self-joins  cannot admit tractable maintenance~\cite{QHierarchical}. Furthermore, all q-hierarchical queries are $\alpha$-acyclic and p-hierarchical, while there are acyclic p-hierarchical queries that are not q-hierarchical. Our dichotomy for $\alpha$-acyclic p-hierarchical queries is thus strictly sandwiched between the dichotomies for free-connex $\alpha$-acyclic queries and q-hierarchical queries, as illustrated in Fig.~\ref{fig:query_hierarchy}.
}

\begin{table}[!t]
    \centering
    \footnotesize
    \begin{threeparttable}
    \makeatletter
    \newcommand{\semiringrowrule}{\noalign{\gdef\CT@arc@{\color{black!35}}}\cmidrule[0.15pt](lr){1-3}\noalign{\global\let\CT@arc@\relax}}
    \makeatother
    \setlength{\tabcolsep}{4pt}
    \renewcommand{\arraystretch}{1.05}
    \begin{tabularx}{\textwidth}{@{}>{\raggedright\arraybackslash}p{0.27\textwidth}>{\raggedright\arraybackslash}p{0.24\textwidth}>{\raggedright\arraybackslash}X@{}}
        \toprule
        \textbf{Semiring} & \textbf{\mbox{Why in $\mathcal{K}$ (Def.~\ref{def:calK})}} & \textbf{Applications} \\
        \midrule
        Boolean $(\mathbb{B}, \vee, \wedge)$ & not in $\mathcal{K}$ & satisfiability and CSPs~\cite{FAQ}, graph reachability~\cite{NetworkProblems} \\
        \semiringrowrule
        Natural $(\mathbb{N}, +, \cdot)$ & $\mathbb{N}$ & tuple multiplicities~\cite{green2007provenance}, \#SAT~\cite{FAQ}, network analysis~\cite{FAQ} \\
        \semiringrowrule
        Real sum-product \ $(\mathbb{R}_{\geq 0}, +, \cdot)$ & hom.\ from $\mathbb{N}$ & marginal inference in PGMs~\cite{FAQ}, permanent computation~\cite{FAQ} \\
        \semiringrowrule
        Min-product $(\mathbb{R}_{> 0} \cup \{+\infty\}, \min, \cdot)$ & ISO & minimum product paths \\
        \semiringrowrule
        Max-product (Viterbi) $(\mathbb{R}_{\geq 0}, \max, \cdot)$ & ISO & maximum likelihood decoding~\cite{FAQ}, MAP estimation in PGMs~\cite{FAQ}, maximum reliability~\cite{NetworkProblems} \\
        \semiringrowrule
        Tropical \  $(\mathbb{R} \cup \{+\infty\}, \min, +)$ & ISO & single-source and all-pairs shortest paths~\cite{ShortestDistance:2002, NetworkProblems} \\
        \semiringrowrule
        Max-plus $(\mathbb{R} \cup \{-\infty\}, \max, +)$ & ISO & single-source and all-pairs longest paths~\cite{NetworkProblems} \\
        \semiringrowrule
        Max-min $(\mathbb{R} \cup \{\pm\infty\}, \max, \min)$ & ISO & widest paths~\cite{NetworkProblems} \\
        \semiringrowrule
        Min-max $(\mathbb{R} \cup \{\pm\infty\}, \min, \max)$ & ISO & minimum weight spanning trees~\cite{MST, NetworkProblems}  \\
        \semiringrowrule
        Provenance $(\mathbb{N}[X], +, \cdot)$ & hom.\ from $\mathbb{N}$ & probabilistic databases~\cite{PDB:Book:2011}, fact attribution~\cite{Banzhaf:SIGMOD:2024, Fact:VLDB:2025, Shapley:PODS:2024, Shapley:SIGREC:2023} \\
        \semiringrowrule
        Covariance\tnote{*} $((\mathbb{N}, \mathbb{N}^m, \mathbb{N}^{m \times m}), +_K, \cdot_K)$ & hom.\ from $\mathbb{N}$ & covariance matrices over databases~\cite{FIVM:SIGMOD:2018, FIVM} \\
        \bottomrule
    \end{tabularx}
    \begin{tablenotes}[flushleft]\scriptsize
    \item[*] For $a = (c_a, \vect s_a, Q_a) \in (\mathbb{N}, \mathbb{N}^m, \mathbb{N}^{m \times m})$ and $b = (c_b, \vect s_b, Q_b) \in (\mathbb{N}, \mathbb{N}^m, \mathbb{N}^{m \times m})$, $a +_K b = (c_a + c_b, \vect s_a + \vect s_b, Q_a + Q_b)$ and $a \cdot_K b = (c_a c_b, c_b \vect s_a + c_a \vect s_b, c_b Q_a + c_a Q_b + \vect s_a \vect s_b^T + \vect s_b \vect s_a^T)$. The covariance semiring over its full support is not zero-divisor free: it has non-zero elements whose multiplication is zero. Yet this corner case does not arise in its previous applications in the database context~\cite{FIVM} (see Appendix~\ref{sec:appendix-prelims}). We therefore safely restrict its use as a semiring that is zero-sum free and zero-divisor free. 
    \end{tablenotes}
    \end{threeparttable}
    \vspace{4pt}
    \caption{Representative semirings in $\mathcal{K}$ and some of their applications. Each semiring is given by its domain, addition, and multiplication; the identities of the two operations are omitted for space. The middle column states which condition of item 2 of Def.~\ref{def:calK} the semiring satisfies: there is an injective homomorphism from the natural semiring into it (Def.~\ref{def:embedding}), or it is idempotent strictly ordered (ISO, Def.~\ref{def:temporal-chain-semiring}). Cor.~\ref{cor:dichotomy} holds for every listed semiring except the Boolean semiring, for which the free-connex $\alpha$-acyclic dichotomy~\cite{CROWN,Free-connex:CSL:2007} holds instead (Prop.~\ref{prop:boolean-insert-only}).}
    \label{tab:semiring-table}
    \vspace*{-2em}
\end{table}
\nop{
The semirings considered in this paper enable practical IVM applications not available to IVM engines that only work with databases over the Boolean semiring. 
For instance, the tropical semiring can be used to express the computation of single-source and all-pairs shortest paths in graphs~\cite{ShortestDistance:2002}. 
The provenance semiring is used to: quantify the contributions of input facts to the query result and explain the query result~\cite{Shapley:SIGREC:2023,Shapley:PODS:2024,Banzhaf:SIGMOD:2024,Fact:VLDB:2025}; capture the lineage over random events in probabilistic databases~\cite{PDB:Book:2011}; and recover bag semantics via tuple multiplicities. The covariance semiring (defined over natural numbers) captures the maintenance of the covariance matrix of features defined by the data columns in the result of queries over databases~\cite{FIVM:SIGMOD:2018,FIVM}.
}

\medskip

{\noindent \bf Contributions.} The contributions of this paper are as follows.

\smallskip

\noindent 1. We introduce the class of p-hierarchical queries (Sec.~\ref{sec:p-hierarchical}), which are exactly those conjunctive queries expressible as full conjunctive queries with body atoms defined by q-hierarchical queries whose free variables occur in all their atoms. We show that the fractional hypertree width of a p-hierarchical query equals that of its bound version. The class of p-hierarchical $\alpha$-acyclic queries is placed strictly between the q-hierarchical and the free-connex $\alpha$-acyclic query classes, as depicted in Fig.~\ref{fig:query_hierarchy}.

\smallskip

\noindent 2. We establish an upper bound for the maintenance of p-hierarchical queries under inserts:
\begin{theorem}
\label{thm:main-upper-bound}
    Consider a conjunctive query $Q$ and a sequence of $N$ single-tuple inserts to an initially empty $K$-database, where $K$ is a zero-sum free and zero-divisor free semiring. If $Q$ is p-hierarchical, then it can be maintained with $\bigO(N^{\fhtw - 1})$ amortized update time and constant enumeration delay, where $\fhtw$ is the fractional hypertree width of the bound version of $Q$.
\end{theorem}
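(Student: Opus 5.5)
Throughout, I use the characterization from contribution 1: a p-hierarchical query $Q$ can be written as a full conjunctive query $Q(\vect F)=\prod_{j} Q_j(\vect F_j)$. Each body atom $Q_j$ is a q-hierarchical query whose free variables $\vect F_j$ occur in all its atoms. I also use the stated fact that $\fhtw$ of the bound version of $Q$ equals $\fhtw$ of this full query over the atoms $Q_j(\vect F_j)$. The plan has two layers. The inner layer maintains each $Q_j$ as a materialized $K$-relation over $\vect F_j$. The outer layer maintains the full join of the materialized $Q_j$ so that it can be enumerated with constant delay. The amortization will be charged to the number of \emph{new keys} created in the materialized views. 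The key semiring properties are the following. Zero-sum freeness means an insert can never turn a nonzero payload into zero. Zero-divisor freeness means that a product of nonzero payloads is nonzero. Together they imply that under inserts the support of each view only grows, and the support of $Q$ is the Boolean join of the supports of the $Q_j$.

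\textbf{Inner layer.} Since every atom of $Q_j$ contains all of $\vect F_j$, I will use a variable order for $Q_j$ whose top path consists of $\vect F_j$ and whose subtrees below are hierarchical in the bound variables. Each node of the order stores a view keyed by its ancestors' variables, holding the semiring aggregate of its subtree. An insert into an atom $R$ updates only the views on the root-to-leaf path of $R$'s variables. Each such view is updated at one key, in $\bigO(1)$ time by multiplying with sibling views indexed on the same key; this is the standard q-hierarchical delta propagation. Hence $Q_j(\vect f_j)$ changes at a single key per insert, in $\bigO(1)$ time. Because all atoms contain $\vect F_j$, no fan-out over free values occurs. By zero-sum and zero-divisor freeness, a key of $Q_j$ becomes nonzero at most once and stays nonzero thereafter.

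\textbf{Outer layer.} I will fix an optimal fractional hypertree decomposition of the full query over the atoms $Q_j$. Each bag $B$ is materialized as the join of the $Q_j$ covered by it, with payloads. A bag can be computed with worst-case optimal join in $\bigO(N^{\fhtw})$ time, and its support has size $\bigO(N^{\fhtw})$, because each $Q_j$ has at most $N$ keys. For payload changes at an \emph{existing} key of some $Q_j$, I will not push values into the bags. Instead, the enumeration recomputes payloads on the fly: it enumerates the Boolean join of the bag supports with constant delay via a free-connex (full, so trivially) Yannakakis-style index, and multiplies the current $Q_j$ payloads along the way. This costs $\bigO(1)$ per tuple since the number of atoms is constant. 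Only \emph{new keys} of some $Q_j$ change supports. For a new key, I enumerate the new bag tuples extending it: by the AGM bound with one atom fixed to a single tuple, each bag gains $\bigO(N^{\fhtw-1})$ tuples in that time. I then propagate semijoin-reduction counts (dangling-tuple elimination) through the decomposition tree. Since supports only grow, the total propagation work is bounded by the total number of bag tuples ever created, which is $\bigO(N\cdot N^{\fhtw-1})$. This gives $\bigO(N^{\fhtw-1})$ amortized update time.

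\textbf{Main obstacle.} I expect the hardest step to be the delta enumeration of new bag tuples within $\bigO(N^{\fhtw-1})$. A fractional edge cover with one edge pinned does not immediately give the exponent $\fhtw-1$ per single insertion. My first attempt is to amortize instead: a bag whose cover has weight $\rho\le\fhtw$ receives at most $N^{\rho}$ tuples in total over the whole sequence. I would then use a dynamic worst-case optimal join procedure (for example, per-bag generic join with degree-sorted indexes, re-run under doubling when $N$ doubles). This keeps the total bag-building work at $\bigO(N^{\fhtw}\log N)$, or $\bigO(N^{\fhtw})$ with care, which yields the amortized bound after dividing by $N$. For $\alpha$-acyclic queries, $\fhtw=1$, each bag is a single $Q_j$, and the argument degenerates to constant amortized time.
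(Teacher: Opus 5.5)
Your architecture matches the paper's. Both use constant-time q-hierarchical view trees over $K$ for the free-dominated subqueries, a Boolean full join $\Qtop$ of their supports on top (sound because supports only grow and products of nonzero payloads stay nonzero), and payload recovery by $|\at(Q)|$ lookups during enumeration. The gap is in the outer layer. The paper does not prove that step; it imports it as Prop.~\ref{prop:top-layer}, the insert-only theorem of~\cite{InsertsVSDeletes}, which maintains a full join under $N$ inserts in total time $\bigO(N^{\fhtw(\Qtop)})$ with constant delay. Your own derivation of this step does not go through.

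Your per-insert claim is that pinning one atom to a single tuple makes each bag gain only $\bigO(N^{\fhtw-1})$ tuples. This is false. Take $Q(A,B,C)=R(A,B)\cdot S(B,C)\cdot T(A,C)$, which is full and hence p-hierarchical, with $\fhtw=3/2$. If $S$ holds $(b,c_\ell)$ and $T$ holds $(a,c_\ell)$ for $\Theta(N)$ values $c_\ell$, the single insert $R(a,b)$ adds $\Theta(N)$ tuples to any bag containing $\{A,B,C\}$, not $\bigO(N^{1/2})$.

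Your fallback does not close this either. That a bag ends with at most $N^{\rho}$ tuples bounds the total \emph{output} of the deltas, not the work spent finding them. Re-running a worst-case optimal join when $N$ doubles pays only for rebuilds at the doubling points. Enumeration must be correct after every insert, so the new bag tuples of each intermediate insert still have to be computed incrementally. Bounding the total cost of those computations is exactly the statement you leave unproved.

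To close the gap, either cite the insert-only result as the paper does, or prove it directly, for instance as follows. Take as bag inputs the projections of the subquery supports onto the bag. Compute the delta of a new tuple $r$ of an input $R$ by Generic Join with the variables of $R$ ordered first.
\begin{itemize}
\item Relations only grow, so every surviving partial binding and every intersection size met at insert time is dominated by the same quantity on the final database.
\item The bindings extending distinct $r$ are disjoint.
\end{itemize}
Hence all inserts into $R$ together cost $\bigO(1)$ each plus one Generic Join run on the final database with that variable order. That run is $\bigO(N^{\rho})$ for any order, and your liveness propagation adds only the total bag size.

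A smaller point: the width fact you invoke is not the one the paper states. Prop.~\ref{prop:fhtw=fcfhtw} relates the free-connex width of $Q$ itself to that of its bound version. The comparison with the join of the subquery results is Lemma~\ref{lem:width-qprime}, and it gives only $\fhtw(\Qtop)\le\fhtw(Q)$. Equality fails, for example, for a Boolean $Q$, where $\Qtop$ has no variables. The lemma is proved by intersecting the bags of an optimal decomposition of $Q$ with $\vect F$ and moving the weight of each edge $e$ to $e\cap\vect F$. That set is a root-view schema because the subqueries are free-dominated. You need this inequality as a proved step, not as a given.
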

Sec.~\ref{sec:upper-bound} sketches the proof and gives our maintenance algorithm.
Theorem~\ref{thm:main-upper-bound} guarantees that p-hierarchical $\alpha$-acyclic queries admit tractable maintenance, since $\fhtw=1$ for them.

\smallskip

\noindent 3. We prove conditional lower bounds on the amortized update time and the enumeration delay for non-p-hierarchical queries over every semiring in $\mathcal{K}$ (Sec.~\ref{sec:lower-bounds}):

\begin{theorem}\label{thm:main-lower-bound}
    Consider a conjunctive query $Q$ without self-joins and a sequence of $N$ single-tuple inserts to an initially empty $K$-database, where $K \in \mathcal{K}$. If $Q$ is not p-hierarchical, then for any $\gamma > 0$ there is no algorithm that maintains $Q$ with $\bigO(N^{\frac{1}{2}-\gamma})$ amortized update time and $\bigO(N^{\frac{1}{2}-\gamma})$ enumeration delay, unless the OMv conjecture fails.
\end{theorem}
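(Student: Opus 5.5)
We prove the statement by reduction from the Online Vector-Matrix-Vector multiplication problem (OuMv), which is equivalent to OMv up to polynomial factors. In OuMv we are given an $n\times n$ Boolean matrix $M$ up front. Then $n$ vector pairs $(u_t,v_t)$ arrive online, and for each pair we must report $u_t^T M v_t$ before the next pair is revealed. The OMv conjecture implies that no algorithm solves OuMv in $\bigO(n^{3-\gamma})$ total time. The plan has three parts: find a minimal obstruction in any non-p-hierarchical query, reduce the semiring question to a natural-number question, and encode OuMv with inserts only.

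\textbf{Step 1: a canonical obstruction.} By the definition of p-hierarchical queries in Sec.~\ref{sec:p-hierarchical}, a query that is not p-hierarchical contains a small violating pattern. I expect the pattern to be one of two kinds.
\begin{itemize}
\item A non-hierarchical triple of atoms $R(X,\ldots)$, $S(X,Y,\ldots)$, $T(Y,\ldots)$.
\item A pair of atoms $S(X,Y)$, $T(Y)$ with $X$ free and $Y$ bound, where $\mathsf{atoms}(Y)\not\subseteq\mathsf{atoms}(X)$. This is the pattern of $Q(X)=\sum_Y S(X,Y)\cdot T(Y)$.
\end{itemize}
Since $Q$ has no self-joins, the standard padding trick applies. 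All other atoms, and all variables outside the pattern, are fixed to a single dummy constant annotated with $1$. This reduces $Q$ to the corresponding two- or three-atom core, and the reduction preserves the complexity up to constant factors.

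\textbf{Step 2: semiring reduction.} For $K\in\mathcal{K}$ we use the two alternatives of Def.~\ref{def:calK}.
\begin{itemize}
\item If there is an injective homomorphism from $\mathbb{N}$ into $K$ (Def.~\ref{def:embedding}), annotations behave like counts. Equality tests against the image of $0$ are therefore faithful, so it suffices to argue over $\mathbb{N}$.
\item If $K$ is idempotent strictly ordered (ISO, Def.~\ref{def:temporal-chain-semiring}), we use distinct, strictly ordered annotations. These let $\oplus$ act as a min/max selector that reveals which tuple is currently optimal.
\end{itemize}

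\textbf{Step 3: encoding OuMv with inserts only.} The difficulty is that OuMv reductions normally delete $u_{t-1},v_{t-1}$ before inserting $u_t,v_t$, and deletions are not available here. I will not simulate deletions. Instead, each round's contribution will be made distinguishable through its annotation.

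For the counting pattern over $\mathbb{N}$, insert $M$ into $S$ once. In round $t$, insert into $T$ the $y$ with $v_t[y]=1$. Then $Q(x)$ equals a running count $c_x$. We keep $c_x$ from the previous round, where the algorithm can recover it by enumerating, and ask whether some $x$ with $u_t[x]=1$ had its count increase. To avoid enumerating the full output, encode $u_t$ through a third atom $R(X)$ or through the free-variable structure. In the two-atom case, enumeration with delay $\bigO(N^{1/2-\gamma})$ over the relevant $x$ suffices: there are only $n$ such $x$, and $N\approx n^2$.

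For ISO semirings, annotate round-$t$ inserts with elements strictly better than all earlier ones. After $\oplus$, an output tuple then carries the round-$t$ value exactly when it is witnessed in round $t$. The test becomes an equality check against that value, which works because $K$ is zero-sum free and zero-divisor free.

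\textbf{Accounting.} The database has size $N=\Theta(n^2)$ because of $M$. Each round performs $\bigO(n)$ inserts and $\bigO(n)$ enumeration steps, for a total cost of $n\cdot \bigO(n)\cdot N^{1/2-\gamma}=\bigO(n^{3-2\gamma})$. This contradicts OuMv. Amortization is harmless because the bound is on the total time.

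\textbf{Main obstacle.} The hardest part is Step 3 for the non-hierarchical triple. Old $u_{s},v_s$ for $s<t$ persist and contaminate the output. I would first try to handle this by giving round $t$ fresh copies of the $X$- and $Y$-values, that is, tagging them with $t$. $M$ would then have to be replicated per round, which costs $n^3$ inserts and is too expensive. So instead I would tag only $R$ and $T$, and use the semiring annotations (counts or strictly ordered values) to isolate the round-$t$ difference of $\sum_{x,y}u[x]M[x,y]v[y]$. The contaminating cross-terms $u_s M v_t$ are still unknown, so one must show they are recoverable cheaply, or, in the ISO case, dominated by the round-$t$ value. I expect this isolation argument to be the crux.
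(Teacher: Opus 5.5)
You handle the bound-free pattern $\sum_Y S(X,Y)\cdot T(Y)$ correctly for both families, and it is essentially the paper's Case~2: accumulate $v_t$ in $T$, enumerate all $Q(x)$, and compare with the previous round (or against the fresh chain element $c_t$). The genuine gap is the bound-bound pattern $R(X)\cdot S(X,Y)\cdot T(Y)$ with $X,Y$ both bound, over semirings with an injective homomorphism from $\mathbb{N}$. You identify the cross terms $u_s^\top M v_t$ as the crux but leave them unresolved, so as written you have no reduction for, e.g., the natural semiring on $Q_1$.

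The missing idea is linearity. The scalar answer is linear in the payloads of each single relation, so the increment caused by one insert into $R$ does not depend on the stale contents of $R$. The paper therefore reduces from OMv and uses $R$ purely as a probe. In round $k$ it inserts $T(j)\mapsto 1$ for every $v_k[j]=1$, so $T$ holds $t_k=\sum_{m\le k}v_m$. Then, for each $i\in[n]$, it enumerates $Q(\star)$, inserts $R(i,\star)\mapsto 1$, and enumerates again. The increase is exactly $\sum_y S(i,y)\cdot T(y)=(Mt_k)[i]$. Since $(Mv_k)[i]=(Mt_k)[i]-(Mt_{k-1})[i]\geq 0$, the output bit is the outcome of an equality test between sums of enumerated answers, and this test carries over to $K$ by injectivity of $h$. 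The cost is $n$ inserts and $2n$ single-tuple enumerations per round, which fits your budget. If you prefer to stay with OuMv, the same linearity gives a fix: insert $u_t$ into $R$, then $v_t$ into $T$, then $u_t$ into $R$ once more. With $A_0,\ldots,A_3$ the four successive scalar answers, $u_t^\top M v_t=(A_3-A_2)-(A_1-A_0)$.

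For the triple over ISO semirings, the domination you anticipate is what the paper uses, but your stated reason (zero-sum and zero-divisor freeness) does not give it. The argument needs three things:
\begin{itemize}
\item $+$ is the least upper bound for the natural order;
\item multiplication is monotone for that order, so a witnessing pair yields exactly $c_i\cdot c_i$ while every non-witnessing product is $\preceq_K c_{i-1}\cdot c_i$;
\item the strictness $c_{i-1}\cdot c_i\prec_K c_i\cdot c_i$, which is built into Def.~\ref{def:temporal-chain-semiring} precisely for this purpose.
\end{itemize}
In the two-atom case, $c_{i-1}\prec_K c_i$ suffices.

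Your padding step is also too coarse as stated. Remaining atoms that contain $X$ or $Y$ must be filled with all $n$ (respectively $n^2$) values of these variables paired with the dummy constant, each annotated with $1_K$, not with a single dummy tuple.
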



\noindent 4. The second and third contributions imply a dichotomy for the maintenance of conjunctive queries without self-joins over $K$-databases under single-tuple inserts:

\begin{corollary}[Theorems~\ref{thm:main-upper-bound} and~\ref{thm:main-lower-bound}]\label{cor:dichotomy}
    Consider a conjunctive query $Q$ without self-joins and a sequence of $N$ single-tuple inserts to an initially empty $K$-database, where $K\in\mathcal{K}$.
    \begin{itemize}
        \item If $Q$ is $\alpha$-acyclic p-hierarchical, then it can be maintained with constant amortized update time and constant enumeration delay. 
        \item Otherwise, there is no algorithm that maintains $Q$ with $\bigO(1)$ amortized update time and $\bigO(1)$ enumeration delay, unless the OMv conjecture or the Hyperclique conjecture fail.
    \end{itemize}
\end{corollary}

\medskip

{\noindent\bf Related work.}
Dynamic query evaluation has been studied under set semantics well beyond the two dichotomies above: trade-offs between update time and enumeration delay for the triangle query~\cite{IVMeps:ICDT:2019,IVMeps:PODS:2020}, queries with free access patterns~\cite{DynamicWidth}, a mix of static and dynamic relations~\cite{dynamicrelation}, and heavy-light partitioning of the input relations~\cite{H-IVM}. On the lower-bound side, recent work analyzes the maintenance cost as a function of the update sequence~\cite{HuWang:PODS:2025} and gives parameterized conditional lower bounds on the update time~\cite{Qichen:PODS:2026}. Beyond sets, F-IVM maintains q-hierarchical queries with constant update time and constant delay under inserts and deletes over the $\mathbb{Z}$-ring and the covariance ring~\cite{FIVM:SIGMOD:2018,FIVM}, and Mu\~noz et al.~\cite{insert-delete-all-semirings} extend this to every sum-maintainable zero-divisor free semiring. Closest to IVM under inserts is the semi-na\"ive evaluation of recursive Datalog queries, which also propagates deltas. Recent work~\cite{Datalogo:JACM:2024} extends recursive Datalog to $K$-databases and asks under which conditions on $K$ the semi-na\"ive evaluation terminates; it does not consider the cost of processing an insert. Unfolding the recursion of reachability a fixed number of times yields path queries over binary relations. These queries are $\alpha$-acyclic but not free-connex, so they are intractable under inserts already over the Boolean semiring. None of these works studies how the choice of the semiring affects the tractability of maintenance.

The foundations of $K$-databases~\cite{green2007provenance} have been extensively explored across various classical database problems. Early works established the decidability and complexity bounds for the containment and equivalence of conjunctive queries under provenance semiring semantics~\cite{Semiring-query-containment}, and generalized the problem of exact query reformulation using views to semiring-annotated relations~\cite{Z-relations}. Applications with richer metadata require the algebraic evaluation of multiple, potentially dependent annotations drawn from different semirings~\cite{Combining-annotations}. Consistent query answering has been studied over naturally ordered positive semirings~\cite{ConsistentAnswers}, a subclass of the zero-sum-free and zero-divisor-free semirings considered in our work. A separate line of work~\cite{Hier-UnifyingAlgo} has identified hierarchical queries as the tractability boundary for the bag-set maximization problem and given a unifying algorithm over 2-monoids for probabilistic query evaluation, Shapley value computation, and the bag-set maximization problem.

\nop{
{\noindent\bf Further works on semiring-annotated databases.}
The foundations of $K$-databases~\cite{green2007provenance} have been extensively explored across various classical database problems. Early works established the decidability and complexity bounds for the containment and equivalence of conjunctive queries under provenance semiring semantics~\cite{Semiring-query-containment}, and generalized the problem of exact query reformulation using views to semiring-annotated relations~\cite{Z-relations}. Applications with richer metadata require the algebraic evaluation of multiple, potentially dependent annotations drawn from different semirings~\cite{Combining-annotations}. Closely related to IVM under inserts is the semi-na\"ive evaluation of recursive Datalog queries, such as reachability: Both avoid redundant computation by using the delta of the query, i.e., only focus on the changes to the output as direct consequence of the change in the input.
Yet whereas IVM performs incremental computation under inserts to the input database, semi-na\"ive evaluation generates inserts to the query result which is then used as input for the next iteration of a recursive fixed-point loop. Recent work~\cite{Datalogo:JACM:2024} introduced an extension of recursive Datalog queries to $K$-databases and studied under which conditions on the semiring $K$ the semi-na\"ive evaluation of recursive queries over $K$-databases terminates. It does not consider the time complexity to process each insert and does not provide a characterization of queries with tractable maintenance under inserts. We note that by unfolding the linear recursion for reachability a fixed number of steps we obtain path queries over binary $K$-relations. These queries are $\alpha$-acyclic, yet they are not p-hierarchical and not even free-connex. Therefore, they are not tractable for the semirings considered in this paper, not even for the Boolean semiring.
}

\section{Preliminaries}
\label{sec:prelims}

We introduce concepts and notation used in the rest of the paper, following prior work~\cite{FIVM,H-IVM,InsertsVSDeletes}.

{\noindent \bf Semirings.} A \emph{commutative semiring} is a tuple $(K,+,\cdot,0,1)$, where $K$ is a set, the operations $(+)$ and $(\cdot)$ are binary, $(K,+,0)$ and $(K,\cdot,1)$ are commutative monoids, multiplication distributes over addition, and $0 \cdot a = 0$ for all $a \in K$. We refer to a semiring by its set $K$ when the operations and identities are clear from context. A semiring is \emph{zero-sum free} if $a + b = 0$ implies $a = b = 0$ and \emph{zero-divisor free} if $a \cdot b = 0$ implies $a = 0$ or $b = 0$. For example, the natural semiring $(\N, +, \cdot, 0, 1)$ and the Boolean semiring $(\{\mathsf{true}, \mathsf{false}\}, \lor, \land, \mathsf{false}, \mathsf{true})$ are zero-sum free and zero-divisor free; the integer ring $(\Z,+,\cdot,0,1)$ is not zero-sum free. We assume that the elements of a semiring have constant size and that its operations take constant time. All semirings in this paper are commutative.

\nop{A \emph{commutative ring} $(K,+,\cdot,0,1)$ consists of a set $K$ equipped with two binary operations $(+)$ and $(\cdot)$, such that $(K,+,0)$ is an abelian group, $(K,\cdot,1)$ is a commutative monoid, and multiplication distributes over addition. A \emph{semiring} satisfies the same axioms except that $(K,+,0)$ is required only to be a commutative monoid (i.e., additive inverses need not exist). \eden{In this paper, we only consider commutative semirings that have no additive inverses (zero-sum free). Furthermore, we require that for all the semirings considered, the multiplication of nonzero elements gives a nonzero result (zero-divisor free).} Examples of such semirings are the natural sum-product semiring $(\N, +, \cdot, 0, 1)$ and the Boolean semiring $(\{\mathsf{true}, \mathsf{false}\}, \lor, \land, \mathsf{false}, \mathsf{true})$. We assume without loss of generality that the semirings have elements of constant size and that their operations $(+)$ and $(\cdot)$ take constant time. }

\medskip

{\noindent \bf Data and Queries.} A {\em schema} $\vect X$ is a tuple of variables $(X_1,\ldots,X_n)$, which we also treat as a set when convenient. For each variable $X_i \in \vect X$, let $\Dom(X_i)$ denote its domain. A tuple $\vect x$ of values over the schema $\vect X$ is an element of the set $\Dom(\vect X)=\Dom(X_1)\times\dots\times\Dom(X_n)$. 

Let $(K, +, \cdot, 0, 1)$ be a semiring. A $K$-relation $R$ over a schema $\vect X$ and the semiring $K$ is a function $R : \Dom(\vect X) \to K$ that maps tuples of values over $\vect X$ to elements of $K$~\cite{green2007provenance}. Standard relations are $\mathbb{B}$-relations, relations with multiplicities are $\N$-relations, sparse tensors are $\R$-relations. A tuple $\vect x \in \Dom(\vect X)$ is a \emph{key} and $R(\vect x)$ its \emph{payload} in $R$. When applying set operations to $R$, we treat it as the set of tuples $\vect x$ with $R(\vect x) \neq 0$, which we assume to be finite. The size of $R$, denoted by $|R|$, is the size of this set. A $K$-database $\mathcal{D}$ is a collection of relations over the same semiring $K$. Its size $|\mathcal{D}|$ is the sum of the sizes of its $K$-relations. 

To express conjunctive queries over $K$-databases, the operations $\land$, $\lor$, and $\exists$ are replaced by the semiring operations $\cdot$, $+$, and respectively $+$. This naturally leads to the language of functional aggregate queries over the semiring $K$~\cite{FAQ}:
$Q(\vect{F}) \;=\; \sum_{\vect B} R_1({\vect{X_1}})\cdot\ldots\cdot R_k(\vect{X_k})$,
where: $\sum$ and $(\cdot)$ are the summation and multiplication operations from the semiring $K$; $R_1,\dots,R_k$ are {\em relation symbols}; each $\vect{X_i}$ is a schema; and each $R_i(\vect{X_i})$ is an \emph{atom} of $Q$. The relation symbols of a query are distinct. If several of them refer to the same database relation, that is, $Q$ has self-joins, then each atom gets its own copy of that relation; this changes none of the data complexity upper bounds stated in the paper. The set of variables of $Q$ is  $\vars(Q) \defeq \bigcup_{i\in[k]}\vect{X_i}$. The {\em free} variables of $Q$ are $\vect F = \Free(Q) \subseteq \vars(Q)$, while $\vect B = \Bound(Q) = \vars(Q)\setminus\vect F$ are the {\em bound variables}. If $\vect F = \vars(Q)$, then $Q$ is a full query, and if $\vect F = \emptyset$, then $Q$ is a bound query. By $\at(Q)$ and $\at(Y)$ we denote the set of all atoms of $Q$ and the set of all atoms $R_i(\vect{X_i})$ with $Y\in\vect{X_i}$, respectively. For compactness, we write a set of variables as the concatenation of their names, e.g.,~$\{X,Y,Z\}$ becomes $XYZ$ while $\{X\}$ becomes $X$. The \emph{bound version} of $Q$ is the bound query $\sum_{\vars(Q)} R_1({\vect{X_1}})\cdot\ldots\cdot R_k(\vect{X_k})$. 

The marginalization of variables $\vect Y\subseteq \vect X$ from a relation $R$ with schema $\vect X$, denoted by $S(\vect Z) = \sum_{\vect Y} R(\vect X)$ for $\vect Z = \vect X \setminus \vect Y$, is defined by:
$\forall \vect z\in\Dom(\vect Z): 
S(\vect z) \defeq \sum \{R(\vect x) \mid \vect x\in\Dom(\vect X)\wedge \vect z = \vect x.\vect Z \}$, where $\vect x.\vect Z$ is the restriction of the tuple $\vect x$ to the values of the variables in schema $\vect Z$. The union of two relations $R$ and $S$ with the same schema $\vect X$, denoted by $T = R\cup S$, is defined as: $\forall \vect x\in\Dom(\vect X): T(\vect x) = R(\vect x) + S(\vect x)$. Thus, the semantics of $Q$ are defined as the relation: $\forall \vect f\in\Dom(\vect F): Q(\vect f) = \sum_{\vect x\in\Dom(\vars(Q)): \vect x.\vect F=\vect f} R_1(\vect x.\vect X_1)\cdot\ldots\cdot R_k(\vect x.\vect X_k)$.


\medskip

{\noindent \bf Data Updates.}
We maintain the result of a fixed query $Q$ over a $K$-database $\D$ that is initially empty and receives a sequence of $N$ single-tuple inserts; neither $N$ nor the inserts are known in advance. A single-tuple insert adds a tuple $\vect x$ with payload $k \in K$ to a relation $R$ with schema $\vect X$. We represent it by the delta relation $\delta R : \Dom(\vect X) \to K$ that maps $\vect x$ to $k$ and all other tuples to $0$. We apply it as $R \defeq R \cup \delta R$, which adds $k$ to the payload of $\vect x$ in $R$. After each insert, the tuples of $Q(\D)$ can be enumerated with their payloads.

\medskip

{\noindent \bf Complexity Measures.}
The \emph{update time} is the time to process one single-tuple insert. The \emph{amortized update time} for a sequence of $N$ single-tuple inserts is $f(N)$ if the total time to process all $N$ inserts is upper bounded by $N \cdot f(N)$. The \emph{enumeration delay} is the maximum of: the time from the start of the enumeration and before the first tuple is output, the time between two consecutive output tuples, and the time from the last output tuple and until the end of the enumeration~\cite{Free-connex:CSL:2007}; it is \emph{constant} if it does not depend on $N$. The maintenance of a query is \emph{tractable} if both its amortized update time and its enumeration delay are constant.

\medskip

{\noindent \bf Hypergraphs, Tree Decompositions, and Widths.} We use the standard notions of hypertree decompositions of query hypergraphs~\cite{TreeDecompositions}, the AGM bound~\cite{AGM:2008}, and the fractional hypertree width~\cite{FHTW} of a CQ. For convenience, these have been restated in the appendix. 

\nop{\haozhe{Perhaps define $\alpha$-acyclic and free-connex briefly here.}}

\section{The Class $\mathcal{K}$ of Semirings}
\label{sec:class-k}

In this paper, we consider a class $\mathcal{K}$ of semirings with a wide range of practical applications, as exemplified in Table~\ref{tab:semiring-table}.  
Each semiring in $\mathcal{K}$ has one of two properties: it is idempotent and strictly ordered, or there is an injective homomorphism from the natural semiring to it. A semiring is \emph{idempotent} if $a + a = a$ for all $a \in K$. An idempotent semiring is partially ordered by its \emph{natural order}~\cite{green2007provenance,graphs-dioids}: $a \preceq_K b$ if $a + b = b$, and $a \prec_K b$ if $a \preceq_K b$ and $a \neq b$.

\begin{definition}[Idempotent Strictly Ordered (ISO) Semiring]
\label{def:temporal-chain-semiring}
An idempotent semiring $K$ is \emph{strictly ordered} if there is an infinite chain $0 = c_0 \prec_K c_1 \prec_K c_2 \prec_K \cdots$ in $K$ such that $c_i \cdot c_{i+1} \prec_K c_{i+1} \cdot c_{i+1}$ for all $i \geq 0$.
\end{definition}

\begin{example}\label{ex:iso}
The tropical semiring $(\R \cup \{\infty\}, \min, +, \infty, 0)$ is idempotent, and its natural order is the reverse of $\leq$, since $\min(a,b) = b$ whenever $a \geq b$. It is strictly ordered: take $c_0 = \infty$ and $c_i = -i$ for $i \geq 1$. This chain is increasing in the natural order, $c_0 \cdot c_1 = \infty \prec_K -2 = c_1 \cdot c_1$, and $c_i \cdot c_{i+1} = -(2i+1) \prec_K -(2i+2) = c_{i+1} \cdot c_{i+1}$ for $i \geq 1$. The min-product, max-product, max-plus, max-min, and min-max semirings in Table~\ref{tab:semiring-table} are ISO as well. For min-max, the same reasoning holds as for the tropical semiring. For min-product, its increasing chain is $c_0 = \infty$, and $c_i = 2^{-i}$ for all $i \geq 1$. If the addition operation is $\max$, the semiring is idempotent and its natural order is $\leq$. Its increasing chain is $c_0 = 0_K$, and $c_i = i$ for $i \geq 1$. The Boolean semiring is idempotent with natural order $\mathsf{false} \preceq_K \mathsf{true}$, but it has no infinite chain, so it is not strictly ordered; the same holds for every finite semiring. 
\end{example}

\begin{definition}[Homomorphism from the Natural Semiring]
\label{def:embedding}
A \emph{homomorphism} from the natural semiring $(\N, +, \cdot, 0, 1)$ to a semiring $(K, +_K, \cdot_K, 0_K, 1_K)$ is a map $h : \N \to K$ with $h(0) = 0_K$, $h(1) = 1_K$, $h(a + b) = h(a) +_K h(b)$, and $h(a \cdot b) = h(a) \cdot_K h(b)$ for all $a, b \in \N$.
\end{definition}

Intuitively, an injective homomorphism makes the natural semiring part of $K$: each natural number $n$ corresponds to a distinct element $h(n)$ of $K$, and computing in $\N$ is the same as computing with these elements in $K$.

\begin{example}\label{ex:embedding}
The elements of the real sum-product semiring $(\R_{\geq 0}, +, \cdot, 0, 1)$ are the nonnegative real numbers, with the usual addition and multiplication. The identity map $h(n) = n$ is an injective homomorphism from the natural semiring: every natural number is a nonnegative real number, and the sum and the product of two natural numbers are the same in both semirings.
The elements of the covariance semiring are triples $(c, \vect s, \vect Q)$ of a count $c \in \N$, a vector $\vect s \in \N^m$, and a matrix $\vect Q \in \N^{m \times m}$. The natural semiring is part of it through the count: $h(n) = (n, \vect 0, \vect 0)$ is the count $n$ with a zero vector and a zero matrix.
\end{example}

We are now ready to define the class $\mathcal{K}$ of semirings under consideration in our paper.

\begin{definition}[The Semiring Class $\mathcal{K}$]
\label{def:calK}
A semiring $K$ is in the class $\mathcal{K}$ if
\begin{enumerate}
    \item $K$ is zero-sum free and zero-divisor free, and
    \item $K$ is idempotent strictly ordered, where an element in the chain can be computed in constant time from its previous element in the chain, or there is an injective homomorphism from the natural semiring to $K$.
\end{enumerate}
\end{definition}

The two alternatives of item 2 exclude each other. Suppose that a semiring $K$ satisfies both: it is idempotent, and there is an injective homomorphism $h$ from the natural semiring to $K$. Since $h$ preserves sums and maps $1$ to $1_K$, we have $h(2) = h(1 + 1) = h(1) +_K h(1) = 1_K +_K 1_K$. Since $K$ is idempotent, $1_K +_K 1_K = 1_K = h(1)$. Hence $h(2) = h(1)$, which contradicts the injectivity of $h$. Every semiring in $\mathcal{K}$ thus satisfies exactly one of the two alternatives. We give separate reductions for their lower bounds in Sec.~\ref{sec:lower-bounds}.

Zero-sum freeness means that the semiring does not support additive inverses. This is important for the enumeration of the query result: Our maintenance approach propagates non-zero payloads from input tuples to output tuples, so it can enumerate with constant delay tuples whose annotations are guaranteed non-zero. Theorem~\ref{thm:main-upper-bound} holds for every zero-sum free and zero-divisor free semiring. 
The second condition is key for our lower bound in Sec.~\ref{sec:lower-bounds}, which has to express deletion without additive inverses: a semiring with an injective homomorphism from the natural semiring recovers the effect of a deletion as the difference of two consecutive answers, and an idempotent strictly ordered semiring lets a fresh payload from the chain overwrite a stale one. Theorem~\ref{thm:main-lower-bound} and Cor.~\ref{cor:dichotomy} hold for every semiring in $\mathcal{K}$.
Every semiring in Table~\ref{tab:semiring-table} except the Boolean one belongs to $\mathcal{K}$, and the second column of the table states which condition of item 2 it satisfies; the Boolean semiring satisfies neither.

\nop{
A semiring is \emph{zero-sum free} if $a + b = 0$ implies $a = b = 0$ and \emph{zero-divisor free} if $a \cdot b = 0$ implies $a = 0$ or $b = 0$.
}

\section{P-Hierarchical Queries}
\label{sec:p-hierarchical}

In this section, we introduce the new class of p-hierarchical queries, for which we later give lower and upper bounds on their maintenance. We start by introducing the prior, related class  of q-hierarchical queries, which can be maintained with constant update time and constant enumeration delay under inserts-deletes and over the Boolean semiring~\cite{QHierarchical}. 

\begin{definition}[Q-Hierarchical](\cite[Def.~3.1]{QHierarchical})
A conjunctive query $Q$ is \emph{q-hierarchical} if for any two variables $X,Y\in\vars(Q)$ the following is satisfied:
\begin{enumerate}
    \item\ [hierarchical property] \hspace*{1em}$\at(X) \cap \at(Y) = \emptyset$ or $\at(X) \subseteq \at(Y)$ or $\at(Y) \subseteq \at(X)$, and
    \item\ [q property] \hspace*{5em} if $\at(X) \subsetneq \at(Y)$ and $X \in \Free(Q)$, then $Y \in \Free(Q)$.
\end{enumerate}
\end{definition}

A natural question is whether there is a similarly simple syntactic characterization of all tractable conjunctive queries that can be maintained with constant amortized update time and constant enumeration delay under {\em inserts-only} and {\em over a semiring in $\mathcal{K}$}. This is the class of \emph{p-hierarchical} (short for piecewise-hierarchical) $\alpha$-acyclic queries, as we prove in subsequent sections. The key insight behind this class is that under inserts, the free variables no longer need to satisfy the strict hierarchy required by q-hierarchical queries. Instead, the hierarchical constraints apply exclusively to how bound variables interact with each other and with the free variables.

\begin{definition}[P-Hierarchical Query]
\label{def:p-hierarchical}
    A conjunctive query $Q$ is p-hierarchical if the following is satisfied:
    \begin{enumerate}
        \item\hspace*{-0.75em} [bound-bound interaction] $\forall X, Y \in \Bound(Q)$: $\at(X) \subseteq \at(Y)$ or $\at(Y) \subseteq \at(X)$ or $\at(X) \cap \at(Y) = \emptyset$; and
        \item\hspace*{-0.75em} [bound-free interaction] $\forall X \in \Bound(Q)$, $Y \in \Free(Q)$: $\at(X) \subseteq \at(Y)$ or $\at(X) \cap \at(Y)=\emptyset$
    \end{enumerate} 
\end{definition}

The class of p-hierarchical $\alpha$-acyclic queries is strictly sandwiched between the classes of q-hierarchical queries~\cite{QHierarchical} and of free-connex $\alpha$-acyclic queries~\cite{Free-connex:CSL:2007}, as it strictly includes the former and is included in the latter. \nop{\haozheupdate{Over the Boolean semiring, the free-connex property characterizes tractable maintenance under inserts.} \eden{The proof is deferred to the appendix.}}
For variables $X,Y$, we say that $X$ \emph{dominates} $Y$ iff $\at(X)\supseteq\at(Y)$. A query is \emph{free-dominated} if its free variables occur in all of its atoms.

\begin{example}\label{ex:p-hier-query}
    Any full conjunctive query is trivially p-hierarchical, since it has no bound variables. Two simple examples of free-connex $\alpha$-acyclic queries that are not p-hierarchical are: $Q_1() = \sum_{X,Y}R(X)\cdot S(X,Y)\cdot T(Y)$ and $Q_2(X) = \sum_{Y}S(X,Y)\cdot T(Y)$. $Q_1$ violates condition (1), while $Q_2$ violates condition (2).
    A simple p-hierarchical $\alpha$-acyclic query that is not q-hierarchical is the full variant of $Q_1$: $Q'_1(X,Y) = R(X)\cdot S(X,Y)\cdot T(Y)$. This is because $\at(X)$ and $\at(Y)$ overlap but one is not included in the other. Note that $Q_1$ and $Q_2$ are also not q-hierarchical: $Q_1$ is not even hierarchical, while $Q_2$ is hierarchical but does not have the $q$ property.

    Throughout this paper, we use as a running example the p-hierarchical $\alpha$-acyclic query
    \[Q(A,C) = \sum_{B,D,E,F,G,H}R_1(A,B,D,E) \cdot R_2(A,B,D,F) \cdot R_3(A,B,G) \cdot R_4(C) \cdot R_5(A,C,H).\]
    Notice that $B$ dominates $D$ and $G$, $D$ dominates $E$ and $F$, and $H$ has an atom set that is disjoint from the atom sets of all other bound variables. This satisfies condition (1). Variable $A$ dominates all bound variables, and $C$ has an atom set which is disjoint from all bound variables except $H$, which it dominates. So $Q$ satisfies conditions (1) and (2), and is thus p-hierarchical. Notice, however, that the sets $\at(A)$ and $\at(C)$ are neither disjoint, nor is one subsumed by the other one, so $Q$ is not q-hierarchical. 
\end{example}

We next present an alternative characterization of p-hierarchical queries in terms of their decomposition into q-hierarchical subqueries. 

\begin{proposition}
\label{prop-p-hierarchical-decomposition}
    A query $Q$ is p-hierarchical if and only if it can be expressed as a full conjunctive query with body atoms defined by free-dominated q-hierarchical subqueries.    
\end{proposition}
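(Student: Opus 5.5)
The plan is to make the informal notion of ``expressed as'' precise and then prove both directions. We say $Q$ has such an expression if the atoms of $Q$ can be partitioned into groups $\mathcal{A}_1,\dots,\mathcal{A}_m$ and the bound variables of $Q$ into pairwise disjoint sets $\vect B_1,\dots,\vect B_m$ with $\vect B_i\subseteq \vars(\mathcal{A}_i)$, such that no variable of $\vect B_i$ occurs in an atom outside $\mathcal{A}_i$. For each group we set $\vect F_i=\vars(\mathcal{A}_i)\setminus\vect B_i\subseteq\Free(Q)$ and define the subquery $Q_i(\vect F_i)=\sum_{\vect B_i}\prod_{R(\vect X)\in\mathcal{A}_i}R(\vect X)$. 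We require $Q(\vect F)=\prod_i Q_i(\vect F_i)$ as a full conjunctive query over the views $Q_i$, and we require each $Q_i$ to be q-hierarchical and free-dominated. The identity $Q=\prod_i Q_i$ holds by distributivity of $\cdot$ over $+$ in a commutative semiring, because the sets $\vect B_i$ are disjoint and each appears only in its own group, so the sum over $\Bound(Q)$ factorizes.

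\emph{($\Rightarrow$)} Build an undirected graph on $\at(Q)$ in which two atoms are adjacent iff they share a bound variable, and take the groups $\mathcal{A}_i$ to be its connected components; atoms without bound variables become singleton groups. Let $\vect B_i$ be the bound variables occurring in $\mathcal{A}_i$. These sets are disjoint and local to their groups by construction. The key claim is that each free $Y\in\vect F_i$ occurs in every atom of $\mathcal{A}_i$. To see this, let two adjacent atoms share a bound $X$, and let $Y$ occur in one of them. Then $\at(X)\cap\at(Y)\neq\emptyset$, so condition 2 of Def.~\ref{def:p-hierarchical} gives $\at(X)\subseteq\at(Y)$, and hence $Y$ also occurs in the other atom. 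Propagating along paths in the component proves the claim, which is exactly free-domination of $Q_i$.

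It remains to show that each $Q_i$ is q-hierarchical, and we check the pairs of variables within $\mathcal{A}_i$ case by case. Two free variables both have atom set $\mathcal{A}_i$, so their atom sets are equal. A bound $X$ and a free $Y$ satisfy $\at(X)\subseteq\mathcal{A}_i=\at_{Q_i}(Y)$. Two bound variables satisfy the hierarchical property by condition 1, since their atom sets in $Q$ lie inside $\mathcal{A}_i$. For the q property, $\at(X)\subsetneq\at(Y)$ with $X$ free is impossible, because $X$ already covers all atoms of $\mathcal{A}_i$.

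\emph{($\Leftarrow$)} Given such a decomposition, consider first two bound variables $X\in\vect B_i$ and $Y\in\vect B_j$. If $i\neq j$, their atom sets are disjoint by locality. If $i=j$, the hierarchical property of $Q_i$ applies, since atom sets in $Q$ and in $Q_i$ coincide for local variables; this gives condition 1. Now take a bound $X\in\vect B_i$ and a free $Y$. Either $Y$ occurs in no atom of $\mathcal{A}_i$, so $\at(X)\cap\at(Y)=\emptyset$, or $Y\in\vect F_i$. In the latter case free-domination puts $Y$ in all of $\mathcal{A}_i\supseteq\at(X)$; this gives condition 2.

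I expect the main obstacle to be the definitional step rather than the combinatorics. We must fix the formalization so that bound variables of distinct subqueries are genuinely distinct (local) and each atom lies in exactly one subquery. Without this, for instance if a bound variable were shared across two views, the ``if'' direction fails.
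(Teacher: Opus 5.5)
Your proof is correct, but your $(\Rightarrow)$ direction takes a different and more elementary route than the paper.

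The paper derives the decomposition from Algorithm~\ref{alg:create-var-orders}. It hangs each bound variable below its minimal dominator and groups the atoms $\bigcup_{X \in \V(T)} \at(X)$ of each resulting tree $T$. It then needs Lemma~\ref{lem:creates-forest} (the graph is a forest, and the bound variables of every atom form a path starting at the root) and Prop.~\ref{lem:q-hier} to establish free-domination and q-hierarchicality.

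You instead group the atoms by connected components of the ``shares a bound variable'' graph. You obtain free-domination by a one-step propagation argument from condition 2, and use condition 1 only for bound--bound pairs, so no tree structure is needed. The groups are in fact the same in both proofs: in the paper every atom of a tree's group contains the root of that tree, and atoms sharing a bound variable land in the same tree. The difference lies only in the argument.

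What the paper's route buys is an extended canonical variable order for each subquery as a by-product. It reuses these orders to build the constant-update view trees in Sec.~\ref{subsec:phase-1}. Your argument certifies only that each $Q_i$ is free-dominated and q-hierarchical, so the maintenance algorithm would still need such an order to be constructed separately.

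For $(\Leftarrow)$, the paper argues by contraposition, but the content is the same as your direct check. Your explicit formalization (atoms partitioned into groups, each bound variable local to one group) spells out the assumption the paper uses implicitly when it asserts that a single subquery contains both $\at(X)$ and $\at(Y)$.
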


\begin{example}\label{ex:query-rewriting}
    The query $Q$ from Ex.~\ref{ex:p-hier-query} can be equivalently expressed as
    $Q(A,C) = Q_1(A) \cdot Q_2(A,C) \cdot Q_3(C)$,
    where the free-dominated q-hierarchical subqueries are defined as:
    \[
        \begin{array}{r@{\;\;=\;\;}l}
        Q_1(A)   & \sum_{B,D,E,F,G} R_1(A,B,D,E) \cdot R_2(A,B,D,F) \cdot R_3(A,B,G) \\
        Q_2(A,C) & \sum_H R_5(A,C,H) \\
        Q_3(C)   & R_4(C).
        \end{array}
    \]
\end{example}

The fractional hypertree width of a p-hierarchical query is equal to the fractional hypertree width of its bound version. This allows us to relate the complexity of maintaining a p-hierarchical query to the fractional hypertree width of its bound version (Theorem~\ref{thm:main-upper-bound}). 

\begin{proposition}
\label{prop:fhtw=fcfhtw}
    For any p-hierarchical query, its fractional hypertree width equals the fractional hypertree width of its bound version.
\end{proposition}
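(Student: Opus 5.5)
The plan is to prove the two inequalities separately. Throughout, the fractional hypertree width of a query with free variables $\vect F$ is taken in the free-connex sense: the minimum, over all tree decompositions $\T$ of the query hypergraph that contain a connected subtree whose bags together cover exactly $\vect F$, of the maximal fractional edge cover number $\rho^*$ of a bag. The bound version has no free variables, so its width is the ordinary fractional hypertree width.

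\textbf{Lower direction.} Every free-connex tree decomposition of $Q$ is in particular a tree decomposition of the hypergraph of the bound version, which is the same hypergraph. Therefore the width of $Q$ is at least the width of its bound version. This needs no use of p-hierarchicality.

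\textbf{Upper direction, step 1: the free part.} Fix an optimal tree decomposition $\T$ of the bound version, of width $w$. We may assume $w\ge 1$, since the query has at least one atom. By Prop.~\ref{prop-p-hierarchical-decomposition}, write $Q(\vect F)=\prod_i Q_i(\vect F_i)$, where each $Q_i$ is a free-dominated q-hierarchical subquery. The bound variables of distinct $Q_i$ are disjoint, since they come from the decomposition. Project every bag $\chi(t)$ of $\T$ onto $\vect F$, giving the tree $\T_{\vect F}$.
\begin{itemize}
\item $\T_{\vect F}$ keeps the running-intersection property.
\item Each bag $\chi(t)\cap\vect F$ has fractional edge cover number at most $\rho^*(\chi(t))$. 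The same weights on the atoms still cover the projected bag.
\item Every $\vect F_i$ lies in some bag of $\T_{\vect F}$. Because $Q_i$ is free-dominated, every atom of $Q_i$ contains $\vect F_i$, and some bag of $\T$ contains that atom.
\end{itemize}
Hence $\T_{\vect F}$ is a tree decomposition of the hypergraph with hyperedges $\vect F_i$, has width at most $w$, and will serve as the connected top part covering exactly $\vect F$.

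\textbf{Upper direction, step 2: the bound part.} For each $Q_i$ with bound variables, attach a subtree below a node $t_i$ of $\T_{\vect F}$ whose bag contains $\vect F_i$.
\begin{itemize}
\item The subtree's root bag is $\vect F_i$.
\item Below the root, follow the variable order given by the hierarchical structure of $\Bound(Q_i)$. A bound variable $Y$ is a child of $X$ when $\at(Y)\subsetneq\at(X)$ is maximal.
\item Each node $Y$ gets the bag consisting of $\vect F_i$ together with the bound variables on the root-to-$Y$ path.
\end{itemize}
Because $Q_i$ is hierarchical and all its atoms contain $\vect F_i$, each such bag is contained in a single atom, namely any atom in $\at(Y)$. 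So $\rho^*\le 1\le w$ for every new bag. Every atom of $Q_i$ is contained in the bag of its deepest bound variable. Running intersection holds because:
\begin{itemize}
\item each bound variable occurs only along a downward path in its own subtree;
\item bound variables of different subqueries are disjoint;
\item free variables in the new bags are all in $\vect F_i\subseteq\chi(t_i)$.
\end{itemize}

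The resulting decomposition is free-connex with top part $\T_{\vect F}$ and width at most $w$, which gives the upper inequality. I expect the main care to be needed in step 1: verifying that the projected tree really covers every free hyperedge $\vect F_i$ (this is exactly where free-dominance is used) and that projecting bags onto $\vect F$ does not increase $\rho^*$. I also need to confirm that the paper's notion of width for queries with free variables is the free-connex one assumed here.
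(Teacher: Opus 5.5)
Your proposal is correct, and its first half matches the paper. Both arguments get one inequality from the fact that every free-connex TD of $Q$ is a TD of its bound version. Both then take an optimal TD $(T,\chi)$ of the bound version and intersect its bags with $\Free(Q)$ to get the free-connex top part.

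The routes differ on the bound part.

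\emph{The paper's route.} It reuses $T$ once per subquery, with bags $\chi_i(t)=\chi(t)\cap\vars(Q_i)$. It bounds every bag by $w$ through a weight transfer: the weight of each atom outside $Q_i$ is moved onto one atom $e^*$ of $Q_i$. This is allowed because outside atoms meet $\vars(Q_i)$ only in $\vect F_i$, and $e^*\supseteq\vect F_i$ by free-dominance. The copy $T_i$ is then hooked to the top part at a node whose bag contains $\vect F_i$. This uses nothing about the internal structure of $Q_i$ beyond free-dominance and disjoint bound variables. The cost is the cover-transfer bookkeeping and attached trees as large as $T$.

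\emph{Your route.} You drop $T$ below the border and build the bound part directly along the dominance order of each q-hierarchical subquery. Every new bag then lies inside a single atom, so $\rho^*\le 1$. This is more elementary and says more: the bound variables contribute width at most $1$, so the width is determined by the projection onto the free variables. It is essentially the tree-decomposition counterpart of Algorithm~\ref{alg:create-var-orders}.

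One detail of your step 2 needs repair. The rule ``$Y$ is a child of $X$ when $\at(Y)\subsetneq\at(X)$ is maximal'' breaks when distinct bound variables have equal atom sets.
\begin{itemize}
\item For a single atom $R(A,D,E)$ with $D,E$ bound and $\at(D)=\at(E)$, neither variable is a child of the other. No bag then contains $\{A,D,E\}$, and ``the deepest bound variable'' of the atom is undefined.
\item If $\at(X_1)=\at(X_2)$ and both are minimal strict supersets of $\at(Y)$, then $Y$ gets two parents, so you do not have a tree.
\end{itemize}
Break ties by a fixed total order, as Algorithm~\ref{alg:create-var-orders} does with the lexicographic order, or merge variables with equal atom sets into one node. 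Every ancestor $X$ of $Y$ then still satisfies $\at(Y)\subseteq\at(X)$. With that, your single-atom bound, atom coverage, and running-intersection arguments go through unchanged.

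You also asked about the width notion. The paper indeed defines $\fhtw$ as the minimum over free-connex TDs, so your reading is the right one.
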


\section{Maintaining P-Hierarchical Queries}
\label{sec:upper-bound}

This section gives the maintenance algorithm behind Theorem~\ref{thm:main-upper-bound} and a proof sketch. By Prop.~\ref{prop-p-hierarchical-decomposition}, a p-hierarchical query $Q$ can be rewritten as a full conjunctive query whose atoms are free-dominated q-hierarchical subqueries; we show below how to construct this rewriting. Our maintenance structure follows the rewriting in two layers, each over a different semiring. Fig.~\ref{fig:create-var-orders-example} (right) shows the structure for the running example, with the subqueries $Q_1(A)$, $Q_2(A,C)$, and $Q_3(C)$ from Ex.~\ref{ex:query-rewriting}.

The bottom layer, below the dashed border, maintains the result of each subquery over the original semiring $K$; a hierarchy of materialized views, as in prior maintenance approaches for q-hierarchical queries~\cite{QHierarchical,DynYannakakis:SIGMOD:2017,FIVM:SIGMOD:2018,CROWN}, keeps the update time of each subquery constant. The top layer maintains only which tuples are present in the join of the subquery results, not their payloads from $K$: an indicator projection casts each such payload to the Boolean semiring. These indicator projections form the \emph{border} between the two layers. Their join $\Qtop(A,C)$ is a full conjunctive query over the Boolean semiring, which prior work~\cite{InsertsVSDeletes} maintains under inserts with the amortized update time and the enumeration delay stated in Theorem~\ref{thm:main-upper-bound}.

The border between the two layers keeps the update time low: it lets only unseen tuples through to the top layer. A tuple that does cross can be expensive to process, but its cost is amortized over the whole sequence of inserts. This behavior of the border is due to the change of semiring. Over $K$, the payload of a tuple already present in a subquery result can change repeatedly, with every insert. Over the Boolean semiring, the payload of a present tuple never changes, since $\mathsf{true} \lor \mathsf{true} = \mathsf{true}$: once a tuple has crossed the border, no later insert touches it.

The maintenance structure allows enumerating the result of $Q$ with the correct payloads. The result tuples are exactly those of the Boolean join of the subqueries, and the top layer enumerates them with constant delay. Their payloads, which the indicator projections do not carry, are recovered from the subquery results below the border, in constant time per tuple.

We proceed as follows: Sec.~\ref{subsec:sec5-prelims} recalls variable orders and view trees, Sec.~\ref{subsec:phase-1} builds the bottom layer, Sec.~\ref{subsec:phase-2} the top layer, and Sec.~\ref{subsec:proof-sketch} assembles the proof.

\begin{figure}[t]
    \centering
    \setlength{\tabcolsep}{14pt}
    \begin{tabular}{cc}
        \begin{tikzpicture}[baseline=(current bounding box.south),
            x=0.72cm, y=0.58cm,
            every node/.style={draw=none, inner sep=1.5pt, font=\scriptsize},
            ]
            \node (A1) at (1.15,4) {\underline{$A$}};
            \node (T1) at (0.4,4) {$\omega_1$};
            \node (B) at (1.15,3) {$B$};
            \node (D) at (0.5,2) {$D$};
            \node (G) at (1.8,2) {$G$};
            \node (E) at (0,1) {$E$};
            \node (F) at (1,1) {$F$};
            \node (R1) at (0,0) {$R_1$};
            \node (R2) at (1,0) {$R_2$};
            \node (R3) at (1.8,0) {$R_3$};
            \draw (A1) -- (B);
            \draw (B) -- (D);
            \draw (B) -- (G);
            \draw (D) -- (E);
            \draw (D) -- (F);
            \draw (E) -- (R1);
            \draw (F) -- (R2);
            \draw (G) -- (R3);
            \node (A2) at (3.4,3) {\underline{$A$}};
            \node (C2) at (3.4,2) {\underline{$C$}};
            \node (T2) at (2.75,3) {$\omega_2$};
            \node (H) at (3.4,1) {$H$};
            \node (R5) at (3.4,0) {$R_5$};
            \draw (A2) -- (C2);
            \draw (C2) -- (H);
            \draw (H) -- (R5);
            \node (T3) at (4.7,1) {$\omega_3$};
            \node (C3) at (5.3,1) {\underline{$C$}};
            \node (R4) at (5.3,0) {$R_4$};
            \draw (C3) -- (R4);
            \node at (0.9,-1) {$Q_1(A)$};
            \node at (3.4,-1) {$Q_2(A,C)$};
            \node at (5.3,-1) {$Q_3(C)$};
        \end{tikzpicture}
        &
        \begin{tikzpicture}[baseline=(current bounding box.south),
            x=0.87cm, y=0.64cm,
            every node/.style={draw=none, inner sep=1.5pt, font=\scriptsize},
            ]
            \node (R1) at (0,0) {$R_1(A,B,D,E)$};
            \node (R2) at (2.3,0) {$R_2(A,B,D,F)$};
            \node (V1) at (0,1) {$V_1(A,B,D)$};
            \node (V2) at (2.3,1) {$V_2(A,B,D)$};
            \node (V3) at (1.15,2) {$V_3(A,B,D)$};
            \node (R3) at (3.7,2) {$R_3(A,B,G)$};
            \node (V4) at (1.15,3) {$V_4(A,B)$};
            \node (V5) at (3.7,3) {$V_5(A,B)$};
            \node (V6) at (2.4,4) {$V_6(A,B)$};
            \node (V7) at (2.4,5) {$V_7(A)$};
            \node (I1) at (2.4,6.2) {$\mathbbm{1}_{V_7}(A)$};

            \node (R5) at (5.6,4) {$R_5(A,C,H)$};
            \node (V8) at (5.6,5) {$V_8(A,C)$};
            \node (I2) at (5.6,6.2) {$\mathbbm{1}_{V_8}(A,C)$};

            \node (R4) at (7.3,5) {$R_4(C)$};
            \node (I3) at (7.3,6.2) {$\mathbbm{1}_{R_4}(C)$};

            \node (QP) at (5.6,7.4) {$\Qtop(A,C)$};

            \draw (R1) -- (V1);
            \draw (R2) -- (V2);
            \draw (V1) -- (V3);
            \draw (V2) -- (V3);
            \draw (V3) -- (V4);
            \draw (R3) -- (V5);
            \draw (V4) -- (V6);
            \draw (V5) -- (V6);
            \draw (V6) -- (V7);
            \draw (V7) -- (I1);
            \draw (R5) -- (V8);
            \draw (V8) -- (I2);
            \draw (R4) -- (I3);
            \draw (I1) -- (QP);
            \draw (I2) -- (QP);
            \draw (I3) -- (QP);

            \draw[dashed, gray] (-0.9,5.6) -- (7.9,5.6);
            \node[gray, anchor=west] at (-0.9,6.15) {over $\mathbb{B}$};
            \node[gray, anchor=west] at (-0.9,5.05) {over $K$};
        \end{tikzpicture}
    \end{tabular}
    \caption{Left: the extended canonical variable orders $\omega_1$, $\omega_2$, and $\omega_3$ of the subqueries $Q_1(A)$, $Q_2(A,C)$, and $Q_3(C)$ from Ex.~\ref{ex:query-rewriting} of the query $Q$ from Ex.~\ref{ex:p-hier-query}. Free variables are underlined. The atoms $R_1,\ldots,R_5$ are shown without their schemas. Right: the maintenance structure for $Q$. Below the dashed border, there is one view tree per variable order, maintained over $K$. Their root views $V_7(A)$, $V_8(A,C)$, and $R_4(C)$ compute $Q_1$, $Q_2$, and $Q_3$. Above the border, the indicator projections and their join $\Qtop(A,C)$ are maintained over the Boolean semiring.}
    \label{fig:create-var-orders-example}
\end{figure}

\subsection{Variable Orders and View Trees}
\label{subsec:sec5-prelims}

We recall variable orders and view trees from prior work~\cite{FIVM,f-tree,H-IVM}.

\begin{definition}[Variable Order](Adapted from \cite[Def.~3.1]{FIVM})
    A variable order for a conjunctive query $Q$ is a rooted forest with one node per variable in $Q$ such that for each atom of $Q$, its variables lie along the same root-to-leaf path.
\end{definition}

A variable order is \emph{canonical} (adapted from~\cite{FIVM,IVMeps:PODS:2020}) if every variable dominates the variables below it, with the free variables on top (called free-top canonical in prior work~\cite{IVMeps:PODS:2020}, here we use a shortened name as we do not need canonical and not free-top variable orders). It is \emph{extended} if each atom is appended as a child of its lowest variable, so that the atoms are the leaves.

\begin{definition}[View Tree]~\cite[Def.~1]{H-IVM}
\label{def:viewtree}
A \emph{view tree} $T$ for a query $Q$ is a rooted tree with the properties:
\begin{itemize}
    \item There is a one-to-one mapping between the leaves of $T$ and the atoms of $Q$.
    \item Each inner node is a view over some variables of $Q$.
    \item If a node $V'(\vect{Y})$ has a single child node $V(\vect{X})$, then it is a \emph{projection view} defined by marginalizing variables of $V(\vect{X})$, i.e., $V'(\vect{Y}) = \sum_{\vect{X} \setminus \vect{Y}} V(\vect{X})$.
    Furthermore, every atom of $Q$ with a variable from $\vect{X} \setminus \vect{Y}$  occurs in the subtree rooted at $V(\vect{X})$.
    \item If a node $V(\vect{X})$ has several children $V_1(\vect{X_1}), \ldots , V_n(\vect{X_n})$, for $n \geq 2$, then it is a \emph{join view} defined by the natural join of the  child views, i.e., $V(\vect{X}) = V_1({\vect{X_1}})\cdot\ldots\cdot V_n(\vect{X_n})$.
\end{itemize}
\end{definition}

An insert into a base relation is propagated up its view tree: each view on the path from the leaf to the root is updated with the change of its child, a projection view by marginalizing it, a join view by joining it with the sibling views. 

F-IVM~\cite{FIVM} turns an extended canonical variable order of a q-hierarchical query into a view tree that maintains the query with constant update time: atoms are left unchanged, and each variable is replaced by a join view over its children, topped by a projection view in which the variable is marginalized out. Since the variable order is canonical, every atom below a variable contains that variable and all its ancestors, so every child view of a variable has exactly these variables as schema and the join view is an intersection. The change of an insert hence matches at most one tuple in each view along its path, and the update takes constant time.

\subsection{The Bottom Layer}
\label{subsec:phase-1}

\subparagraph{Decomposition.} We construct the decomposition of Prop.~\ref{prop-p-hierarchical-decomposition} together with an extended canonical variable order for each subquery; these variable orders guide the view tree construction below. A canonical variable order requires every variable to dominate the variables below it, with the free variables above the bound variables. These two requirements are matched by the two defining conditions for p-hierarchical queries (Def.~\ref{def:p-hierarchical}), and Algorithm~\ref{alg:create-var-orders} turns them into a construction.

Condition (1) states that two bound variables have either nested atom sets, so that one dominates the other, or disjoint atom sets. It lets us arrange the bound variables in a forest along the dominance order, with every bound variable below the bound variables that dominate it. This placement gives no variable two parents: two dominators of a bound variable $X$ share the atoms of $X$, so by condition (1) one of them dominates the other, and $X$ sits below the farther one through the closer one. Lines 1-4 therefore make every bound variable a child of its closest dominator, its \emph{minimal dominator}, or a root if it has none.

Condition (2) states that a free variable either dominates a bound variable or shares no atom with it, so a free variable that occurs in an atom of a tree dominates the root of the tree, which that atom contains. It lets us place the free variables that occur in the atoms of a tree above the bound variables of the tree. Line 8 places these free variables as a path above the root. Their order does not matter, since they all occur in every atom of the tree.

Finally, we attach the atoms. By condition (1), the bound variables of an atom lie in a single tree; lines 9-11 append the atom to that tree as a child of its lowest variable. An atom without bound variables lies in no tree, so lines 13-14 give it a variable order of its own, its variables as a path with the atom below. Each resulting variable order $\omega$ defines a subquery of $Q$: the join of the atoms of $\omega$, summed over the bound variables of $\omega$. The variable order $\omega$ is an extended canonical variable order of this subquery (Lemma~\ref{lem:creates-forest} in the appendix).

\begin{algorithm}[t]
\DontPrintSemicolon
\caption{\textsc{CreateVarOrders}($Q$)}
\label{alg:create-var-orders}

\KwIn{A p-hierarchical query $Q$, with its bound variables ordered by $X < Y$ if $\at(X) \subset \at(Y)$, or $\at(X) = \at(Y)$ and $X$ precedes $Y$ lexicographically}
\KwOut{Extended canonical variable orders of the subqueries of $Q$}
\BlankLine

Initialize a directed graph $G := (\mathcal{V}, \mathcal{E})$ with $\mathcal{V} := \Bound(Q)$ and $\mathcal{E} := \emptyset$\;

\ForEach{$X \in \mathcal{V}$}{
    \If{$\exists Y \in \mathcal{V} \textnormal{ s.t. } X < Y \land \not \exists Z \in \V \textnormal{ s.t. } X < Z < Y$}{
        $\mathcal{E} := \mathcal{E} \cup \{(Y \to X)\}$\tcp*{$X$ below its minimal dominator $Y$}
    }
}

$\Omega := \emptyset$\;

\ForEach{\textnormal{tree} $T$ \textnormal{of} $G$}{
    \textnormal{let} $\omega$ \textnormal{be a copy of} $T$ \textnormal{and let} $\mathcal{A} := \bigcup_{X \in \V(T)} \at(X)$\;
    \textnormal{add the free variables of} $Q$ \textnormal{in} $\mathcal{A}$ \textnormal{to} $\omega$ \textnormal{as a path above its root, in any order}\;
    \ForEach{$R(\vect Y) \in \mathcal{A}$}{
        \textnormal{let} $X$ \textnormal{be the lowest bound variable of} $\omega$ \textnormal{with} $X \in \vect Y$\;
        \textnormal{add} $R(\vect Y)$ \textnormal{to} $\omega$ \textnormal{as a child of} $X$\;
    }
    $\Omega := \Omega \cup \{\omega\}$\;
}

\ForEach{\textnormal{atom} $R(\vect Y)$ \textnormal{of} $Q$ \textnormal{with} $\vect Y \subseteq \Free(Q)$}{
    \textnormal{add to} $\Omega$ \textnormal{the path} $\vect Y$\textnormal{, in any order, with} $R(\vect Y)$ \textnormal{below it}\;
}

\Return $\Omega$\;
\end{algorithm}

\begin{example}\label{ex:bound-var-order-construction}
    We illustrate Algorithm~\ref{alg:create-var-orders} using $Q$ from Ex.~\ref{ex:p-hier-query} as input. Line 1 initializes the graph $G$ with the vertex set $\V = \Bound(Q) = \{B, D, E, F, G, H\}$. Consider the bound variable $E$ with $\at(E) = \{R_1(A,B,D,E)\}$. Both $B$ and $D$ dominate $E$, but $\at(E) \subset \at(D) \subset \at(B)$ gives $E < D < B$, so $D$ is the minimal dominator of $E$, and lines 2-4 add the edge from $D$ to $E$. The atom set $\at(H) = \{R_5(A,C,H)\}$ is disjoint from the atom sets of all other bound variables, so no variable dominates $H$ and it remains a root of its own. Processing the remaining variables in the same way yields two trees, one with root $B$ and one with the single node $H$. Line 8 places $A$ above $B$, and $A$ and $C$ above $H$. Lines 9-11 append the atoms: $R_1(A,B,D,E)$ below $E$, $R_2(A,B,D,F)$ below $F$, $R_3(A,B,G)$ below $G$, and $R_5(A,C,H)$ below $H$. The atom $R_4(C)$ has no bound variable, so lines 13-14 add $\omega_3$, with $C$ above $R_4(C)$. The result is the variable orders $\omega_1$, $\omega_2$, and $\omega_3$ in Fig.~\ref{fig:create-var-orders-example} (left).
\end{example}

The subquery of a variable order is free-dominated: its free variables dominate the root of the tree, hence occur in every atom of the subquery. It is hierarchical: two of its variables that share an atom lie on one root-to-leaf path, so one dominates the other. It has the q property: the free variables occur in all its atoms, so no variable has a strictly larger atom set than a free variable. Finally, a bound variable of $Q$ lies in exactly one tree, so the subqueries share only free variables of $Q$, and $Q$ is their full join:

\begin{proposition}\label{lem:q-hier}
    Given a p-hierarchical query $Q$, each variable order produced by Algorithm~\ref{alg:create-var-orders} is an extended canonical variable order of a free-dominated q-hierarchical subquery of $Q$, and $Q$ is the full join of these subqueries.
\end{proposition}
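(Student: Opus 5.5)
We would prove Proposition~\ref{lem:q-hier} in three steps: first that $G$ is a forest and the output consists of well-formed extended variable orders, then that each one is canonical for a free-dominated q-hierarchical subquery, and finally that $Q$ is the full join of the subqueries.

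\textbf{Step 1: $G$ is a forest.} The relation $<$ on $\Bound(Q)$ is a strict total order on each set of atom-comparable variables. It refines strict inclusion of atom sets and breaks ties lexicographically, so it is acyclic. Each $X$ receives at most one incoming edge, namely from its minimal dominator. We must check that this is unique: any two dominators $Y,Y'$ of $X$ share $\at(X)\neq\emptyset$. By condition (1) of Def.~\ref{def:p-hierarchical} their atom sets are therefore nested, so $Y,Y'$ are $<$-comparable. The set of dominators is then a $<$-chain with a unique least element. Hence $G$ is a forest, and the ancestors of $X$ are exactly its dominators. The same argument shows that two bound variables in a common atom lie in a single tree, on one root-to-leaf path.

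\textbf{Step 2: each $\omega$ is extended canonical.} Fix a tree $T$ with root $r$. Every $X\in T$ satisfies $\at(X)\subseteq\at(r)$, so $\mathcal{A}=\at(r)$. Let $Y$ be a free variable occurring in some atom of $\mathcal{A}$. Then $\at(Y)\cap\at(r)\neq\emptyset$, and condition (2) gives $\at(r)\subseteq\at(Y)$. Thus every free variable in $\mathcal{A}$ occurs in all atoms of $\mathcal{A}$. This makes the subquery free-dominated, and placing these free variables as a path above $r$ respects dominance.

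For an atom $R(\vect Y)\in\mathcal{A}$, its bound variables are pairwise comparable and lie on one path by Step 1. Attaching $R$ below the lowest of them puts all of $\vect Y$ on one root-to-leaf path. Dominance along edges of $T$ holds by construction. Hence $\omega$ is an extended canonical variable order, free variables on top, of the subquery $Q_\omega=\sum_{\V(T)}\prod_{R\in\mathcal{A}}R$.

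To see that $Q_\omega$ is q-hierarchical, take two variables sharing an atom. They lie on one root-to-leaf path, so the upper one dominates the lower one: in $T$ this is by Step 1, and the free variables dominate everything below them. Comparability on shared atoms is exactly the hierarchical property. The q property holds because free variables occur in all atoms, so no variable has a strictly larger atom set than a free variable.

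The paths built at lines 13--14 are trivially of this form as well.

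\textbf{Step 3: $Q$ is the full join.} Every atom of $Q$ either contains a bound variable, which lies in exactly one tree $T$, so the atom lies in $\at(r)$ for that tree only, or it is handled at lines 13--14. Hence the atoms are partitioned among the subqueries. Each bound variable occurs only in the atoms of its own subquery, so the sum over $\Bound(Q)$ distributes over the product:
\[
Q(\Free(Q))=\prod_\omega Q_\omega.
\]
This uses only commutativity and distributivity of $K$.

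\textbf{Main obstacle.} We expect the main obstacle to be Step 1, specifically proving that the ancestors in $T$ are exactly the dominators and that co-occurring bound variables land in the same tree. Ties $\at(X)=\at(Y)$ need care here: the lexicographic tie-break must still yield a chain, so that no atom is split across trees and no atom is attached to two different trees.
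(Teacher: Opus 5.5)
Your proof is correct. It has the same overall shape as the paper's: the forest, then canonicity and free-dominance via condition (2), then the q-hierarchical property, then the join via disjoint bound variables and distributivity. Your Step 1, however, is argued differently.

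The paper's Lemma~\ref{lem:creates-forest} takes several separate arguments. It shows in-degree at most one by a case analysis on two putative parents, and excludes cycles separately. It then uses a chain of contradiction arguments to place the bound variables of each atom on one path that is contiguous and starts at the root: one argument across trees via their roots, one within a tree via a first common ancestor, and further ones for contiguity and rootedness.

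You instead isolate a single fact. For each bound $X$, the set $\{Y : X<Y\}$ is a $<$-chain, because its members share $\at(X)\neq\emptyset$ and condition (1) makes their atom sets nested. Hence the parent of $X$ is the least element of this set, and the ancestors of $X$ are exactly this set. Everything else follows at once: co-occurring bound variables are ancestor-related, $\mathcal{A}=\at(r)$, and each atom has a unique lowest bound variable. Ties are handled uniformly by the tie-broken order rather than case by case.

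The same fact also gives the paper's stronger prefix property, which you do not state but which the paper relies on later (Lemma~\ref{lem:view-trees}) to make every join view an intersection. Every ancestor $Y$ of a bound variable $X$ of an atom $R$ satisfies $\at(Y)\supseteq\at(X)\ni R$. So the bound variables of $R$ are exactly the path from the root to its lowest one.

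One wording fix: ``the ancestors of $X$ are exactly its dominators'' should read ``exactly the $Y$ with $X<Y$''. Under the paper's definition of dominance, a variable with the same atom set that precedes $X$ lexicographically dominates $X$, yet it lies below $X$.
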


\begin{example}\label{ex:subqueries}
    Continuing Ex.~\ref{ex:bound-var-order-construction}, $\omega_1$ defines the subquery that joins $R_1$, $R_2$, and $R_3$ with the free variable $A$: this is $Q_1(A)$ from Ex.~\ref{ex:query-rewriting}, and $\omega_1$ is its extended canonical variable order. Likewise, $\omega_2$ defines $Q_2(A,C)$ and $\omega_3$ defines $Q_3(C)$. The three subqueries share only the free variables $A$ and $C$, and $Q(A,C) = Q_1(A) \cdot Q_2(A,C) \cdot Q_3(C)$, matching Ex.~\ref{ex:query-rewriting}.
\end{example}

\subparagraph{View trees.} We now build one view tree per subquery. Algorithm~\ref{alg:rewrite} (adapted from F-IVM~\cite{FIVM}, see Appendix~\ref{sec:appendix-upper-bounds}) applies the construction of Sec.~\ref{subsec:sec5-prelims} to each variable order with its free variables removed, so that only the bound variables become views. When a bound variable has a single child, the join view is redundant and can be omitted. The free variables produce no views: the projection view of the last bound variable already has the free variables of the subquery as its schema, so it holds the subquery result. It is the root view, on which the top layer joins. The variable order of an atom without bound variables becomes the atom itself, a trivial view tree.

\begin{example}\label{ex:view-tree-construction}
    Fig.~\ref{fig:create-var-orders-example} (right) shows, below the border, the resulting forest for the running example: the view tree with root $V_7(A)$ maintains $Q_1$, the view tree with root $V_8(A,C)$ maintains $Q_2$, and the trivial view tree $R_4(C)$ maintains $Q_3$. For instance, the variable $D$ of $\omega_1$ becomes the join view $V_3(A,B,D) = V_1(A,B,D) \cdot V_2(A,B,D)$, topped by the projection view $V_4(A,B) = \sum_{D} V_3(A,B,D)$. The variable $E$ has the single child $R_1(A,B,D,E)$, so its join view is omitted and it becomes only the projection view $V_1(A,B,D) = \sum_{E} R_1(A,B,D,E)$.
\end{example}

Each subquery is q-hierarchical and its variable order is extended canonical (Prop.~\ref{lem:q-hier}), so its view tree is updated in constant time, as in Sec.~\ref{subsec:sec5-prelims}. By Prop.~\ref{lem:q-hier}, $Q$ is the join of the root views of this forest:

\begin{proposition}\label{thm:main-alg}
    Given a p-hierarchical query $Q(\vect F)$, the construction above yields a forest of view trees $\T$, one per subquery of Prop.~\ref{lem:q-hier}, such that the root view of each tree holds the result of its subquery, hence $Q(\vect F) = \prod_{V(\vect Y) \in \roots(\T)} V(\vect Y)$, and each tree admits constant update time for a single-tuple insert.
\end{proposition}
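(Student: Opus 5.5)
The proof has three parts: correctness of each view tree, the join identity at the roots, and the constant update time per tree.

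\emph{Correctness of each tree.} Fix a variable order $\omega$ from Algorithm~\ref{alg:create-var-orders}, and let $Q_\omega$ be its subquery, with free variables $\vect F_\omega$. Let $\omega'$ be $\omega$ with the free variables removed, so it is a forest of bound variables with atoms as leaves. I will show by structural induction on $\omega'$ that the view created for a bound variable $X$ has schema $\vect F_\omega \cup \mathsf{anc}(X)$. Here $\mathsf{anc}(X)$ denotes the bound ancestors of $X$ in $\omega'$. I will also show that this view equals the sum, over $X$ and all bound variables below $X$, of the product of the atoms in the subtree of $X$.

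The base case is an atom leaf. The atom is appended below its lowest bound variable. By canonicity, it contains all ancestors of that variable. By free-domination (Prop.~\ref{lem:q-hier}), it also contains all of $\vect F_\omega$. So its schema is exactly $\vect F_\omega \cup \mathsf{anc}$ plus the variable itself.

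In the inductive step, the children of $X$ are subtrees whose views all have schema $\vect F_\omega \cup \mathsf{anc}(X) \cup \{X\}$. Their join is then a pure intersection. The subtrees are over disjoint sets of bound variables and share only this schema, so distributivity of $\cdot$ over $+$ in $K$ lets us push the inner sums inside the product. The projection then marginalizes $X$.

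At the root of $\omega'$ the schema is $\vect F_\omega$, so the root view is $Q_\omega$. The case of an atom without bound variables is trivial. I also need to check that each constructed view satisfies Def.~\ref{def:viewtree}. The only nontrivial condition is the projection condition: every atom containing $X$ lies below $X$. This holds because the atoms are placed below their lowest bound variable on a single root-to-leaf path.

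\emph{The join identity.} By Prop.~\ref{lem:q-hier}, $Q$ is the full join of the subqueries. Distinct trees share no bound variables. Hence, again by distributivity, $Q(\vect F)=\prod_\omega Q_\omega(\vect F_\omega)$, and this equals the product of the root views.

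\emph{Constant update time.} Consider an insert $\delta R$ with key $\vect x$. It affects only the tree containing $R$, and propagates along the leaf-to-root path, whose length depends only on $Q$. At a projection view, the delta is a single tuple obtained by restricting $\vect x$, and adding it takes constant time with hashing. At a join view, all siblings have the same schema as the delta tuple. So each sibling is a single hash lookup, and the product contributes at most one tuple.

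The main obstacle is this schema-equality claim for the siblings. It relies on two facts. First, canonicity: each bound variable dominates those below it, which is guaranteed by condition~(1) and the minimal-dominator construction. Second, the free variables lie in every atom of the tree, which is guaranteed by condition~(2). I would argue this via Lemma~\ref{lem:creates-forest} and Prop.~\ref{lem:q-hier}. Removing the free variables from the order does not break the equality, since they are in every view's schema anyway. Finally, no amortization is needed, so the bound is worst-case constant per insert.
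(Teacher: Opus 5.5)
Your proposal is correct and follows essentially the same route as the paper: it combines Lemma~\ref{lem:creates-forest} and Prop.~\ref{lem:q-hier} with an induction (the paper's Lemma~\ref{lem:view-trees}) showing that all children of a bound-variable node share one schema, namely the subquery's free variables plus the root path. Hence every join view is an intersection, the root view has exactly the free variables, and a single-tuple delta propagates in constant time along a path whose length depends only on the query. The only addition is that you explicitly argue, via distributivity, that each view equals the sum-product over its subtree, which the paper takes for granted.
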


\subsection{The Top Layer}
\label{subsec:phase-2}

Sec.~\ref{subsec:phase-1} leaves us with the forest $\T$ of view trees of Prop.~\ref{thm:main-alg}: its root views $V(\vect Y)$ hold the subquery results over $K$, and $Q(\vect F)$ is their join. The top layer maintains this join, but over the Boolean semiring. For each root view $V(\vect Y)$, let $\mathbbm{1}_{V}(\vect Y)$ denote its \emph{indicator projection}: the view over the Boolean semiring that maps a tuple to $\mathsf{true}$ exactly when its payload in $V(\vect Y)$ is nonzero. The indicator projections form the border between the two layers; in Fig.~\ref{fig:create-var-orders-example} (right), they are the views just above the dashed line. Above the border, let $\Qtop(\vect F) = \prod_{V(\vect Y) \in \roots(\T)} \mathbbm{1}_{V}(\vect Y)$ denote the join of the indicator projections, a full join query over the Boolean semiring.

The two conditions on $K$ make this change of semiring sound. The payload of a result tuple of $Q$ is the product of its payloads in the root views. By zero-divisor freeness, this product is nonzero exactly when all its factors are, so a tuple is in the result of $\Qtop$ if and only if its payload in $Q$ is nonzero; without it, a product of nonzero payloads could be zero, and $\Qtop$ would report tuples that $Q$ does not have. By zero-sum freeness, a payload that is nonzero stays nonzero under inserts, so the indicator projections only ever receive inserts; without it, an insert could cancel a payload back to zero, and the tuple would have to be deleted above the border, which the insert-only maintenance of the top layer cannot do.

Above the border, we maintain $\Qtop$ using prior work~\cite{InsertsVSDeletes}:

\begin{proposition}[{\cite[Theorem 4.1]{InsertsVSDeletes}}]\label{prop:top-layer}
    A full join query $\Qtop$ over the Boolean semiring can be maintained under a sequence of $N$ single-tuple inserts to an initially empty database with $\bigO(N^{\fhtw(\Qtop) - 1})$ amortized update time and constant enumeration delay.
\end{proposition}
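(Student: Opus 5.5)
The statement is cited from prior work (Theorem 4.1 of the inserts-versus-deletes paper), so the natural approach is to reconstruct the argument behind it rather than derive it from results in this paper. The plan rests on a worst-case optimal, delta-based maintenance scheme for full joins over the Boolean semiring, together with an amortization argument that exploits the fact that, under inserts only, a Boolean output tuple appears at most once and never disappears.

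The first step fixes a tree decomposition of $\Qtop$ whose width is $\fhtw(\Qtop)$. For every bag $\chi$ we maintain the materialized bag relation $B_\chi$, namely the join of the atoms restricted to the variables of $\chi$, projected onto $\chi$. We also maintain, for every edge of the decomposition, a semijoin-reduced view in the style of a dynamic Yannakakis structure, so that the full join can be enumerated with constant delay from the reduced bag relations. Constant delay enumeration is standard once every bag is materialized and fully reduced against its neighbours, because the full join over an acyclic hypergraph of bags is then free-connex.

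The second step bounds the total work of keeping the bags up to date. When a tuple is inserted into $R$, we compute the new tuples of each bag containing $R$'s schema by running a worst-case optimal join of the delta with the other atoms restricted to the bag. This costs time proportional to the AGM bound of the residual query, which is at most $N^{\rho^*_\chi - 1}\le N^{\fhtw-1}$, where $\rho^*_\chi$ is the fractional edge cover number of bag $\chi$. The cover of the bag with one unit of weight removed from the updated edge bounds the residual size; when the updated edge has cover weight below $1$, we instead argue through the AGM bound of the bag with $|\delta R|=1$. Because we work over the Boolean semiring, a newly derived bag tuple that already exists is discarded and triggers no further propagation. Each bag tuple is therefore created at most once, the total number of bag tuples is $\bigO(N^{\fhtw})$, and the propagation work charged to the semijoin views is linear in the number of new bag tuples.

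Summing over the $N$ inserts gives $\bigO(N\cdot N^{\fhtw-1})$ total time, i.e.\ $\bigO(N^{\fhtw-1})$ amortized. The main obstacle is that $N$ is not known in advance, and that a single insert may produce a huge batch of bag tuples, so the bound holds only in amortized form. The first thing I would try is to charge each new bag tuple's propagation to its creation, and to bound creations globally by the AGM bound at the final size $N$. The costly part is the computation of the delta join itself, which needs the per-insert residual AGM argument above. If that fails for bags in which the updated atom has fractional weight below $1$, I would fall back to a doubling trick: rebuild the whole structure statically whenever $N$ doubles, which keeps the amortized bound.
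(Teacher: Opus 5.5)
\textbf{There is a genuine gap in the delta-join cost bound.} Your overall architecture is sound: a decomposition of width $\fhtw(\Qtop)$, materialized bags, an insert-only semijoin-reduced structure over the acyclic join of the bags, and Boolean monotonicity so that every bag tuple is created once. The gap is in bounding the cost of computing the bag deltas, which is the heart of the statement. The AGM bound of the residual query for a single insert is at most $N^{\rho^*_\chi-1}$ only when some optimal fractional edge cover puts weight $1$ on the updated atom. Otherwise the AGM bound with $|\delta R|=1$ is $N^{\min_\rho\sum_{S\neq R}\rho(S)}$, and this can be much larger.

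Take the triangle $\Qtop(A,B,C)=R(A,B)\cdot S(B,C)\cdot T(A,C)$. It has $\fhtw=3/2$, and every decomposition has a bag containing $ABC$. Let $S$ hold $(b,c_1),\ldots,(b,c_k)$ and $T$ hold $(a,c'_1),\ldots,(a,c'_k)$, with $k=\Theta(N)$. Inserting $R(a,b)$ then costs $\Theta(N)$ with your delta join, not $N^{1/2}$.
\begin{itemize}
\item If the $c_i$ and $c'_i$ coincide, the insert even creates $\Theta(N)$ new bag tuples.
\item If they are disjoint, it creates none, so the work cannot be charged to new bag tuples either.
\end{itemize}
Rebuilding whenever $N$ doubles does not help. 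It leaves unchanged the work each individual insert must do between rebuilds to keep the bags current for enumeration after every insert.

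\textbf{The missing idea} is a global bound on the delta-join costs summed over the whole sequence, not a per-insert bound. Fix an optimal fractional edge cover $\rho$ of the bag query, whose atoms are the projections $\pi_{\chi\cap\vars(S)}S$. Note also that an insert into $R$ affects every bag that meets $\vars(R)$, not only the bags containing it.

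Relations only grow, and duplicates are discarded in constant time. So each distinct tuple $t$ of the final projection of $R$ triggers exactly one delta join. Its worst-case optimal cost is $\bigO(1+\prod_{S\neq R}|S^{\mathrm{fin}}\ltimes t|^{\rho(S)})$, where $S^{\mathrm{fin}}$ is the final relation; this holds because $\rho$ restricted to $S\neq R$ still covers the residual variables.

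The query decomposition lemma, a H\"older-type inequality, then gives
$\sum_{t}\prod_{S\neq R}|S^{\mathrm{fin}}\ltimes t|^{\rho(S)}\le\prod_S|S^{\mathrm{fin}}|^{\rho(S)}\le N^{\rho^*_\chi}$.
For the triangle it reads $\sum_{(a,b)\in R}\sqrt{\deg_S(b)\deg_T(a)}\le\sqrt{|R|\,|S|\,|T|}$. Summing over atoms and bags yields $\bigO(N^{\fhtw(\Qtop)})$ total time for every prefix of the sequence. This gives the amortized bound without knowing $N$ in advance.

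The paper gives no proof of its own and takes the statement from the cited work. If you replace your per-insert argument with this global one, the rest of your reconstruction goes through.
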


This bound is in terms of $\Qtop$, the join whose atoms are the root views. The following lemma bounds $\fhtw(\Qtop)$ by $\fhtw(Q)$.

\begin{lemma}\label{lem:width-qprime}
    Let $Q$ be a p-hierarchical query and let $\Qtop$ be the join of the indicator projections over the root views of its forest of view trees. Then $\fhtw(\Qtop) \leq \fhtw(Q)$.
\end{lemma}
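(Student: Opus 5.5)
The plan is to take an optimal fractional hypertree decomposition of $Q$ and project every bag onto the free variables $\vect F$; the projected decomposition should be valid for $\Qtop$ with no larger width. By Prop.~\ref{prop:fhtw=fcfhtw}, $\fhtw(Q)$ equals the fractional hypertree width of the bound version of $Q$. So I may start from a tree decomposition $(\mathcal{T},\chi)$ of the plain hypergraph of $Q$, with no free-connex constraint, whose bags have fractional edge covers of weight at most $\fhtw(Q)$. The hypergraph of $\Qtop$ has vertex set $\vect F$ and one hyperedge $\vect Y$ per root view $V(\vect Y)$. By Prop.~\ref{lem:q-hier}, $\vect Y$ is the set of free variables of the corresponding subquery.

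\emph{Step 1: the projected tree is a tree decomposition of $\Qtop$.} Define $\chi'(t) = \chi(t)\cap\vect F$ on the same tree $\mathcal{T}$.
\begin{itemize}
\item \emph{Connectivity.} Restricting bags to a subset of the variables keeps, for each $X\in\vect F$, the same set of nodes containing $X$, so this set stays connected.
\item \emph{Covering of the edges $\vect Y$.} The subquery of $\vect Y$ is free-dominated (Prop.~\ref{lem:q-hier}), so any of its atoms $R(\vect X)$ satisfies $\vect Y\subseteq\vect X$.
\item Some bag $\chi(t)$ contains $\vect X$, hence $\chi'(t)\supseteq\vect Y$.
\end{itemize}

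\emph{Step 2: covers transfer with no increase in weight.} Fix a node $t$ and a fractional edge cover $\lambda$ of $\chi(t)$ by atoms of $Q$ of weight at most $\fhtw(Q)$. Each atom $R(\vect X)$ of $Q$ belongs to exactly one subquery, namely one variable order produced by Algorithm~\ref{alg:create-var-orders}. Let $\vect Y_R$ be the schema of that subquery's root view, and define $\lambda'(\vect Y) = \sum_{R:\,\vect Y_R=\vect Y}\lambda(R)$; the total weight is unchanged. It remains to check $\vect X\cap\vect F\subseteq\vect Y_R$, which gives that $\lambda'$ covers $\chi(t)\cap\vect F$. There are two cases.
\begin{itemize}
\item If $R$ has a bound variable, it lies in the atom set $\mathcal{A}$ of its tree. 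Line 8 adds all free variables occurring in $\mathcal{A}$, so every free variable of $R$ is in $\vect Y_R$.
\item If $R$ has only free variables, lines 13--14 give it its own subquery with $\vect Y_R=\vect X$.
\end{itemize}
Taking the maximum over $t$ gives $\fhtw(\Qtop)\le\fhtw(Q)$.

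The step I expect to need the most care is fixing which width notion $\fhtw(Q)$ refers to for a query with free variables. If it is defined via free-connex decompositions, the inequality follows a fortiori, since such a decomposition is in particular a tree decomposition and the above argument applies. Otherwise, Prop.~\ref{prop:fhtw=fcfhtw} lets me work with the plain, bound-version width, and the argument goes through unchanged. The other delicate point is the inclusion $\vect X\cap\vect F\subseteq\vect Y_R$, which relies on the exact behaviour of line 8 of Algorithm~\ref{alg:create-var-orders}.
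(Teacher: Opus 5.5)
Your proposal is correct and follows the same route as the paper's proof: you restrict every bag of an optimal decomposition of $Q$ to the free variables, use free-dominance of the subqueries to cover each root-view hyperedge, and move the fractional cover weight of each atom $R(\vect X)$ onto the root-view edge that contains $\vect X \cap \vect F$. The only difference is cosmetic: the paper starts from a minimum-width free-connex decomposition of $Q$, while you start from a decomposition of the bound version. As you note, that weaker starting point already gives the inequality a fortiori, so Prop.~\ref{prop:fhtw=fcfhtw} is not actually needed.
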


\subsection{Proof Sketch of Theorem~\ref{thm:main-upper-bound}}
\label{subsec:proof-sketch}

The proof of Theorem~\ref{thm:main-upper-bound} proceeds in three steps.

1. Below the border: by Prop.~\ref{thm:main-alg}, each of the $N$ single-tuple inserts updates one view tree of $\T$ in constant time and adds at most one tuple to its root view. The $N$ inserts hence translate into $\bigO(N)$ inserts into the relations of $\Qtop$.

2. Above the border: by Prop.~\ref{prop:top-layer}, these inserts are processed with $\bigO(N^{\fhtw(\Qtop) - 1})$ amortized update time each and $\Qtop$ is enumerated with constant delay, and $\fhtw(\Qtop) \leq \fhtw(Q)$ by Lemma~\ref{lem:width-qprime}, where $\fhtw(Q)$ equals the fractional hypertree width $\fhtw$ of the bound version of $Q$ by Prop.~\ref{prop:fhtw=fcfhtw}.

3. Enumeration: $Q$ and $\Qtop$ have the same result tuples, which can be enumerated with constant delay. While a tuple is emitted, its payload in $Q$ is recovered by projecting the tuple onto the schema of each root view of $\T$, looking up its payload there, and multiplying the payloads; these are at most $|\at(Q)|$ constant-time lookups, so the delay remains constant.

Altogether, $Q$ is maintained with $\bigO(N^{\fhtw - 1})$ amortized update time and constant enumeration delay, where $\fhtw$ is the fractional hypertree width of the bound version of $Q$, which proves Theorem~\ref{thm:main-upper-bound}.

{\noindent\bf Non-empty initial databases.} Theorem~\ref{thm:main-upper-bound} assumes an initially empty database. In practice, the database often already holds $M$ tuples when maintenance starts, and feeding them through the update procedure one at a time is inefficient. Instead, we preprocess the $M$ tuples in bulk and then maintain the result under the $N$ inserts. The bound of Theorem~\ref{thm:main-upper-bound} still holds: preprocessing the $M$ tuples and processing the $N$ inserts take $\bigO((N+M)^{\fhtw(Q) - 1})$ amortized time per tuple, over all $N + M$ tuples, and the enumeration delay stays constant.

The preprocessing computes every view directly from the $M$ tuples. Each view is a projection or an intersection and takes $\bigO(M)$ time. The structure that maintains $\Qtop$ consists of materialized views over a tree decomposition of $\Qtop$~\cite{InsertsVSDeletes} and takes $\bigO(M^{\fhtw(Q)})$ time by Lemma~\ref{lem:width-qprime}.
The preprocessed state is exactly the state that $M$ single-tuple inserts to an empty database would produce. The analysis of Theorem~\ref{thm:main-upper-bound} hence applies from this state on: it charges $\bigO((N+M)^{\fhtw(Q) - 1})$ time to each of the $N + M$ tuples, whether preprocessed or inserted. The preprocessing itself takes $\bigO(M^{\fhtw(Q)})$ time, that is, $\bigO(M^{\fhtw(Q) - 1})$ per preprocessed tuple, within the same charge. Together, this gives the claimed bound.

A further common use case keeps the state of the database and maintenance data structure after a sequence of inserts. A subsequent insert sequence can be processed directly on the current state as if it were part of the previous insert sequence and therefore with the same guarantees as in Theorem~\ref{thm:main-upper-bound}.

\nop{\subsection{Augmenting View Trees}

In the previous section, we showed how to decompose a p-hierarchical query $Q$ into q-hierarchical subqueries, and how to construct a forest of view trees $\T$ such that $Q$ is the join of the root views of these view trees. Now consider the query that results from this decomposition, $Q(\vect X) = \prod_{V(\vect Y) \in \roots(\T)} V(\vect Y)$. We can augment each view tree by adding an indicator projection $\mathbbm{1}_{V}(\vect Y)$ atop each root view $V(\vect Y)$. Joining these indicator projections yields the join query $\Qtop$ that can be maintained over the Boolean semiring using prior work~\cite{InsertsVSDeletes}, as long as we restrict ourselves to the insert-only setting. In particular, all tuples in $\Qtop$ have non-zero payload if and only if they have non-zero payload in $Q$. Thus, when we enumerate a tuple $t$ from $\Qtop$ with a non-zero payload, we can recover its payload in $Q$ by simply projecting each $t$ onto the schema of each indicator projection, and looking up its payload in the view below it in the forest of augmented view trees. Multiplying these payloads and replacing each payload of a tuple in $\Qtop$ with the respective product recovers the correct payloads of the tuples in $Q$, as if the entire query was maintained over the same semiring.}

\section{Lower Bounds for Non-P-Hierarchical Queries}
\label{sec:lower-bounds}

We now turn our attention to the lower bound result from Theorem~\ref{thm:main-lower-bound}. Its proof first shows that non-p-hierarchical queries over the natural and the tropical semiring cannot be maintained with $\bigO(N^{\frac{1}{2}-\gamma})$ amortized update time and $\bigO(N^{\frac{1}{2}-\gamma})$ enumeration delay for any $\gamma>0$, unless the OMv conjecture fails. It is then generalized to one of the two families of $\mathcal{K}$ (Def.~\ref{def:calK}): to the semirings with an injective homomorphism from the natural semiring, and to the idempotent strictly ordered semirings. The section presents each reduction on the query $Q_2$ of Ex.~\ref{ex:p-hier-query}; Appendix~\ref{sec:appendix-lower-bounds} gives the reductions for arbitrary non-p-hierarchical queries.

Our lower bounds rely on the hardness of the Online Matrix-Vector Multiplication (OMv) problem. 
The OMv conjecture is widely used to prove conditional lower bounds~\cite{QHierarchical, IVMeps:ICDT:2019,OMV:STOC:2015,DynamicWidth}.

\begin{definition}[Online Matrix-Vector Multiplication (OMv)~\cite{OMV:STOC:2015}]
    We are given an $n\times n$ Boolean matrix $\vect M$ and receive $n$ Boolean column vectors $\mathbf{v}_1, \dots, \mathbf{v}_n$ of size $n$, one by one; after seeing each vector $\vect v_i$, we output the product $\vect M\vect v_i$ before we see the next vector.
\end{definition}

It is strongly believed that the OMv problem cannot be solved in subcubic time.

\begin{conjecture}[OMv conjecture, Theorem 2.2~\cite{OMV:STOC:2015}]\label{conj:omv}
    For any $\gamma > 0$, there is no algorithm that solves this problem in time $\bigO(n^{3-\gamma})$.
\end{conjecture}

In the lower bound proofs, we reduce the OMv problem for an $n \times n$ Boolean matrix to the maintenance of a query under a sequence of $N = \bigO(n^2)$ single-tuple inserts to an initially empty $K$-database. If there were an algorithm maintaining a non-p-hierarchical query with $\bigO(N^{\frac{1}{2}-\gamma})$ amortized update time and $\bigO(N^{\frac{1}{2}-\gamma})$ enumeration delay, for some $\gamma>0$, we could use this algorithm to solve the OMv problem in time $\bigO(n^{3-2\gamma})$, which is subcubic.

\medskip

{\noindent \bf From Insert-Delete to Insert-Only.}
Prior lower bound proofs based on reductions from OMv were for the insert-delete setting, e.g.,~\cite{QHierarchical,IVMeps:ICDT:2019,DynamicWidth}. We recall their idea for $Q(X) = \sum_{Y} S(X,Y) \cdot T(Y)$, the simplest query that violates the bound-free interaction of Def.~\ref{def:p-hierarchical}. The reduction encodes the matrix into $S$ once, inserting $S(i,j) \mapsto 1$ for every $\vect M_{ij} = 1$. When the vector $\vect v_k$ arrives, it deletes the contents of $T$, inserts $T(j) \mapsto 1$ for every $\vect v_k[j] = 1$, and enumerates the query answers: the answer for $X = i$ is $\sum_{j} \vect M_{ij} \vect v_k[j] = (\vect M \vect v_k)[i]$, the $i$-th entry of the required product.

Our setting forbids the deletes: the reduction can only keep inserting into $T$, so the tuples corresponding to the earlier vectors remain, and the query answer mixes the current vector with all previous ones. Our reductions keep the inserts and neutralize the effect of the old tuples instead of deleting them. The natural and the tropical semiring offer two different mechanisms for this, which we present next. The bound-bound violation, on $Q_1$ of Ex.~\ref{ex:p-hier-query}, is handled by the same two mechanisms (Appendix~\ref{sec:appendix-lower-bounds}).

\nop{Prior lower bound proofs based on reductions from OMv/OuMv were for the insert-delete setting, e.g.,~\cite{QHierarchical,IVMeps:ICDT:2019,DynamicWidth}, whereas our proofs are for the more restricted insert-only setting. For the insert-delete setting, the OMv problem is encoded dynamically into the database: at each step, as a new vector $\vect v_k$ arrives, the reduction deletes the contents of the relation representing the previous vector and inserts the new one. Because our framework strictly prohibits deletions, we must encode the OMv steps differently. We bypass the need for deletions by continually inserting the exact same tuples into the database, but with carefully engineered payloads. However, because older tuples are not deleted, the relation's overall state becomes corrupted by prior insertions. To extract the correct answer for the current step, we must logically filter out this prior data using different algebraic mechanisms depending on the underlying semiring.}

\medskip

{\noindent \bf Natural Semiring: Accumulation and Differencing.}
Over the natural semiring, payloads accumulate additively: at step $k$ the relation $T$ represents the cumulative vector $\vect t_k = \sum_{m=1}^{k} \vect v_m$, and the answer for $X = i$ is $(\vect M \vect t_k)[i]$, the product over the cumulative vector instead of the current one. The required entry is the difference of two consecutive answers, $(\vect M \vect v_k)[i] = (\vect M \vect t_k)[i] - (\vect M \vect t_{k-1})[i]$, and the OMv output bit at $i$ is $1$ iff this entry is nonzero. The reduction\footnote{Xiao Hu previously presented a similar reduction for $Q() = \sum_{X,Y}R(X)\cdot S(X,Y)\cdot T(Y)$ and $Q(X) = \sum_Y S(X,Y)\cdot T(Y)$ under inserts-only and over $K$-databases where $K$ is  the natural semiring (\url{https://simons.berkeley.edu/talks/xiao-hu-university-waterloo-2023-09-29-0}). The proof in our paper was developed independently. After seeing an earlier draft of our work at the beginning of June 2026, Qichen Wang pointed us to Hu’s presentation.} thus stores the answers of the previous step and reports the bit as the result of the equality test $(\vect M \vect t_k)[i] \neq (\vect M \vect t_{k-1})[i]$.

This idea generalizes to every semiring $K$ with an injective homomorphism $h$ from the natural semiring (Def.~\ref{def:embedding}). The reduction over $K$ mirrors the reduction over $\N$: the inserts use payload $h(1)$ in place of $1$, and then, since $h$ preserves sums and products, every payload and every query answer is the element $h(m)$ in place of the value $m$. The recovery performs the same additions and equality tests on the answers over $K$; since $h$ is injective, the reduction over $K$ reports exactly the same bits (Appendix~\ref{sec:lowerbound-natural}).

\nop{In the natural semiring, payloads accumulate additively. At each step $k$, we insert tuples corresponding to the indices $i$ such that $\vect v_k[i] = 1$ with a payload of $1$. Because previous insertions cannot be deleted, the target relation represents the cumulative sum of all vectors seen so far, $\mathbf{t}_k = \sum_{i=1}^k \mathbf{v}_i$. Consequently, enumerating the query computes the matrix-vector product over this cumulative state, yielding $\mathbf{M}\mathbf{t}_k$. We bypass the lack of deletions by taking the algebraic difference between successive steps: subtracting the evaluated vector of step $k-1$ from step $k$ perfectly isolates the result for the current step, $\mathbf{M}\mathbf{v}_k$~\footnote{A proof similar in spirit for this reduction was presented by Xiao Hu on Sep 29, 2023 (\url{https://simons.berkeley.edu/talks/xiao-hu-university-waterloo-2023-09-29-0})}.}

\medskip

{\noindent \bf Tropical Semiring: Dominance via Minimization.}
Let $\mathbb{T} = (\R \cup \{\infty\}, \min, +, \infty, 0)$ denote the tropical semiring; its additive identity is $0_{\mathbb{T}} = \infty$ and its multiplicative identity is $1_{\mathbb{T}} = 0$. Over the tropical semiring, the sum is $\min$: stale payloads do not accumulate, and no differencing can remove them. The reduction instead encodes freshness in the payloads: the tuples inserted at step $k$ get the payload $\Delta_k$, where $\Delta_1 > \Delta_2 > \cdots$ is strictly decreasing (for instance $\Delta_k = -k$), so a fresher tuple always carries a strictly smaller payload. Concretely, the matrix is encoded with the multiplicative identity, $S(i,j) \mapsto 0$ for every $\vect M_{ij} = 1$, and step $k$ inserts $T(j) \mapsto \Delta_k$ for every $\vect v_k[j] = 1$. The answer for $X = i$ is $\min_{j} S(i,j) + T(j)$, which is the minimum of $T(j)$ over the indices $j$ with $\vect M_{ij} = 1$. If $(\vect M \vect v_k)[i] = 1$, some index $j$ has $\vect M_{ij} = 1$ and $\vect v_k[j] = 1$, so step $k$ inserts $T(j) \mapsto \Delta_k$ for this $j$. The payload $\Delta_k$ is the smallest so far, so the answer is exactly $\Delta_k$. If $(\vect M \vect v_k)[i] = 0$, then $\vect v_k[j] = 0$ for every index $j$ with $\vect M_{ij} = 1$, so step $k$ inserts none of these $T(j)$: each of them keeps a payload from an earlier step or stays absent, and the answer is at least $\Delta_{k-1} > \Delta_k$. The reduction thus reports the OMv output bit at $i$ as the result of the equality test against $\Delta_k$.

This idea generalizes to every idempotent strictly ordered semiring $K$ (Def.~\ref{def:temporal-chain-semiring}). First, a fresh payload must overwrite a stale one: for our chain, the sum of two payloads is the larger one (with respect to the natural order). Second, every step needs a payload that dominates all earlier ones: the infinite chain $0_K = c_0 \prec_K c_1 \prec_K \cdots$ supplies one per step. Third, we need to distinguish between a join in which a tuple still has a stale payload $(c_i\cdot c_{i+1})$ and a join where all tuples are annotated with fresh payloads $(c_{i+1}\cdot c_{i+1})$. The reduction translates the payloads, $1_K$ in place of $0$ and $c_k$ in place of $\Delta_k$ (Appendix~\ref{sec:lowerbound-tropical}).

\nop{In the tropical semiring, where the operators are minimum and addition, we cannot use subtraction to isolate the current step. Instead, we simulate state replacement by encoding ``freshness'' directly into the payloads. At each step $k$, we insert the indices $i$ such that $\vect v_k[i]=1$ with a strictly decreasing, negative payload $\Delta_k$ (where $\Delta_k < \Delta_{k-1}$). Because the semiring's summation acts as a minimum operator, the lowest-cost path through the joins will always be dominated by the most recently inserted tuples. Stale tuples from earlier steps retain larger payloads and are naturally filtered out by the minimization. By simply checking if the query's output exactly matches the expected combined payload of fresh tuples (for instance, $2\Delta_k$), we can determine if a valid OMv intersection exists at the current step, rendering the older data effectively invisible.}

\section{Conclusion}

In this paper, we study the role of semirings on the incremental view maintenance problem for conjunctive queries. We show that a conjunctive query without self-joins can be maintained under single-tuple inserts to an initially empty $K$-database with constant amortized update time and constant enumeration delay if and only if it is p-hierarchical $\alpha$-acyclic, for $K\in\mathcal{K}$. The ``only if'' result is conditional on the OMv and hyperclique conjecture. We also give a maintenance procedure for p-hierarchical queries and show that the amortized update time is $\bigO(N^{\fhtw-1})$, where $N$ is the length of the sequence of inserts and $\fhtw$ is the fractional hypertree width of the bound version of the query.

\nop{\haozheupdate{Our results assume an initially empty database. If the database is not empty at the start, its tuples can be inserted one at a time before the actual sequence of inserts, at the cost of one insert each; all bounds then hold with $N$ counting the initial tuples as well as the subsequent inserts.}}

There are two immediate lines of future work worth exploring:
\begin{itemize}
    \item Our understanding of optimality for IVM remains severely limited, as establishing optimality requires corresponding lower bounds. Recent work~\cite{Qichen:PODS:2026} introduced new conditional lower bounds for IVM beyond those conditioned on the OMv problem. A natural step is to define classes of queries whose maintenance complexity matches the new lower bounds for various semirings.

    \item One challenge brought by our maintenance algorithm for p-hierarchical queries concerns the enumeration of the query result: Whereas the entire query result can be enumerated with constant delay after each insert, it remains open whether the delta of the query result can also be enumerated with constant delay.
\end{itemize}

\bibliographystyle{plainurl}
\bibliography{bibliography}

\appendix
\section{Missing Preliminaries}
\label{sec:appendix-prelims}

{\noindent \bf Hypergraphs, Tree Decompositions, and Widths.} A \emph{hypergraph} $\H$ is a pair $\H=(\V,\E)$, where $\V$ is a set of vertices, and $\E\subseteq2^{\V}$ is a set of hyperedges; each hyperedge $Z\in\E$ is a subset of $\V$. Given a CQ $Q$, the \emph{hypergraph of $Q(\vect F)$} is a \emph{query hypergraph} $\H=(\V,\E,\vect F)$, where $\V:=\vars(Q)$ and $\E:=\set{Z\;|\;R(Z)\in\at(Q)}$. We identify the query $Q(\vect F)$ with its query hypergraph $\H$. For a subset $S\subseteq\V$, we use $Q[S] = (S,\E[S])$ to denote the subquery induced by $S$, where $\E[S] = \set{e\cap S\mid e\in\E}$. 

Given the query hypergraph $\H$ of $Q(\vect F)$, a \emph{tree decomposition} (TD) for $\H$ is a pair $(T,\chi)$, where $T$ is a tree, and $\chi:\nodes(T)\rightarrow2^{\V}$ such that:
\begin{itemize}
    \item for every hyperedge $Z\in\E$, there exists $t\in\nodes(T)$ with $Z\subseteq \chi(t)$;
    \item for every vertex $X\in\V$, the set $\set{t\in\nodes(T)\;|\;X\in \chi(T)}$ form a connected subtree of $T$. 
\end{itemize}

Each set $\chi(t)$ is called a \emph{bag} of the TD. A TD $(T,\chi)$ is \emph{free-connex} if there is a connected subtree $S$ of $T$ such that $\bigcup_{t\in\nodes(S)} \chi(t) = \vect F$, i.e.,~the union of the variables appearing in $S$ is exactly the set of free variables. Let $\FTD(\H)$ denote the set of free-connex TDs of $\H$. Note that for any conjunctive query $Q$, all the TDs of the bound version of $Q$ are trivially free-connex. 

A \emph{fractional edge cover} is a function $\rho : \E \to [0, 1]$ such that for every vertex $v \in \V$, the following condition holds (every vertex is covered):
\[
    \sum_{e \in \mathcal{E} : v \in e} \rho(e) \geq 1.
\]
The \emph{fractional edge cover number} of $Q$, denoted $\rho^*(Q)$, is the minimum total weight over all valid fractional edge covers:
\[
    \rho^*(Q) = \min_{\rho} \sum_{e \in \mathcal{E}} \rho(e).
\]
The \emph{width} of $(T,\chi)$ is defined as 
\[
\width(T,\chi) = \max_{t\in\nodes(T)} \rho^*(Q[\chi(t)]).
\]

\begin{definition}[Fractional Hypertree Width ($\fhtw$)]
    For any query $Q(\vect F) = (\V,\E)$, its fractional hypertree width $\fhtw$ is defined as:
    \[
    \fhtw(Q(\vect F)) = \min_{(T,\chi)\in\FTD(Q(\vect F))} \width(T,\chi).
    \]
\end{definition} 

We call a query $Q$ \emph{acyclic} iff there exists a tree decomposition for the hypergraph of $Q$ in which every bag is covered by an input atom. Finally, we present the definition of free-connex acyclic queries. 

\begin{definition}[Free-Connex Acyclic]
$Q$ is free-connex acyclic if $Q$ is acyclic and the modified query $Q^+$, constructed by adding a single new atom containing exactly the free variables of $Q$, remains acyclic.
\end{definition}

{\noindent \bf Semirings in Table~\ref{tab:semiring-table}.} Recall the covariance semiring $((\mathbb{N}, \mathbb{N}^m, \mathbb{N}^{m \times m}), +_K, \cdot_K)$, and let $\mathbf{D} = (\mathbb{N}, \mathbb{N}^m, \mathbb{N}^{m \times m})$. Then for $a = (c_a, \vect s_a, Q_a) \in \mathbf{D}$ and $b = (c_b, \vect s_b, Q_b) \in \mathbf{D}$, $a +_K b = (c_a + c_b, \vect s_a + \vect s_b, Q_a + Q_b)$ and $a \cdot_K b = (c_a c_b, c_b \vect s_a + c_a \vect s_b, c_b Q_a + c_a Q_b + \vect s_a \vect s_b^T + \vect s_b \vect s_a^T)$. In Table~\ref{tab:semiring-table}, it is stated that the covariance semiring over its full domain is not zero-divisor free. We will demonstrate this with an example and explain how this scenario does not occur in our context.

\begin{example} Let $a = (0, (0), (1))$ and $b = (0, (0),(2))$. Then $a \cdot_K b = (0, (0), (0)) = 0_K$, and yet $a, b \neq 0_K$, so the covariance semiring over its full domain is not zero-divisor free.
\end{example}
In previous database applications~\cite{FIVM:SIGMOD:2018,FIVM}, the payload of each tuple in a relation is mapped to the identity $1_K = (1, 0^{m \times 1}, 0^{m \times m})$. Without loss of generality, $\vars(Q)$ can be labeled $X_1, ..., X_{m}$. For each variable $X_j$ and for each $x \in \Dom(X_j)$, the lifting function is $g_{X_j}(x) = (1, \vect s, Q)$ where $\vect s$ is an $m \times 1$ vector of all zeros except $\vect s_j = x$, and $Q$ is a matrix of all zeros except $Q_{j,j} = x^2$. During the marginalization of $X_j$, each $X_j$-value $x$ is lifted and triples with the same key are summed up. In such applications, for a triple $(c, \vect s, Q) \in \mathbf{D}$, it can be inductively shown that $c = 0 \to \vect s = 0^{m \times 1} \land Q = 0^{m \times m}$. (This invariant holds initially and is preserved by lifting, addition, and multiplication.) For $a = (c_a, \vect s_a, Q_a) \in \mathbf{D}$ and $b = (c_b, \vect s_b, Q_b) \in \mathbf{D}$, if $a \cdot_K b = 0_K$, then $c_a \cdot c_b = 0$, and so $c_a = 0$ or $c_b = 0$, and $a = 0_K$ or $b = 0_K$, and so the covariance semiring in our context is zero-divisor free.

\begin{proposition}\label{prop:boolean-insert-only}
Over the Boolean semiring, an $\alpha$-acyclic conjunctive query without self-joins admits tractable maintenance under inserts if and only if it is free-connex, unless the Boolean matrix multiplication conjecture fails.
\end{proposition}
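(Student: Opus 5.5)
The plan has two directions, matching the sketch in the introduction.

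\emph{Upper bound (if).} If $Q$ is free-connex $\alpha$-acyclic, we invoke CROWN~\cite{CROWN}. Over the Boolean semiring it maintains $Q$ under single-tuple inserts with constant amortized update time and constant enumeration delay. The only check needed is that CROWN's insert-only setting and complexity measures coincide with ours in Sec.~\ref{sec:prelims}: an initially empty database, amortization over the whole insert sequence, and the delay definition of~\cite{Free-connex:CSL:2007}. Because the semiring is Boolean, payloads are irrelevant, since every present tuple has payload $\mathsf{true}$. So enumerating the tuples of the result suffices.

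\emph{Lower bound (only if).} Let $Q$ be $\alpha$-acyclic, without self-joins, and not free-connex. Suppose, for contradiction, that an algorithm $\mathcal{A}$ maintains $Q$ with constant amortized update time and constant enumeration delay. We turn $\mathcal{A}$ into a static evaluation algorithm as follows. Given a static Boolean database $D$ of size $M$, we start from the empty database and feed the $M$ tuples of $D$ to $\mathcal{A}$ one at a time, in any order, each as an insert with payload $\mathsf{true}$.

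By the amortized bound, the total time is $M \cdot \bigO(1) = \bigO(M)$. Afterwards $\mathcal{A}$ enumerates $Q(D)$ with constant delay. This is linear-time preprocessing followed by constant-delay enumeration for the static evaluation of $Q$. Bagan, Durand and Grandjean~\cite{Free-connex:CSL:2007} show that this is impossible for acyclic, self-join-free, non-free-connex queries unless Boolean matrix multiplication can be done in $\bigO(n^2)$ time, i.e., unless the Boolean matrix multiplication conjecture fails. This contradiction proves the claim.

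\emph{Main obstacle.} The delicate step is aligning the two models. First, \cite{Free-connex:CSL:2007} works in the RAM model with linear preprocessing measured in $|D|$, and we must confirm that our $N$ equals $|D| = M$ up to constants. Second, their lower bound is conditional on Boolean matrix multiplication in $\bigO(n^2)$ time; in their reduction, the query result size is at most quadratic in $n$. Third, the amortized guarantee must hold for every insert sequence, including the specific one we feed, and the data structure must be built without knowing $N$ in advance. Our model explicitly allows this, since neither $N$ nor the inserts are known beforehand. Finally, since the dichotomy is restricted to $\alpha$-acyclic queries, the cyclic case never arises and no hyperclique assumption is needed.
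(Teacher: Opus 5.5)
Your proposal is correct and follows the paper's own proof. The upper bound comes from CROWN. For the lower bound, you build the data structure for a static database $D$ by $|D|$ inserts into the initially empty database, in $\bigO(|D|)$ total time, and then enumerate with constant delay; this contradicts the static lower bound of Bagan et al.\ for $\alpha$-acyclic, self-join-free, non-free-connex queries unless the Boolean matrix multiplication conjecture fails. Your model-alignment checks are sound and spell out slightly more than the paper does: the size measure, the insert sequence not being known in advance, and no hyperclique assumption being needed since the queries are acyclic.
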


\begin{proof}
The upper bound is due to CROWN~\cite{CROWN}. For the lower bound, consider a non-free-connex $\alpha$-acyclic query $Q$ without self-joins, and suppose that $Q$ admits tractable maintenance under inserts. Given a static database $D$ of size $N$, we can construct the data structure for $Q(D)$ by inserting the $N$ tuples of $D$ into an initially empty database. This takes $\bigO(N)$ total time, after which the tuples in $Q(D)$ can be enumerated with constant delay. This contradicts the static lower bound of Bagan et al.~\cite{Free-connex:CSL:2007}, unless the Boolean Matrix Multiplication conjecture fails.
\end{proof}

\section{Proof of Corollary~\ref{cor:dichotomy}}

Theorem~\ref{thm:main-lower-bound} states that for any conjunctive query $Q$ without self-joins, and sequence of $N$ inserts to an initially empty $K$-database, where $K\in\mathcal{K}$, if $Q$ is not p-hierarchical, then for any $\gamma > 0$ there is no algorithm that maintains $Q$ with $\bigO(N^{\frac{1}{2}-\gamma})$ amortized update time and $\bigO(N^{\frac{1}{2}-\gamma})$ enumeration delay, unless the OMv conjecture fails. However, $Q$ can be cyclic, i.e., not $\alpha$-acyclic, and also p-hierarchical. Thus, we have to also argue that cyclic queries are not tractable. This follows from the following fact.  Assuming the Hyperclique conjecture, for a database $D$ and conjunctive query $Q$ with no self-joins, if $Q$ is cyclic, then $Q$ does not admit enumeration with $O(|D|)$ preprocessing time and $O(1)$ delay \cite[Theorem~3]{CheatersLemma}. This lower bound carries over to the dynamic insert-only setting: if $Q$ admitted amortized $O(1)$ time per insertion and $O(1)$ enumeration delay, then, given an arbitrary static database $D$, we could start from the empty database, insert all $|D|$ tuples of $D$ in total time $O(|D|)$, and subsequently enumerate $Q(D)$ with $O(1)$ delay, contradicting the static lower bound. Together with the upper bound stated in Theorem~\ref{thm:main-upper-bound}, this gives the required dichotomy. 

\section{Missing Details for Section~\ref{sec:p-hierarchical}}

In this section, we  present the proofs for the propositions given in Section~\ref{sec:p-hierarchical}.

\begin{proof}[Proof of Prop.~\ref{prop-p-hierarchical-decomposition}] The proof relies on Algorithm~\ref{alg:create-var-orders} and lemmas, which are introduced later in the appendix.

[$\Rightarrow$] By Lemma~\ref{lem:creates-forest}, given a p-hierarchical query $Q$, Algorithm~\ref{alg:create-var-orders} produces extended canonical variable orders. By Prop.~\ref{lem:q-hier}, each of them is the variable order of a free-dominated q-hierarchical subquery of $Q$, and $Q$ is the full join of these subqueries, which gives us the rewriting of $Q$ as desired.

[$\Leftarrow$] Consider a query $Q(\vect F)$ that is not p-hierarchical. Assume that $Q$ violates condition (1), that is, there are two bound variables $X,Y\in\Bound(Q)$ such that $\at(X)\cap\at(Y)\neq\emptyset$, $\at(X)\setminus\at(Y)\neq\emptyset$, and $\at(Y)\setminus\at(X)\neq\emptyset$. Consider a decomposition $Q(\vect F) = Q_1(\vect F_1)\cdot\ldots\cdot  Q_k(\vect F_k)$, where $Q_1,\dots,Q_k$ are all q-hierarchical and free-dominated. Since, $X,Y$ are bound variables of $Q$, there must be one subquery $Q_i(\vect F_i)$ that contains both $\at(X)$ and $\at(Y)$. However, this subquery cannot be q-hierarchical. 

Now, assume that $Q$ violates condition (2). Thus, there are variables $X\in\Bound(Q)$ and $Y\in\Free(Q)$ such that $\at(X)\cap\at(Y)\neq\emptyset$ and $\at(X)\setminus\at(Y)\neq\emptyset$. Again, consider a decomposition $Q(\vect F) = Q_1(\vect F_1)\cdot\ldots\cdot Q_k(\vect F_k)$, where $Q_1,\dots,Q_k$ are all q-hierarchical and free-dominated. There must be a subquery $Q_i(\vect F_i)$ that contains all the atoms in $\at(X)$. Since, $\at(X)\cap\at(Y)\neq\emptyset$ we have that $Y\in\vect F_i$. However, because $\at(X)\setminus\at(Y)\neq\emptyset$, $Q_i$ cannot be free-dominated. 
\end{proof}

\begin{proof}[Proof of Prop.~\ref{prop:fhtw=fcfhtw}]
    Let $Q(\vect F)$ be a p-hierarchical query. Any free-connex TD of $Q$ is also a free-connex TD of the bound version of $Q$. We argue that any optimal TD of the bound version of $Q$ can be transformed into a free-connex TD of $Q$ without increasing the fractional edge cover of any bag. 

    We start by decomposing $Q$ into free dominated q-hierarchical queries $Q(\vect F) = Q_1(\vect F_1)\cdot\ldots\cdot Q_k(\vect F_k)$ using Prop.~\ref{prop-p-hierarchical-decomposition}. Now, let $(T,\chi)$ be an optimal TD of $Q$ such that its width is exactly $w = \fhtw(Q)$. Consider the TD $(T',\chi')$, where $T'=T$ and $\chi'(t) = \chi(t)\cap\Free(Q)$. It follows that $\width(T',\chi')\leq w$ and $(T',\chi')$ still respects the connectedness property of TDs, but it might not cover all the atoms of $Q$. This tree will act as the required connected subtree that contains only free variables. 

    Now, for each subquery $Q_i(\vect F_i)$, we argue that $\fhtw(Q_i) \leq \fhtw(Q)$. Let $(T_i,\chi_i)$ be a TD, where $T_i = T$ and $\chi_i(t) = \chi(t)\cap \vars(Q_i)$. We show that the fractional edge cover of any bag $\chi_i(t)$ is at most $w$. Let $\rho$ be an optimal edge cover of bag $\chi(t)$, $t\in\nodes(T)$ with $\sum \rho(e) \leq w$. We aim to construct an edge cover $\rho'_i$ for the bag $\chi_i(t)$ that does not use edges $e\notin\at(Q_i)$. 
    
    By the definition of a p-hierarchical decomposition, any external atom $e\notin \at(Q_i)$ contains no bound variables of $Q_i$. Therefore, its contribution to the bag $\chi_i(t)$ is restricted entirely to the free variables $e\cap\vars(Q_i)\subseteq\Free(Q_i)$. Because $Q_i$ is a free-dominated subquery, every atom in $Q_i$ contains all of $\Free(Q_i)$. So, we can move that weight to any other edge that contains all the free variables. Pick an arbitrary edge $e^*\in\at(Q_i)$ such that $e^*$ covers $\vect F_i$. We define $\rho'$ exclusively over $\at(Q_i)$:
    \[
    \rho'(e) = 
    \begin{cases}
        \rho(e) + \sum_{e'\notin\at(Q_i)}\rho(e')& \text{if } e = e^*\\
        \rho(e) & \text{otherwise} 
    \end{cases}
    \]
    For any bound variable $X\in\Bound(Q_i)$, its coverage is preserved because it was only ever covered by atoms in $\at(Q_i)$. For any free variable $Y\in\Free(Q_i)$, the exact weight lost by discarding external atoms is transferred to $e^*$, which contains $Y$. Thus, $\rho'$ is a valid edge cover for $\chi_i'(t)$ with $\sum \rho'_i(e) \leq w$, so we conclude that $\fhtw(Q_i) \leq \fhtw(Q)$. 
    
    Furthermore, the TD $(T_i,\chi_i)$ must contain a bag with $\chi_i(t)\supseteq \vect F_i$. Using this node, we can attach the subtree $T_i$ to $T'$ by adding an edge between this node $t\in\nodes(T_i)$ to a node $t'\in\nodes(T')$ with $\chi'(t')\supseteq F_i$ while preserving the connectedness property of TDs for the free variables. The connectedness for the bound variables is guaranteed by each individual TD $T_i$. We do the same for all TDs to obtain a new TD $(T^*,\chi^*)$ for $Q$.
    
    The subtree $T'$ ensures the free-connex property and since every atom of $Q$ belongs to some subquery $Q_i$, it must be covered by some bag in $T_i$, so it is also covered by some bag in $T^*$. Finally, since every bag in $T'$ and $T_i$ has a fractional edge cover of at most $w$, the width of $(T^*,\chi^*)$ is at most $w$. Therefore, $(T^*,\chi^*)$ is a free-connex TD for $Q$ with $\width(T^*,\chi^*) = w$. 
\end{proof}

\section{Missing Details for Section \ref{sec:upper-bound}}
\label{sec:appendix-upper-bounds}

In this section, we provide the missing details for the construction of the view trees (Sec.~\ref{subsec:phase-1}). We also give the algorithm that transforms the variable orders into view trees. Then we give the proofs of correctness for our maintenance plan.

\subsection{Correctness of Algorithm~\ref{alg:create-var-orders}}

\nop{
\subsection{Missing Details for the Variable Order Construction}

Here we provide an example of how Algorithm~\ref{alg:create-var-orders} produces extended canonical variable orders and, in the process, derives a free-dominated q-hierarchical decomposition of the original query.

\begin{example}\label{ex:extended-var-orders}
    Continuing Ex.~\ref{ex:bound-var-order-construction}, Algorithm~\ref{alg:create-var-orders} now places the free variables on top of each tree and extends it with the corresponding atoms. In this example, $T_1$ and $T_2$ denote the two trees of bound variables built by lines 1-4, that is, the variable orders of Fig.~\ref{fig:create-var-orders-example} (left) without their free variables. We begin by iterating over these two trees. The tree $T_1$ has (bound) variables $\V(T_1) = \{B, D, E, F, G\}$. This implies the corresponding query has atoms $\at(\V(T)) = \{R_1(A,B,D,E), R_2(A,B,D,F), R_3(A,B,G)\}$. To extend the variable order, we must add each of these three atoms as leaves of $T_1$. Each atom is added as a leaf of the lowest bound variable in $T_1$ which occurs in its schema. For instance, $R_1$ has bound variables $B$, $D$, and $E$. $B$ occurs in the first level of $T_1$, $D$ occurs in the second level, and $E$ occurs in the third level, so we add $R_1(A,B,D,E)$ as a leaf of $E$. Similarly, $R_2$ has bound variables $B$, $D$, and $F$. $F$ occurs in the third level of $T_1$, so we add $R_2(A,B,D,F)$ as a leaf of $F$. We repeat this process for $R_3(A,B,G)$, which is added as a leaf of $G$. $T_2$ has only the variables $\V(T_2) = \{H\}$, and $\at(\V(T_2)) = \{R_5(A,C,H)\}$. Because $H$ is the only bound variable in $R_5$, we trivially add $R_5(A,C,H)$ as a leaf of $H$. Lines 6-7 place the free variable $A$ above $B$, and $A$ and $C$ above $H$. The result is the pair of extended canonical variable orders shown in Fig.~\ref{fig:create-var-orders-example} (left).
\end{example}

\begin{example}\label{ex:subquery-join}
    Continuing Ex.~\ref{ex:extended-var-orders}, consider the variable orders produced by Algorithm~\ref{alg:create-var-orders}, which are shown in Fig.~\ref{fig:create-var-orders-example} (left). 
    Consider the leftmost variable order. This corresponds to the subquery which is the join of the atoms $R_1(A,B,D,E)$, $R_2(A,B,D,F)$, and $R_3(A,B,G)$, with all bound variables marginalized out. The bound variables $\{B, D, G, E, F\}$ are precisely the inner nodes below the free variable $A$. Thus, the tree corresponds to the free-dominated q-hierarchical subquery $Q_1(A) = R_1(A,B,D,E) \cdot R_2(A,B,D,F) \cdot R_3(A,B,G)$, Similarly, the variable order in the middle, with the bound variable $H$, corresponds to the free-dominated q-hierarchical subquery $Q_2(A,C) = R_5(A,C,H)$. $Q$ has only one atom, $R_4(C)$, where $C$ is a free variable. This atom corresponds to the third free-dominated q-hierarchical subquery, $Q_3(C) = R_4(C)$. As shown in Ex.~\ref{ex:query-rewriting}, $Q$ can be expressed as the join of $Q_1$, $Q_2$, and $Q_3$.
\end{example}
}

\begin{lemma}\label{lem:creates-forest}
    Given a p-hierarchical query $Q$, Algorithm~\ref{alg:create-var-orders} creates extended canonical variable orders over disjoint bound variables and atoms. Let $\omega_1, ..., \omega_n$ denote these variable orders with their free variables removed. In each $\omega_i$, the bound variables of every atom form a path that starts at the root, and the variable order corresponds to the free-dominated query that joins $\at(\vars(\omega_i))$.
\end{lemma}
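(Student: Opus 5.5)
I would prove Lemma~\ref{lem:creates-forest} by following Algorithm~\ref{alg:create-var-orders} line by line, establishing one invariant per block.

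\textbf{Step 1: lines 1--4 produce a forest.} The relation $<$ on $\Bound(Q)$ is a strict total refinement of strict dominance. Atom-set inclusion is a preorder, and ties are broken lexicographically, so $<$ is a strict partial order; in particular it is acyclic. Each $X$ receives an edge from every $<$-cover $Y$ of $X$. I must show this cover is unique. Let $Y_1,Y_2$ both satisfy $X<Y_j$. Then $\at(X)\subseteq \at(Y_1)\cap\at(Y_2)$, and $\at(X)\neq\emptyset$ because every variable occurs in some atom. Hence condition~(1) of Def.~\ref{def:p-hierarchical} forces $\at(Y_1),\at(Y_2)$ to be nested, so $Y_1,Y_2$ are $<$-comparable. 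Thus the set of $<$-upper bounds of $X$ is a chain, and it has a unique minimal element. Every node therefore has at most one parent, and $G$ is a forest.

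Along any root-to-leaf path the atom sets weakly decrease. The same chain argument shows more: the ancestors of $X$ are exactly the bound variables $Y$ with $X<Y$, since each such $Y$ is reached by iterating covers. It follows that two bound variables sharing an atom are comparable, hence lie on one path of one tree. So trees partition the bound variables, and since an atom's bound variables are pairwise comparable they lie in one tree, so the atom sets $\mathcal A$ of distinct trees are disjoint.

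\textbf{Step 2: bound variables of an atom form a path from the root.} Fix an atom $R(\vect Y)$ in tree $T$, and let $X$ be its lowest bound variable. By Step~1 every ancestor $Z$ of $X$ satisfies $\at(Z)\supseteq\at(X)\ni R$, so $Z\in\vect Y$. Conversely, every bound variable of $\vect Y$ is comparable with $X$ and lies above it. So the bound variables of $R$ are exactly the root-to-$X$ path, and attaching $R$ below $X$ (lines 9--11) keeps $\vars(R)\cap\Bound(Q)$ on one root-to-leaf path.

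\textbf{Step 3: free variables on top, giving canonicity and free-dominance.} For a free $F$ occurring in some atom of $\mathcal A$, that atom contains the root $X_0$ of $T$ by Step~2. Condition~(2) of Def.~\ref{def:p-hierarchical} then gives $\at(X_0)\subseteq\at(F)$. Every atom of $\mathcal A$ contains some bound variable of $T$, hence $X_0$, so $\mathcal A=\at(X_0)\subseteq\at(F)$, and $F$ lies in all atoms of the subquery. This yields three facts:
\begin{itemize}
\item the subquery joining $\mathcal A$ is free-dominated;
\item placing the free variables as a path above $X_0$ (line 8) puts every atom's variables on one root-to-leaf path and makes every variable dominate those below it, which is exactly canonicity with free variables on top;
\item appending atoms as leaves makes the order extended.
\end{itemize}

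\textbf{Step 4: atoms without bound variables.} Lines 13--14 produce trivial path orders, which are canonical since all their variables are free and share the single atom. Disjointness holds because these atoms lie in no $\mathcal A$.

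The main obstacle is Step~1's claim that the ancestor set of $X$ coincides with the $<$-upper bounds of $X$, under equal atom sets with lexicographic tie-breaking. I would handle it by showing that the upper bounds form a finite chain under $<$. The parent of $X$ is then the minimum of that chain, and induction on chain length identifies all ancestors.
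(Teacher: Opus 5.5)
Your proof is correct. It follows the same four-part outline as the paper: the forest after lines 1--4, the bound variables of each atom lying on a root path, free variables on top via condition (2) applied to the root, and the free-only atoms of lines 13--14. The central step, however, is argued differently.

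\textbf{Forest.} The paper argues by contradiction. It rules out a vertex of in-degree two through three sub-cases, and it rules out a directed cycle. Your observation that all $<$-upper bounds of $X$ share $\at(X)\neq\emptyset$, and hence form a chain, has the same content stated positively.

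\textbf{Atom paths.} This is where the routes really diverge. The paper uses a sequence of separate contradiction arguments to show that an atom's bound variables lie in one tree, lie on one root-to-leaf path, are contiguous, and start at the root. You instead prove one characterization: the ancestors of $X$ are exactly its $<$-upper bounds. From it, the bound variables of an atom are read off directly as the root-to-$X$ path for its lowest bound variable $X$, and the disjointness of the trees' atom sets comes for free. Your route is shorter and avoids the case analysis.

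\textbf{What the paper records that you do not.} The paper also shows that every root-to-leaf path of a tree is the bound-variable set of some atom, i.e., every leaf variable receives an atom child. This is what ``the atoms are the leaves'' in the definition of an extended order needs, and the later view-tree construction relies on it. It follows in one line from your Step~2: a leaf $Y$ occurs in some atom, whose bound variables form a path from the root through $Y$, and that path must end at $Y$.

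\textbf{A wording slip.} Calling $<$ a ``total'' refinement in Step~1 is inaccurate. It is only a strict partial order, since variables with disjoint atom sets are incomparable. Only the partial-order property is used, so the argument is unaffected.
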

\begin{proof}
    First we show that after line 4 of Algorithm~\ref{alg:create-var-orders}, the graph $G(\V, \E)$ is a forest. Suppose for the sake of contradiction that $G$ is not a forest. Then either (1) there exists a vertex with in-degree greater than one, or (2) there exists a directed cycle. Suppose case (1). Then $\exists Y,Z \in \V = \Bound(Q)$ that both have an edge to a vertex $X \in \V$, and without loss of generality, $Y$ precedes $Z$ lexicographically. This implies $\at(X) \subseteq \at(Y)$ and $\at(X) \subseteq \at(Z)$. If $\at(X) = \at(Y) \subseteq \at(Z)$, then $X < Y < Z$, so an edge would not exist from $Z$ to $X$. If $\at(X) = \at(Z) \subset \at(Y)$, then $X < Z < Y$, so an edge would not exist from $Y$ to $X$. Now consider the case that $\at(Y)$ and $\at(Z)$ are incomparable. Because $Q$ is p-hierarchical and $\at(Y) \cap \at(Z) \neq \emptyset$, at least one of the bound variables $Y$ or $Z$ must dominate the other, a contradiction. Now suppose case (2). Then $\exists X_1, ..., X_k \in \V$ that form a directed cycle $X_1, ..., X_k, X_1$. If we have an edge from $X_i$ to $X_j$, then $X_j < X_i$, so $\at(X_j) \subseteq \at(X_i)$. So it must be the case that all variables in this cycle have identical atom sets, and if we have an edge from $X_i$ to $X_j$ in this cycle, then $X_i$ is lexicographically larger than $X_j$. Because $X_1 < X_k < X_{k-1} ... < X_1$, $X_1$ is lexicographically larger than itself, a contradiction.
    
   

    Now we show that after line 4, for each tree $T$ in $G$, each atom in $\at(\V(T))$ (the atoms of $Q$ that contain a variable of $T$) has bound variables that lie along the same root-to-leaf path in $T$. Consider an atom $R(\vect Y) \in \at(\V(T))$ with bound variable $Y \in \vect Y$ s.t. $Y \in \V(T)$. Suppose for the sake of contradiction that there exists another bound variable $Z \in \vect Y$ which does not fall along the root to leaf path in $T$ which contains $Y$. Either (1) $Z \notin \V(T)$ or (2) $Z \in \V(T)$. Suppose the first case and let $W$ be the root of the tree $Z$ appears in. $\at(Y) \subseteq \at(X)$ and $\at(Z) \subset \at(Y)$, so $\at(X) \cap \at(Z) \neq \emptyset$ and because $Q$ is p-hierarchical, $X$ or $W$ must dominate the other. If $\at(X) \subset \at(W)$ or $\at(X) = \at(W)$ and $X$ precedes $W$ lexicographically, then $X < W$, and there should be an edge from $W$ (or the minimally dominating bound variable) to $X$, a contradiction, as $X$ is a root. The other case is symmetric. Thus, $Z \in \V(T)$. Now suppose case (2) and let $W$ be the first common ancestor of $Y$ and $Z$ in $T$. Then there is a path from $W$ to $Y$ and from $W$ to $Z$, so $Y < W$ and $Z < W$. Because $Q$ is p-hierarchical and $\at(Y) \cap \at(Z) \neq \emptyset$, $Y$ or $Z$ must dominate the other. If $\at(Y) \subset \at(Z)$ or $\at(Y) = \at(Z)$ and $Y$ precedes $Z$ lexicographically, then $Y < Z < W$. Let $Y_1$ be the first variable in the path from $W$ to $Y$ s.t. $Y_1 < Z$, and suppose $Y_2$ has an edge in this path to $Y_1$. Because $Y \leq Y_1 < Y_2$ and $Y_1 < Z$, $\at(Y_1) \subseteq \at(Z)$ and $\at(Y_1) \subseteq \at(Y_2)$, so $\at(Z) \cap \at(Y_2) \neq \emptyset$. Thus, $Y_1 < Z < Y_2$, and there does not exist an edge from $Y_2$ to $Y_1$, a contradiction.

    Next we show that the bound variables of $\vect Y$ form a subpath of a root-to-leaf path in $T$. Suppose for the sake of contradiction that they fall along the subpath $X_1, ..., X_i, X_{i+1}, ..., X_k$ of a root-to-leaf path, where $X_1, X_k \in \vect Y$ and $X_i \notin \vect Y$. Then $X_k < X_i$, so $\at(X_k) \subseteq \at(X_i)$. And yet $R(\vect Y) \in \at(X_k)$ and $R(\vect Y) \notin \at(X_i)$, a contradiction. 
    
    Next we show that this subpath must start at the root of $T$. Suppose for the sake of contradiction the bound variables of $\vect Y$ fall along a subpath of the root-to-leaf path $X_1, ..., X_i, X_{i+1}, ..., X_k$, where $X_1, ..., X_i \notin \vect Y$ and $X_{i+1} \in \vect Y$. Then $X_{i+1} < X_i$, so $\at(X_{i+1}) \subseteq \at(X_i)$, and yet $R(\vect Y) \in \at(X_{i+1})$ and $R(\vect Y) \notin \at(X_{i})$, a contradiction. Thus, for each atom in $\at(\V(T))$, the bound variables form a path starting at the root of $T$.

    Consider a leaf $Y$. Because $\at(Y) \subseteq \at(\V(T))$, all atoms $Y$ occurs in must have bound variables that form a subpath of a root-to-leaf path in $T$ starting at the root. Because $Y$ occurs in the schema of at least one atom $R(\vect Y)$ and $Y$ is a leaf, the bound variables of $\vect Y$ must form the path from the root to $Y$. Thus, each root-to-leaf path in $T$ is an atom of $Q$.
    
    Furthermore, the root $X$ dominates all bound variables $\V(T)$ in $T$, 
    and because $Q$ is p-hierarchical, a free variable that appears in a schema with a bound variable must dominate the bound variable. Thus, the variables in $\vars(\at(X)) \cap \Free(Q)$ appear in all atoms in $\at(\V(T))$, and the query corresponding to the join of $\at(\V(T)$ is free-dominated. Line 8 of Algorithm~\ref{alg:create-var-orders} adds exactly these free variables as a path on top of a copy of $T$, so the tree has one node per variable in $\at(\V(T))$, and each atom in $\at(\V(T))$ has a schema that forms a path beginning at the root. Thus, the tree is a canonical variable order of this free-dominated query, and lines 9-11 extend it with all atoms in $\at(\V(T))$, turning it into an extended canonical variable order where the leaves are precisely $\at(\V(T))$. Finally, each variable order added in lines 13-14 is the extended canonical variable order of a single atom of $Q$ without bound variables, which is a free-dominated query on its own.
\end{proof}

\begin{proof}[Proof of Prop.~\ref{lem:q-hier}]
    By Lemma~\ref{lem:creates-forest}, Algorithm~\ref{alg:create-var-orders} produces extended canonical variable orders; let $\omega_1, ..., \omega_n$ denote them with their free variables removed, where each $\omega_i$ corresponds to the free-dominated join query $Q_i(\vect F_i) = \prod_{R(\vect Y) \in \at(\vars(\omega_i))} R(\vect Y)$. Now we will show that if we marginalize out the bound variables $\vars(\omega_i)$, the resulting query $Q_i'$ which has bound variables $\vars(\omega_i)$, atoms $\at(\vars(\omega_i))$, and free variables $\vect F = \vars(\at(\vars(\omega_i))) \cap \Free(Q)$ is q-hierarchical. Consider any two bound variables $X_1, X_2 \in \vars(\omega_i)$. Because $Q$ is p-hierarchical, $\at(X_1) \subseteq \at(X_2)$, $\at(X_2) \subseteq \at(X_1)$, or $\at(X_1) \cap \at(X_2) = \emptyset$. Because $\at(X_1) \subseteq \at(\vars(\omega_i))$ and $\at(X_2) \subseteq \at(\vars(\omega_i))$, the property still holds for $Q_i'$. Consider a free variable $X_1 \in \vect F$ and a bound variable $X_2 \in \vars(\omega_i))$. Because $Q_i$ is free dominated (and $Q_i'$ as well, because it has the same atom set), $X_1$ appears in all atoms. Thus, $\at(X_2) \subseteq \at(\vars(\omega_i)) = \at(X_1)$, so all free variables dominate all bound variables in $Q_i'$, and all free variables dominate each other in $Q_i'$. Thus, $Q_i'$ is free-dominated q-hierarchical.

    Now we will show that $Q$ is the join of these q-hierarchical queries and the atoms of $Q$ which consist only of free variables. Because each atom of $Q_i$ has bound variables that form a path beginning at the root of $\omega_i$, and the $\omega_i$ have disjoint variables, they have disjoint atoms. Further, as each bound variable of $Q$ appears in some $\omega_i$, and each $\omega_i$ corresponds to the join of $\at(\vars(\omega_i)) = \leaves(Q_i)$, each atom with a bound variable in its schema appears as a leaf of some $\omega_i$. Additionally, each $Q_i$ is over distinct bound variables. Thus, each query $Q_i$ represents the join of disjoint atoms whose union is precisely the atoms of $Q$ that contain a bound variable, and each $Q_i$ has disjoint bound variables in its schema.

    Thus, for each $Q_i$, we can marginalize out the bound variables to obtain the query $Q_i'$. $Q$ is precisely the join of these queries and the atoms of $Q$ with schemas that contain only free variables; each such atom is the free-dominated q-hierarchical subquery whose variable order Algorithm~\ref{alg:create-var-orders} adds in lines 13-14.
\end{proof}

\subsection{View Tree Construction}

After constructing the variable orders (via Algorithm~\ref{alg:create-var-orders}), Algorithm~\ref{alg:rewrite} recursively transforms each of them, with its free variables removed, into a \emph{view tree} (Def.~\ref{def:viewtree}). The algorithm operates over a forest $\nu$ whose vertices are either bound variables, atoms, or views defined over their children. By processing the vertices in a post-order traversal, the algorithm applies the following logic: vertices corresponding to atoms are left unchanged (lines 4--5), while each bound variable vertex is replaced by a path consisting of a join view over its children topped by a projection view (lines 6--9) in which the bound variable is marginalized out. Ultimately, this bottom-up transformation gives a forest of view trees.

Note that when a bound variable node has only one child, the algorithm still creates both a join view and a projection view, even though no join is necessary. These redundant join views can be omitted, as shown in Fig.~\ref{fig:create-var-orders-example} (right).

\begin{algorithm}[t]
\DontPrintSemicolon
\caption{\textsc{Rewrite}($\nu$)}
\label{alg:rewrite}
\KwIn{A forest $\nu$}
\KwOut{A forest of view trees}
\BlankLine

\Switch{$\nu$}{
    \Case{$\{\nu_i\}_{i \in [n]}$}{
        \Return $\{\textsc{Rewrite}(\nu_i)\}_{i \in [n]}$\;
    }
    \Case{$R(\vect Y)$}{
        \Return $R(\vect Y)$\;
    }
    \Case{\protect\raisebox{-0.5\height}{\treeChildren}}{
        \textbf{let} $T_i = \textsc{Rewrite}(\nu_i)$ and let the root view be $V_i(\vect S_i),~\forall i \in [k]$\;
        \textbf{let} $\vect S = \bigcup_{i \in [k]} \vect S_i$\;
        \Return $
        \begin{cases}
            \treeChildrenReturn
        \end{cases}
        $\;
    }
}
\end{algorithm}

\begin{lemma}\label{lem:view-trees}
    Given the variable orders produced by Algorithm~\ref{alg:create-var-orders} with their free variables removed, Algorithm~\ref{alg:rewrite} produces a view tree for each of them. Additionally, the root of each view tree has a schema that consists precisely of the free variables of the respective query, and each join view corresponds to an intersection.
\end{lemma}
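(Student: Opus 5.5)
The plan is structural induction on the input forest, following the recursion of Algorithm~\ref{alg:rewrite}. The invariant to carry is sharper than what the statement asks for. For a node $\nu$ of a variable order $\omega_i$ (free variables removed) that is a bound variable $X$, $\textsc{Rewrite}(\nu)$ returns a view tree whose leaves are exactly the atoms below $X$ in $\omega_i$. Its root view has schema $\vect F_i \cup \mathrm{anc}(X)$, where $\vect F_i$ is the set of free variables of the subquery and $\mathrm{anc}(X)$ is the set of strict bound ancestors of $X$ in $\omega_i$. Its root view also equals the sum over the bound variables of the subtree of the join of the atoms below $X$. For an atom leaf $R(\vect Y)$ the returned view is the atom itself, and its schema is $\vect Y$.

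The key input is Lemma~\ref{lem:creates-forest}. It tells us that the bound variables of every atom form a path starting at the root of $\omega_i$, and that every atom contains all of $\vect F_i$ (free-domination). First consider an atom $R(\vect Y)$ placed as a child of its lowest bound variable $Y$. By the lemma, $\vect Y = \vect F_i \cup \mathrm{anc}(Y)\cup\{Y\}$.

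In the inductive step, take a variable $X$ with children $\nu_1,\dots,\nu_k$. A child is either a bound variable $Z$ with parent $X$, or an atom whose lowest bound variable is $X$. For a bound child $Z$, the hypothesis gives the schema $\vect F_i \cup \mathrm{anc}(Z) = \vect F_i\cup \mathrm{anc}(X)\cup\{X\}$. An atom child has that same schema by the observation above. So all children have the identical schema $\vect S = \vect F_i\cup\mathrm{anc}(X)\cup\{X\}$, and the join view $V_X(\vect S)$ is an intersection of relations over the same schema. The projection $V'_X = \sum_X V_X$ then has schema $\vect F_i\cup \mathrm{anc}(X)$, which closes the induction for the schema claim.

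Next we check the view-tree conditions of Def.~\ref{def:viewtree}:
\begin{itemize}
\item The leaves are in bijection with the atoms. Each atom is appended exactly once, and the variable orders have disjoint atoms by Lemma~\ref{lem:creates-forest}.
\item Inner nodes are views, and join views are natural joins by construction.
\item For the projection view $V'_X$, the only marginalized variable is $X$. We must show that every atom containing $X$ lies in the subtree of $V_X$. An atom containing $X$ has its bound variables on a root path through $X$, so its lowest bound variable is $X$ or a descendant of $X$. Hence it is attached within the subtree of $X$.
\end{itemize}
The semantic claim (root equals the marginalized join) follows by distributivity. A bound variable below $X$ occurs only in atoms of the subtree below it, so each summation can be pushed down to the subtree where its variable occurs. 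This is the standard F-IVM argument~\cite{FIVM}.

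Finally, apply the invariant at the root $X_0$ of $\omega_i$. Since $\mathrm{anc}(X_0)=\emptyset$, the root schema is exactly $\vect F_i$. For the trivial orders of lines 13-14 with their free variables removed, only the atom remains, and its schema consists of free variables only. Omitting single-child join views changes nothing, because such a view equals its child.

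The main obstacle I expect is establishing the uniform child schema, in particular the identity $\vect Y = \vect F_i\cup\mathrm{anc}(Y)\cup\{Y\}$ for atoms. This needs both properties supplied by Lemma~\ref{lem:creates-forest}: the bound variables of an atom form a path from the root, and the subquery is free-dominated. If only the subpath property were available, the root-start argument from that lemma's proof would have to be redone here.
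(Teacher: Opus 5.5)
Your proposal is correct and follows essentially the same route as the paper. Both arguments use a post-order induction showing that all children of a bound variable $X$ share one schema, namely the subquery's free variables plus the root-to-$X$ path. From this, every join view is an intersection and the root view's schema is exactly the free variables, with Lemma~\ref{lem:creates-forest} supplying the root-path and free-domination facts. Your version is slightly more careful: your case split covers a bound variable that has both atom children and bound-variable children, whereas the paper's base case assumes all children are atoms and its inductive step assumes all children are variables.
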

\begin{proof}
    Let the original input to Algorithm~\ref{alg:rewrite} be the forest $\omega = \{\omega_1, ..., \omega_n\}$ of the variable orders with their free variables removed, where each $\omega_i$ corresponds to the join query $Q_i$. In lines 2-3, we recursively invoke Algorithm~\ref{alg:rewrite} for each $\omega_i \in \omega$, and we visit each variable of $\omega_i$ in a postorder traversal (lines 6-7). By construction of each $\omega_i$, the leaves are precisely $\at(Q_i)$, which are untouched by Algorithm~\ref{alg:rewrite} (lines 4-5). Each inner node of $\omega_i$ becomes a view with a schema defined over its children (lines 8-9). Inductively, each view is over variables of $Q_i$. 

    When Algorithm~\ref{alg:rewrite} visits the variable $X$, we replace it with two connected views: a view $V(\vect S)$ that is the join of the children views, and a marginalization view $V(\vect S \setminus X)$ that is the parent of $V(\vect S)$ (line 9). As all atoms of $Q_i$ that contain the bound variable $X$ occur in the subtree rooted at $V(\vect S)$ (by construction of $\omega_i$), Algorithm~\ref{alg:rewrite} produces a view tree $T_i$ for each $\omega_i$.

    Furthermore, as all atoms of $Q_i$ are leaves of $T_i$, and $T_i$ only consists of join and projection views (that project out only bound variables), the root view of $T_i$ is over the free variables of $Q_i$.

    Now we inductively show that when Algorithm~\ref{alg:rewrite} visits the inner node $X$ in a postorder traversal of $\omega_i$, all of the children of $X$ are views with the same schema $\pathh(X)$, and so $X$ is replaced by an intersection view with schema $\pathh(X)$, which has a parent view with the schema $\pathh(X) \setminus X$ (line 9). As the base case, consider visiting the node $X_1$ in $\omega_i$, where the children of $X_1$ are all leaves (atoms of $Q_i$). By construction of $\omega_i$, all leaves of $X_1$ are atoms with the same schema, $\pathh(X_1)$. So Algorithm~\ref{alg:rewrite} replaces $X$ with the intersection view with schema $\pathh(X_1)$, and with a parent view with schema $\pathh(X_1) \setminus X_1$. As the induction step, consider visiting an inner node $X_k$, and assume the claim holds for all inner nodes that have already been visited in the postorder traversal. Because $X_k$ is visited only after all of its children have been visited, the inductive hypothesis applies to all of its children. Thus, each of its children $X_j$ in $\omega_i$ has been replaced with an intersection view that has schema $\pathh(X_j)$, with a parent that has schema $\pathh(X_j) \setminus X_j = \pathh(X_k)$. Thus, all children of $X_k$ are projection views with the same schema $\pathh(X_k)$, and $X_k$ is replaced with an intersection view with schema $\pathh(X_k)$, with a parent that is the projection view with schema $\pathh(X_k) \setminus X_k$. As the claim holds for $X_k$, we have inductively shown all inner nodes are replaced by intersection and projection views, and so all join views correspond to intersections.
\end{proof}

\begin{proof}[Proof of Prop.~\ref{thm:main-alg}]
    By Lemma~\ref{lem:creates-forest}, given a p-hierarchical query $Q$, Algorithm~\ref{alg:create-var-orders} constructs extended canonical variable orders of subqueries, and by Prop.~\ref{lem:q-hier}, $Q$ is the join of the corresponding queries (if we project out the bound variables of $Q$). By Lemma~\ref{lem:view-trees}, Algorithm~\ref{alg:rewrite} takes these variable orders, with their free variables removed, as input, and produces a view tree for each of them. Additionally, the root of each view tree has a schema that consists precisely of the free variables of the respective query, and each join view corresponds to an intersection. Thus, executing Algorithm~\ref{alg:create-var-orders} with $Q$ as input, and using the resulting variable orders, with their free variables removed, as input to Algorithm~\ref{alg:rewrite}, yields a forest of view trees $\T$ s.t. $Q$ is the join of the root view of each view tree. Further, because each view tree consists only of projection and intersection views (or stand-alone relations), and intersections and projections and relations each take constant time to update (given a single-tuple insert), each view tree in $\T$ admits constant update time given a single-tuple update, and $Q$ is precisely is the join of of the root views of each view tree in the forest $\T$ produced by our maintenance approach.
\end{proof}

\subsection{Proof of Theorem~\ref{thm:main-upper-bound}}

\begin{proof}[Proof of Lemma~\ref{lem:width-qprime}]
    Let $\H=(\V,\E,\vect F)$ denote the hypergraph of $Q$, and let $\H'=(\V',\E',\vect F)$ denote the hypergraph of $\Qtop$. Let $(T, \chi)$ be a minimum-width free-connex tree decomposition of $Q$. Consider the tree decomposition $(T', \chi')$ obtained by removing the bound variables of $Q$ from each bag of the tree decomposition $(T, \chi)$. We will show $(T', \chi')$ is a valid tree decomposition for $\Qtop$. By Prop.~\ref{lem:q-hier} and Lemma~\ref{lem:view-trees}, $\T$ is a forest of view trees corresponding to free-dominated subqueries such that each root view is over the free variables of the respective subquery. Consider an edge $e' \in \E'$. The edges of $E'$ are precisely the set of schemas of $\roots(\T)$, so $e'$ is the schema of the root view of a view tree corresponding to a subquery $Q_i$ that is free-dominated. Thus, there exists an atom of $Q_i$ with schema $e \in \E$ s.t. $e' \subseteq e$. Because $(T, \chi)$ is a tree decomposition of $Q$, there exists a vertex $t \in \nodes(T)$ s.t. $e \subseteq \chi(t)$. Thus, $e' \subseteq \chi(t)$. Additionally, because $e' \subseteq \vect F$, and we only remove bound variables, $e' \subseteq \chi'(t)$. Thus, for every edge $e' \in \E'$, $e'$ is a subset of $\chi'(t)$ for some $t \in \nodes(T')$, and so every variable of $\Qtop$ appears in a bag of $(T', \chi')$.
    
    Now we will show that for every variable $X \in \V' = \vect F$, the set $\{t | X \in \chi'(t)\}$ is a connected subtree in $T'$ (we have already shown the set is non-empty). Because $X \in \V$, the set $\{t | X \in \chi(t)\}$ is a non-empty connected subset in $T$. Now suppose for the sake of contradiction that the set is disconnected in $T'$. Then there exists a connected component $C_1$ with vertex $t_1 \in \nodes(T')$ and a connected component $C_2$ with vertex $t_2 \in \nodes(T)'$ s.t. $X \in \chi'(t_1)$ and $X \in \chi'(t_2)$. Additionally, $t_1, t_2 \in \nodes(T)$, and $t_1$ and $t_2$ are connected in $T$. Consider the path from $t_1$ to $t_2$ in $T$. Then there exists a vertex $t_j$ along this path s.t. $X \in \chi(t_j)$ and $X \notin \chi'(t_j)$. This is a contradiction, because we only remove bound variables, and $X$ is free. Thus, for all $X \in \V'$, $\{t|X \in \chi'(t)\}$ forms a non-empty connected subtree in $T'$, and so $(T', \chi')$ is a valid tree decomposition of $\Qtop$.

    Now we will show the width of $(T', \chi')$ is less than or equal to the width of $(T, \chi)$. Consider a vertex $t \in \nodes(T) = \nodes(T')$. Then $\chi'(t) \subseteq \chi(t)$. Consider an optimal fractional edge cover $\rho_t$ of $\chi(t)$. Let $\rho'_t$ map each $e' \in \E'$ to zero. Let $e \in \E$ be an edge of $\H$ that contains a free variable. Because $\T$ is a forest of view trees s.t. each query is free-dominated and $Q$ is the join of the root views (whose schemas are the edges $\E'$) there exists an edge $e' \in \E'$ s.t. $e' = e \cap \vect F$. Increment $\rho'_t(e')$ by $\rho_t(e)$, and repeat this for all edges of $\E$ that contain a free variable. In the end, $\rho'_t$ is a fractional edge cover for $\chi'(t)$ (because $\chi'(t)$ contains only free variables), and $\rho^*(\Qtop[\chi'(t)]) \leq \sum_{e \in \E'[\chi(t)]} \rho'_t(e) \leq \sum_{e \in \E[\chi(t)]} \rho_t(e) = \rho^*(Q[\chi(t)])$. Further, because this holds true for all vertices $t \in \nodes(T)$, we have 
    \begin{align*}
        \fhtw(\Qtop) &= \min_{(T_1,\chi)\in\TD(Q(\vect F))} \width(T_1,\chi) \\
        & \leq \width(T',\chi) \\
        & = \max_{t\in\nodes(T')} \rho^*(Q[\chi(t)]) \\
        & \leq \max_{t \in \nodes(T')} \sum_{e \in \E'[\chi'(t)]} \rho'_t(e) \\
        & \leq \max_{t \in \nodes(T')} \sum_{e \in \E[\chi(t)]} \rho_t(e) \\
        & = \max_{t \in \nodes(T)} \rho*(Q[\chi(t)]) \\
        & = \width(T, \chi) \\
        & = \fhtw(Q).
    \end{align*}
\end{proof}

We are now ready to present the proof of the main theorem.

\begin{proof}[Proof of Theorem~\ref{thm:main-upper-bound}]
    Consider the forest of view trees $\T$ produced by our maintenance approach for the p-hierarchical query $Q$. By Prop.~\ref{thm:main-alg}, $Q(\vect F) = \prod_{V(\vect Y) \in \roots(\T)} V(\vect Y)$. Consider the forest of view trees $\T'$ obtained by adding an indicator projection $\mathbbm{1}_{V}(\vect Y)$ atop each root view $V(\vect Y)$ in $\T$. Then all views in the set $\roots(\T')$ are over the Boolean semiring, and the query $\Qtop(\vect F) = \prod_{V(\vect Y) \in \roots(\T')} V(\vect Y)$ maps all tuples in $Q$ with non-zero payload to the payload one, and zero otherwise.
    
    Consider a single-tuple insert to $\D$. Because $\T$ has disjoint atoms (Lemma~\ref{lem:creates-forest}), only one view tree $T$ in the forest $\T$ will be updated. By Prop.~\ref{thm:main-alg}, $T$ admits constant update time. This implies the root view $V(\vect Y)$ of $T$ admits constant update time, and so the indicator projection $\mathbbm{1}_{V}(\vect Y)$ admits constant update time as well. As the view trees in $\T'$ are precisely those in $\T$ with an indicator projection on top, the augmented view tree $T'$ in $\T'$ that corresponds to $T$ in $\T$ also admits constant update time. Thus, given a single-tuple insert to $\D$, $\T'$ admits constant update time.
    
    Let $\D$ be an initially empty database (so all views in $\T'$ are initially empty), and consider $N$ single-tuple inserts to $\D$. Because $\T'$ admits constant update time for each single-tuple insert and only one tree in $\T'$ is updated, its root admits constant update time and thus the number of tuples inserted into the root is $\bigO(1)$. Summing up over $N$ single-tuple inserts implies $\bigO(N)$ total inserts into the roots of $\T'$. As these root views are precisely the input relations of $\Qtop$, and $\Qtop$ is a full-join query over the Boolean semiring, $\Qtop$ admits $\bigO(N^{\fhtw(\Qtop) - 1})$ amortized update time and non-amortized constant enumeration delay~\cite[Theorem 4.1]{InsertsVSDeletes}. 

    Now we will show how to use the result of $\Qtop$ to obtain the result of $Q$. Suppose we perform a single-tuple insert to $\D$, update $\T'$, and perform all inserts to input relations of $\Qtop$ as defined by $\roots(\T')$. It is immediate that $\Qtop$ maps all tuples in $Q$ with non-zero payload to the payload one, and zero otherwise. Consider a tuple $\vect x \in \Dom(\vect X)$ s.t. $Q(\vect x) \neq \mathbf{0}$. Then its payload is $Q(\vect x) = \prod_{V(\vect Y) \in \roots(\T)} V(\pi_{\vect Y} \vect x)$. Thus, for each tuple $\vect x \in \Dom(\vect X)$ s.t. $\Qtop(\vect x) \neq \mathbf{0}$, we can project $\vect x$ onto the schema of each indicator projection in $\T'$ (the forest of augmented view trees), look up the payload of this projected tuple in the child view, and take the product of these payloads. This product is precisely the payload $Q(\vect x)$. Further, each projection and look-up takes constant time, and the number of times we perform each operation is bounded by $|\at(Q)|$, which is constant w.r.t. data complexity.
    
    Thus, when we begin with an initially empty database $\D$ and receive a stream of $N$ single-tuple inserts, after each insert, we update the forest of augmented view trees $\T'$ (which takes constant time) and we update the query $\Qtop$ (which admits $\bigO(N^{\fhtw(\Qtop)-1})$ amortized update time and constant enumeration delay). When enumerating the output of $\Qtop$, each tuple in $\Qtop$ with non-zero payload is enumerated with constant enumeration delay, and while enumerating the tuple, we recover the payload in $Q$ in constant time by multiplying payloads of views in $\T'$, as described above. Thus, each insert into $\D$ takes $\bigO(N^{\fhtw(\Qtop)-1})$ amortized update time and admits non-amortized constant delay enumeration of the output.
    
    By Lemma~\ref{lem:width-qprime}, $\fhtw(\Qtop) \leq \fhtw(Q)$. Thus, each insert takes $\bigO(N^{\fhtw(Q)-1})$ amortized update time. By Prop.~\ref{prop:fhtw=fcfhtw}, $\fhtw(Q)$ equals the fractional hypertree width of the bound version of $Q$, which proves Theorem~\ref{thm:main-upper-bound}.
\end{proof}

\section{Missing Details for Section~\ref{sec:lower-bounds}}
\label{sec:appendix-lower-bounds}

The OMv problem and the OMv conjecture are stated in Sec.~\ref{sec:lower-bounds}. Although we base our results on the hardness of the OMv conjecture, we introduce the following additional conjecture which is a consequence of the OMv conjecture~\cite{OMV:STOC:2015} since it is a more suitable starting point for some of our reductions. 

\begin{definition}[Online Vector-Matrix-Vector Multiplication (OuMv)~\cite{OMV:STOC:2015}]
    We are given an $n\times n$ Boolean matrix $\vect M$ and receive $n$ pairs of Boolean column vectors $(\vect u_1,\mathbf{v}_1), \dots, (\vect u_n,\mathbf{v}_n)$ of size $n$, one by one; after seeing each pair of vectors $(\vect u_i,\vect v_i)$, we output the product $\vect u_i^\top\vect M\vect v_i$ before we see the next pair. 
\end{definition}

\begin{conjecture}[OuMv conjecture, Theorem 2.4~\cite{OMV:STOC:2015}]
    For any $\gamma > 0$, there is no algorithm that solves this problem in time $\bigO(n^{3-\gamma})$.
\end{conjecture}

Enumeration returns only the tuples with nonzero payload (Sec.~\ref{sec:prelims}). The reductions below therefore read an empty enumeration result as the answer $0_K$, and an output tuple that is not enumerated as having payload $0_K$.

\subsection{The Natural Semiring}
\label{sec:lowerbound-natural}

We first show the lower bound for the natural semiring $(\N,+,\cdot)$.

Let $Q$ be any conjunctive query without self-joins that is not p-hierarchical. We reduce the OMv problem to the incremental maintenance of $Q$. The query $Q$ must violate one of the two conditions from Def.~\ref{def:p-hierarchical}. We next analyze each of these two cases in isolation.
    
    \subsubsection{Case 1} Assume $Q$ violates the bound-bound interaction, that is, $Q$ is of the form $Q(\vect F) = \sum_{\vect B}R(X,\vect Z_R)\cdot S(X,Y,\vect Z_s)\cdot T(Y,\vect Z_T)\cdot U_1(\vect Z_1)\cdot\ldots\cdot U_m(\vect Z_m)$, subject to the constraints $X,Y \notin \vect F$, $X \notin \vect Z_T$, and $Y\notin \vect Z_R$. Indeed, the violation gives two bound variables $X$ and $Y$ with $\at(X) \not\subseteq \at(Y)$, $\at(Y) \not\subseteq \at(X)$, and $\at(X) \cap \at(Y) \neq \emptyset$: let $S$ denote an atom in $\at(X) \cap \at(Y)$, $R$ an atom in $\at(X) \setminus \at(Y)$, $T$ an atom in $\at(Y) \setminus \at(X)$, and $U_1, \ldots, U_m$ the remaining atoms.
    
    \paragraph*{Database Construction.} We simplify the query $Q$ to its hard subquery $Q'() = \sum_{X,Y} R(X)\cdot S(X,Y)\cdot T(Y)$ by fixing the variables in $\vars(Q)\setminus\set{X,Y}$ to a special constant $\star$ that is not in the domains of $X$ and $Y$. For conciseness, we overload $\star$ to also represent a tuple consisting entirely of this constant, with arity determined by a relation's schema. To neutralize the effect of the relations $U_1, \dots, U_m$ on the joins on $X$ and $Y$, we execute the following inserts into each $U_i(\vect Z_i)$:
    \begin{itemize}
        \item If $Z_i \cap \set{X,Y} = \emptyset$, we insert $U_i(\star)\mapsto1$.
        \item If $Z_i \cap \set{X,Y} = \set{X}$ or $Z_i \cap \set{X,Y} = \set{Y}$, we insert $U_i(j,\star)\mapsto1$ for all $j \in [n]$, where $j$ is the value for $X$ or $Y$.
        \item If $Z_i \cap \set{X,Y} = \set{X,Y}$, we insert $U_i(j,k,\star)\mapsto 1$ for all $(j,k) \in [n]^2$, where $j,k$ are the values for $X,Y$.
    \end{itemize}
    Note that $1$ is the multiplicative identity of the natural semiring.
    By construction, $Q(\star)$ and $Q'()$ are equivalent for the above-constructed database with arbitrary $\mathbb{N}$-relations  $R$, $S$, and $T$.
        
    Let $\mathbf{M}$ be the $n \times n$ Boolean matrix in the OMv problem. We encode $\mathbf{M}$ into the $\mathbb{N}$-relation $S$ using at most $n^2$ inserts: for every $\mathbf{M}_{ij}=1$, we insert $S(i,j,\star)\mapsto1$. 
    The OMv problem also gives a sequence of vectors $\mathbf{v}_1, \dots, \mathbf{v}_n$. At step $k \in [n]$, we must compute the vector $\mathbf{M} \mathbf{v}_k$. Upon receiving $\mathbf{v}_k$, we make the following inserts into relation $T$. For every index $j$ such that $\mathbf{v}_k[j]=1$, we insert $T(j,\star)\mapsto1$. For all other indices $j$, i.e., indices $j$ for which $\mathbf{v}_k[j]=0$, we make no inserts. 

    A key constraint of our reduction is that we cannot delete from the database relations. Yet we can achieve the effect of deletes followed by inserts by inspecting the difference in the query answer before and after inserts. This is explained next.
    Since we cannot delete the previous entries in $T$ and because the natural semiring accumulates payloads, the vector represented by $T$ at step $k$ is the cumulative vector $\mathbf{t}_{k} = \sum_{m=1}^k \mathbf{v}_m$. This sum is over the natural semiring, not the Boolean semiring. Even though, $\vect v_1,\dots,\vect v_n$ are Boolean vectors, we treat them as integer vectors and recover the Boolean result out of the query answer. The database implicitly holds the state $\mathbf{u}_{k} = \mathbf{M} \mathbf{t}_{k}$. 
    
    \paragraph*{Recovering the Result for the OMv Problem from the Query Result.}
    To extract the $n$-dimensional vector $\mathbf{u}_{k}$ from the scalar result of $Q(\star)$, we execute a sequence of $2n$ query enumerations interleaved with $n$ inserts. For each index $i \in [1, n]$:
    \begin{enumerate}
        \item We enumerate the current scalar output $Q_{\text{before}}^{(i)} = Q(\star)$.
        \item We insert $R(i,\star)\mapsto1$. 
        \item We enumerate the new scalar output $Q_{\text{after}}^{(i)} = Q(\star)$.
    \end{enumerate}
    Let $R'(i)$ be the value of $R(i)$ before we inserted $R(i,\star)\mapsto1$. Inserting $R(i,\star)\mapsto 1$ increases the scalar query answer by exactly the value given by the dot product between the $i$-th row of $\vect M$ and the vector $\vect t_{k}$:
    \begin{align*}
        Q_{\text{after}}^{(i)} - Q_{\text{before}}^{(i)} &=
        \left(\sum_{x\in [n]\setminus\set{i},y} R(x)\cdot S(x,y)\cdot T(y) + \sum_{y} R(i)\cdot S(i,y)\cdot T(y)\right) \\
        &- \left(\sum_{x\in [n]\setminus\set{i},y} R(x)\cdot S(x,y)\cdot T(y) + \sum_{y} R'(i)\cdot S(i,y)\cdot T(y)\right)\\
        &=\sum_y \underbrace{(R(i)-R'(i))}_{1}\cdot S(i,y)\cdot T(y)
        =\sum_y S(i,y)\cdot T(y) =  \mathbf{M}_i \mathbf{t}_{k}  = \mathbf{u}_{k}[i].
    \end{align*}
    Let $a_k^{(i)}$ and $b_k^{(i)}$ denote the answers $Q_{\text{after}}^{(i)}$ and $Q_{\text{before}}^{(i)}$ enumerated at step $k$, so $\vect u_k[i] = a_k^{(i)} - b_k^{(i)}$, and let $a_0^{(i)} = b_0^{(i)} = 0$.

    Since $\vect t_k - \vect t_{k-1} = \vect v_k$ (with $\vect t_0 = \vect 0$), the $i$-th entry of $\vect M \vect v_k$ is $\vect u_k[i] - \vect u_{k-1}[i]$. This entry is nonnegative, so it is nonzero iff $\vect u_k[i] \neq \vect u_{k-1}[i]$, that is, iff $a_k^{(i)} + b_{k-1}^{(i)} \neq a_{k-1}^{(i)} + b_k^{(i)}$. At each step, the reduction stores the $2n$ enumerated answers, keeps those of the previous step, and outputs for each $i \in [n]$ the Boolean result of this equality test. The recovery thus uses only additions and equality tests on enumerated answers.

    \paragraph*{Time Analysis.} Starting from the empty database, the reduction executes at most $n^2$ inserts to encode $\mathbf{M}$ into $S$ and $\bigO(n^2)$ inserts into $U_1, \ldots, U_m$ (at most $n^2$ per relation, and $m$ is a constant). In each of the $n$ steps, it inserts at most $n$ tuples into $T$ and exactly $n$ tuples into $R$, and it makes $2n$ enumeration requests, each returning at most one tuple. In total, the reduction executes $N = \bigO(n^2)$ inserts and $2n^2$ enumeration requests.

    Assume there exists an algorithm that maintains $Q$ with $\bigO(N^{\frac{1}{2}-\gamma})$ amortized update time and $\bigO(N^{\frac{1}{2}-\gamma})$ enumeration delay, for some $\gamma>0$. Then the $N$ inserts take at most $N \cdot \bigO(N^{\frac{1}{2}-\gamma}) = \bigO(n^{3-2\gamma})$ time in total, and the $2n^2$ enumeration requests take $2n^2 \cdot \bigO(n^{1-2\gamma}) = \bigO(n^{3-2\gamma})$ time in total. Together with the $\bigO(n^2)$ time for the additions and equality tests that recover $\mathbf{M}\mathbf{v}_k$ from the enumerated values, the OMv problem is solved in time $\bigO(n^{3-2\gamma}) + \bigO(n^{3-2\gamma}) + \bigO(n^2) = \max(\bigO(n^{3-2\gamma}), \bigO(n^2))$, which is subcubic for every $\gamma>0$ and contradicts the OMv conjecture.

    \subsubsection{Case 2} Now, assume that the query $Q$ violates the bound-free interaction of Def.~\ref{def:p-hierarchical}. This means $Q$ is of the form $Q(X,\vect F) = \sum_{\vect B}S(X,Y,\vect Z_s)\cdot T(Y,\vect Z_T)\cdot U_1(\vect Z_1)\cdot\ldots\cdot U_m(\vect Z_m)$, where $\vect F$ denotes the free variables other than $X$, $Y \notin \vect F$, and $X \notin \vect Z_T$. Indeed, the violation gives a bound variable $Y$ and a free variable $X$ with $\at(Y) \not\subseteq \at(X)$ and $\at(Y) \cap \at(X) \neq \emptyset$: let $S$ denote an atom in $\at(Y) \cap \at(X)$, $T$ an atom in $\at(Y) \setminus \at(X)$, and $U_1, \ldots, U_m$ the remaining atoms. Again, we reduce the OMv problem to the incremental maintenance of $Q$. 
    
    \paragraph*{Database Construction.} Let $\mathbf{M}$ be the $n \times n$ Boolean matrix in the OMv problem. We encode $\mathbf{M}$ into the $\mathbb{N}$-relation $S$ using at most $n^2$ inserts: for every $\mathbf{M}_{ij}=1$, we insert $S(i,j,\star)\mapsto 1$. For relations $U_1(\vect Z_1),\dots,U_m(\vect Z_m)$, we apply the general auxiliary setup from Case 1. Thus, for any $i\in[n]$ we have that $Q(i,\boldsymbol{\star})$ is equivalent to its subquery $ Q''(i) = \sum_{j\in[n]} S(i,j,\boldsymbol{\star}) \cdot T(j,\boldsymbol{\star})$ on the above-constructed database.

    At each step $k \in [n]$, vector $\mathbf{v}_k$ arrives. For every $i \in [n]$ such that $\mathbf{v}_k[i]=1$, we insert $T(i,\boldsymbol{\star})\mapsto 1$. As before, the payload of a tuple $(i,\boldsymbol{\star})$ in $T$ at step $k$ is exactly the sum of its insertions across all steps $1$ to $k$. Let this cumulative vector be $\vect t_{k} = \sum_{j\in[k]} \mathbf{v}_j$. Thus, $T(i,\boldsymbol{\star}) = \vect t_{k}[i]$. Again, we treat vectors $\vect v_1,\dots.\vect v_n$ as integer vectors.

    \paragraph*{Recovering the Result for the OMv Problem from the Query Result.}
    We enumerate the result of the query $Q(X,\boldsymbol{\star})$. For a specific output binding $X = i$, the query evaluates to:
    \[
    Q(i,\boldsymbol{\star}) = \sum_{j\in[n]} S(i,j,\boldsymbol{\star}) \cdot T(j,\boldsymbol{\star}) = \sum_{j\in[n]} \mathbf{M}_{ij} \vect t_{k}[j].
    \]
    Thus, enumerating all $n$ results of the query yields the complete integer vector $\mathbf{M} \vect t_{k}$. 

    Since $\vect t_k - \vect t_{k-1} = \vect v_k$ (with $\vect t_0 = \vect 0$), the $i$-th entry of the required Boolean OMv answer $\vect M \vect v_k$ is $(\vect M \vect t_k)[i] - (\vect M \vect t_{k-1})[i]$. This entry is nonnegative, so it is nonzero iff $(\vect M \vect t_k)[i] \neq (\vect M \vect t_{k-1})[i]$. The reduction stores the vector enumerated at the previous step and outputs for each $i \in [n]$ the Boolean result of this equality test. The recovery thus uses only equality tests on enumerated answers.

    \paragraph*{Time Analysis.} Starting from the empty database, the reduction executes at most $n^2$ inserts to encode $\mathbf{M}$ into $S$ and $\bigO(n^2)$ inserts into $U_1, \ldots, U_m$ (at most $n^2$ per relation, and $m$ is a constant). In each of the $n$ steps, it inserts at most $n$ tuples into $T$ and makes one enumeration request. In total, the reduction executes $N = \bigO(n^2)$ inserts and $n$ enumeration requests.

    Assume there exists an algorithm that maintains $Q$ with $\bigO(N^{\frac{1}{2}-\gamma})$ amortized update time and $\bigO(N^{\frac{1}{2}-\gamma})$ enumeration delay, for some $\gamma>0$. Then the $N$ inserts take at most $N \cdot \bigO(N^{\frac{1}{2}-\gamma}) = \bigO(n^{3-2\gamma})$ time in total, and the $n$ enumeration requests, each of which returns at most $n$ tuples, take $n \cdot (n+1) \cdot \bigO(n^{1-2\gamma}) = \bigO(n^{3-2\gamma})$ time in total. Together with the $\bigO(n^2)$ time for the equality tests that recover $\mathbf{M}\mathbf{v}_k$ from the enumerated vectors, the OMv problem is solved in time $\bigO(n^{3-2\gamma}) + \bigO(n^{3-2\gamma}) + \bigO(n^2) = \max(\bigO(n^{3-2\gamma}), \bigO(n^2))$, which is subcubic for every $\gamma>0$ and contradicts the OMv conjecture.

    This concludes the proof for the natural semiring. We next show how the reduction works by means of a concrete example.

\begin{example}
    Consider the following Boolean matrix $\mathbf{M}$ and sequence of input vectors $\mathbf{v}_1$ and $\mathbf{v}_2$:

\begin{align*} 
\mathbf{M} = \begin{pmatrix} 1 & 1 & 0 \\ 0 & 1 & 1 \\ 1 & 0 & 0 \end{pmatrix} , \quad
\mathbf{v}_1 = \begin{pmatrix} 1 \\ 0 \\ 0 \end{pmatrix}, \quad \mathbf{v}_2 = \begin{pmatrix} 0 \\ 1 \\ 0 \end{pmatrix} 
\end{align*}

Our goal is to compute $\mathbf{M}\mathbf{v}_1$ and then compute $\mathbf{M}\mathbf{v}_2$, relying only on the scalar output of the 
hard query $Q()=\sum_{X,Y}R(X)\cdot S(X,Y)\cdot T(Y)$. 

Starting from the empty database, we insert into $S$ the tuples that represent the matrix $\mathbf{M}$:

\[
\begin{array}{c@{\quad}c@{\quad}l}
X & Y & \mapsto \mathsf{S}[X,Y] \\
\hline
1 & 1 & \mapsto 1 \\
1 & 2 & \mapsto 1 \\
2 & 2 & \mapsto 1 \\
2 & 3 & \mapsto 1 \\
3 & 1 & \mapsto 1 
\end{array}
\]

Initially, the relations $R$ and $T$ are empty (i.e., mapping all tuples to $0$).

\paragraph*{Step 1:} We start with the insert $T(1)\mapsto1$. Thus, we have the cumulative vector $\mathbf{t}_1 = \mathbf{v}_1 = (1,0,0)^\top$. The relation $T$ now contains a single tuple:

\[
\begin{array}{c@{\quad}l}
Y & \mapsto \mathsf{T}[Y] \\
\hline
1 & \mapsto 1
\end{array}
\]

Next, we use updates to $R$ and query evaluations to find $\mathbf{u}_1$. Initially, $Q^{(1)}_{\text{before}} = 0$.
\begin{itemize}
    \item \textbf{For $i=1$:} We insert $R(1)\mapsto 1$. $Q_{\text{before}}^{(1)} = 0, Q_{\text{after}}^{(1)} = 1 \implies Q_{\text{after}}^{(1)} - Q_{\text{before}}^{(1)} = 1 - 0 = 1$
    \item \textbf{For $i=2$:} We insert $R(2)\mapsto 1$. $Q_{\text{before}}^{(2)} = 1, Q_{\text{after}}^{(2)} = 1 \implies Q_{\text{after}}^{(2)} - Q_{\text{before}}^{(2)}  = 1 - 1 = 0$
    \item \textbf{For $i=3$:} We insert $R(3)\mapsto 1$. $Q_{\text{before}}^{(3)} = 1, Q_{\text{after}}^{(3)} = 2 \implies Q_{\text{after}}^{(3)} - Q_{\text{before}}^{(3)} = 2 - 1 = 1$
\end{itemize}

At the end of Step 1, we have that $\vect u_1 = (1,0,1)^\top$:
$$ \mathbf{u}_1 - \mathbf{u}_0 = \begin{pmatrix} 1 \\ 0 \\ 1 \end{pmatrix} - \begin{pmatrix} 0 \\ 0 \\ 0 \end{pmatrix} = \begin{pmatrix} 1 \\ 0 \\ 1 \end{pmatrix}  = \mathbf{M}\mathbf{v}_1.$$

\paragraph*{Step 2:} We process the insert $T(2)\mapsto 1$. This gives us $\mathbf{t}_2 = \mathbf{v}_1 + \mathbf{v}_2 = (1,1,0)^\top$. The relation $T$ now maps two tuples to $1$:

\[
\begin{array}{c@{\quad}l}
Y & \mapsto \mathsf{T}[Y] \\
\hline
1 & \mapsto 1 \\
2 & \mapsto 1
\end{array}
\]

Before any new updates to $R$, the query enumerates the current database state based on the accumulated payloads from Step 1. We find $Q^{(1)}_{\text{before}} = 4$. We again probe by updating $R$:
\begin{itemize}
    \item \textbf{For $i=1$:} We insert $R(1\mapsto 1)$. $Q_{\text{before}}^{1} = 4, Q_{\text{after}}^{1} = 6 \implies Q_{\text{after}}^{(1)} - Q_{\text{before}}^{(1)} = 6 - 4 = 2$
    \item \textbf{For $i=2$:} We insert $R(2\mapsto 1)$. $Q_{\text{before}}^{2} = 6, Q_{\text{after}}^{2} = 7 \implies Q_{\text{after}}^{(2)} - Q_{\text{before}}^{(2)} = 7 - 6 = 1$
    \item \textbf{For $i=3$:} We insert $R(3\mapsto 1)$. $Q_{\text{before}}^{3} = 7, Q_{\text{after}}^{3} = 8 \implies Q_{\text{after}}^{(3)} - Q_{\text{before}}^{(3)} = 8 - 7 = 1$
\end{itemize}

At the end of Step 2, we have that $\vect u_2 = (2,1,1)^\top$:
$$ \mathbf{u}_2 - \mathbf{u}_1 = \begin{pmatrix} 2 \\ 1 \\ 1 \end{pmatrix} - \begin{pmatrix} 1 \\ 0 \\ 1 \end{pmatrix} = \begin{pmatrix} 1 \\ 1 \\ 0 \end{pmatrix} = \mathbf{M}\mathbf{v}_2.$$

Because we operate over the natural semiring $(\mathbb{N}, +, \times)$, payloads accumulate monotonically. By the end of Step 2, the state of relation $R$ reflects a payload of $2$ for every probed tuple:

\[
\begin{array}{c@{\quad}l}
X & \mapsto \mathsf{R}[X] \\
\hline
1 & \mapsto 2 \\
2 & \mapsto 2 \\
3 & \mapsto 2
\end{array}
\]
\end{example}

\paragraph*{Generalization to Semirings with an Injective Homomorphism from the Natural Semiring.}
Let $h$ be an injective homomorphism from the natural semiring to a semiring $K$ (Def.~\ref{def:embedding}). We run the reduction of this section over a $K$-database, so all payloads and query answers are elements of $K$. Two parts of the reduction have to be adapted: (1) the inserts and (2) the recovery of the OMv answer from the enumerated query answers.

For (1), every insert of the original reduction has payload $1$; it becomes an insert of payload $h(1) = 1_K$. A straightforward induction shows: whenever the original reduction over $\N$ holds a value $m$, as the payload of a tuple or as an enumerated query answer, the reduction over $K$ holds $h(m)$ in its place, since $h$ preserves sums and products.

For (2), the recovery performs only additions and equality tests on the enumerated answers (Cases 1 and 2 above). The reduction over $K$ runs the same recovery on the enumerated $K$-answers, which by (1) are the values $h(a)$ for the corresponding answers $a$ over $\N$. Since $h$ preserves sums, $h(a) +_K h(b) = h(a + b)$, and since $h$ is injective, $h(x) = h(y)$ iff $x = y$; every equality test therefore has the same outcome as over $\N$, and the reduction over $K$ outputs the same bits. Additions and equality tests on $K$ take constant time (Sec.~\ref{sec:prelims}), so the time analysis is unchanged.

\subsection{The Tropical Semiring}
\label{sec:lowerbound-tropical}

We show the lower bound for the min tropical semiring $\mathbb{T} = (\mathbb{R}\cup\{+\infty\}, \min,+)$, a similar proof works for the max tropical (or arctic) semiring $(\mathbb{R}\cup\{-\infty\}, \max,+)$. The proof is different from the one for the natural semiring in Section~\ref{sec:lowerbound-natural}.

Let $Q$ be any conjunctive query without self-joins that is not p-hierarchical. 
There are two ways in which $Q$ can violate the p-hierarchical property. 
    
\subsubsection{Case 1}\label{sec:lowerbound-tropical-case1} We assume that $Q$ violates the bound-bound interaction, so it is of the form $Q(\vect F) = \min_{\vect B}R(X,\vect Z_R) + S(X,Y,\vect Z_s) + T(Y,\vect Z_T) + U_1(\vect Z_1) + \ldots + U_m(\vect Z_m)$, subject to the constraints $X,Y \notin \vect F$, $X \notin \vect Z_T$, and $Y\notin \vect Z_R$. The atoms $R$, $S$, $T$, and $U_1, \ldots, U_m$ are chosen from the violation as in Appendix~\ref{sec:lowerbound-natural}. We reduce the OuMv problem to the incremental maintenance problem for $Q$.
    
\paragraph*{Database Construction.} Similar to the previous reductions, we simplify the query $Q$ to its hard subquery $Q'() = \min_{X,Y} R(X) + S(X,Y) + T(Y)$ by fixing the variables in $\vars(Q)\setminus\set{X,Y}$ to a special constant $\star$ that is not in the domains of $X$ and $Y$.    
For conciseness, we overload $\star$ to also represent a tuple consisting entirely of this constant, with arity determined by the relation's schema. To neutralize the effect of the relations $U_1, \dots, U_m$ on the joins on $X$ and $Y$, we execute the following inserts into each $U_i(\vect Z_i)$:
    \begin{itemize}
        \item If $Z_i \cap \set{X,Y} = \emptyset$, we insert $U_i(\star)\mapsto 0$.
        \item If $Z_i \cap \set{X,Y} = \set{X}$ or $Z_i \cap \set{X,Y} = \set{Y}$, we insert $U_i(j,\star)\mapsto0$ for all $j \in [n]$, where $j$ is the value for $X$ or $Y$.
        \item If $Z_i \cap \set{X,Y} = \set{X,Y}$, we insert $U_i(j,k,\star)\mapsto 0$ for all $(j,k) \in [n]^2$, where $j,k$ are the values for $X,Y$.
    \end{itemize}
    Note that $0$ is the multiplicative identity of the tropical semiring. By construction, 
    \[Q(\star) = \min_{j,k\in[n]} R(j,\star) + S(j,k,\star) + T(k,\star).\] 
    
    Let $\mathbf{M}$ be the $n \times n$ Boolean matrix, and let $(\mathbf{u}_i, \mathbf{v}_i)$ be the sequence of vector pairs at step $i \in[n]$. We encode the matrix $\mathbf{M}$ into the $\mathbb{T}$-relation $S$ using at most $n^2$ inserts. For all pairs $(j,k) \in [n]^2$:
    \begin{itemize}
        \item If $\mathbf{M}_{jk}=1$, we insert $S(j,k,\star)\mapsto 0$.
        \item If $\mathbf{M}_{jk}=0$, we make no insert, so $S(j,k,\star) = \infty$.
    \end{itemize}
    The relations $R$ and $T$ start empty.

    At each step $i$, we receive vectors $\mathbf{u}_i$ and $\mathbf{v}_i$. Since we operate in an insert-only setting, we cannot empty the relations $R$ and $T$ using deletes. Instead, we insert new versions of tuples with strictly smaller payloads, ensuring that the fresh data dominates the minimization.
    
    We define the payload for step $i$ as $\Delta_i = -i$; thus $\Delta_i < \Delta_{i-1}$. We perform the following updates:
    \begin{itemize}
        \item For every $j$ such that $\mathbf{u}_i[j]=1$, insert $R(j,\star)\mapsto\Delta_i$.
        \item For every $k$ such that $\mathbf{v}_i[k]=1$, insert $T(k,\star)\mapsto\Delta_i$.
    \end{itemize}

    For indices where the Boolean vector is $0$ we perform no inserts; they retain the payload $\Delta_t$ of the most recent step $t < i$ at which they were inserted (where $\Delta_t > \Delta_i$), or the payload $\infty$ if they were never inserted.

    \paragraph*{Recovering the Result for the OuMv Problem from the Query Result.}
    Let $P(j,k) = R(j,\star) + S(j,k,\star) + T(k,\star)$. By construction, we have $Q(\star) = \min_{j,k} P(j,k)$. Consider the minimum path cost at step $i$. We distinguish two cases:
    \begin{itemize}
        \item \textbf{Case 1: $\mathbf{u}_i^\top \mathbf{M} \mathbf{v}_i = 1$.} 
        Thus, there exist indices $j, k$ such that $\mathbf{u}_i[j]=1$, $\mathbf{v}_i[k]=1$, and $\mathbf{M}_{jk}=1$. For this specific pair, the current payloads in $R$ and $T$ are $\Delta_i$, and $S(j,k,\star)=0$. Thus:
        \[ P(j,k) = \Delta_i + 0 + \Delta_i = 2\Delta_i \]
        Every payload in $R$ and $T$ is at least $\Delta_i$ and every payload in $S$ is at least $0$, so $P(j,k) \geq 2\Delta_i$ for every pair $(j,k)$; the answer is thus exactly $2\Delta_i$.
        
        \item \textbf{Case 2: $\mathbf{u}_i^\top \mathbf{M} \mathbf{v}_i = 0$.} 
        For any pair $(j,k)$, if $\mathbf{M}_{jk}=0$, then $S(j,k,\star)=\infty$ and $P(j,k)=\infty$.
        Consider a pair where $\mathbf{M}_{jk}=1$ (so $S(j,k,\star)=0$), but the vectors do not match (i.e., $\mathbf{u}_i[j]=0$ or $\mathbf{v}_i[k]=0$). This implies that at least one of the tuples in $R$ or $T$ was not updated in step $i$. That tuple is either stale, that is, last inserted at some step $t \le i-1$, or absent, that is, never inserted. A stale tuple has payload $\Delta_t \geq \Delta_{i-1}$, since payloads decrease, and an absent tuple has payload $\infty$. In both cases the payload is at least $\Delta_{i-1}$.
        Therefore, the path cost is bounded by:
        \[ P(j,k) \ge \Delta_{i-1} + 0 + \Delta_i = -(i-1) - i = 2\Delta_i + 1 > 2\Delta_i. \]
    \end{itemize}
    
    Thus, $\mathbf{u}_i^\top \mathbf{M} \mathbf{v}_i = 1 \iff Q(\star) = 2\Delta_i$.

    \paragraph*{Time Analysis.} Starting from the empty database, the reduction executes at most $n^2$ inserts to encode $\mathbf{M}$ into $S$ and $\bigO(n^2)$ inserts into $U_1, \ldots, U_m$ (at most $n^2$ per relation, and $m$ is a constant). In each of the $n$ steps, it inserts at most $n$ tuples into $R$ and at most $n$ tuples into $T$, and it makes one enumeration request, which returns at most one tuple. In total, the reduction executes $N = \bigO(n^2)$ inserts and $n$ enumeration requests.

    Assume there exists an algorithm that maintains $Q$ with $\bigO(N^{\frac{1}{2}-\gamma})$ amortized update time and $\bigO(N^{\frac{1}{2}-\gamma})$ enumeration delay, for some $\gamma>0$. Then the $N$ inserts take at most $N \cdot \bigO(N^{\frac{1}{2}-\gamma}) = \bigO(n^{3-2\gamma})$ time in total, and the $n$ enumeration requests take $n \cdot \bigO(n^{1-2\gamma}) = \bigO(n^{2-2\gamma})$ time in total. Together with the $\bigO(n)$ time for the comparisons that recover $\mathbf{u}_i^\top \mathbf{M} \mathbf{v}_i$ from the enumerated payloads, the OuMv problem is solved in time $\bigO(n^{3-2\gamma}) + \bigO(n^{2-2\gamma}) + \bigO(n) = \bigO(n^{3-2\gamma})$, which is subcubic for every $\gamma>0$ and contradicts the OuMv conjecture and thus the OMv conjecture.

    \subsubsection{Case 2} Now, assume that $Q$ violates the bound-free interaction, so it is of the form $Q(X,\vect F) = \min_{\vect B} S(X,Y,\vect Z_s) + T(Y,\vect Z_T) + U_1(\vect Z_1) +\ldots + U_m(\vect Z_m)$, where $\vect F$ denotes the free variables other than $X$, $Y \notin \vect F$, and $X \notin \vect Z_T$. The atoms $S$, $T$, and $U_1, \ldots, U_m$ are chosen from the violation as in Appendix~\ref{sec:lowerbound-natural}. We reduce the OMv problem to the IVM problem for $Q$. 
    
    \paragraph*{Database Construction.} As before, we use the special constant $\star$ and encode the matrix $\mathbf{M}$ into the $\mathbb{T}$-relation $S$ using at most $n^2$ inserts: for every $\mathbf{M}_{jk}=1$, we insert $S(j,k,\star)\mapsto 0$, and for every $\mathbf{M}_{jk}=0$ we make no insert, so $S(j,k,\star) = \infty$. The relation $T$ starts empty. For the relations $U_1(\vect Z_1),\dots,U_m(\vect Z_m)$, we apply the setup from Case 1 (Appendix~\ref{sec:lowerbound-tropical-case1}). Thus, 
    \[ Q(X, \star) = \min_{Y} ~ S(X, Y, \star) + T(Y, \star). \]

    At each step $i \in [n]$, we receive vector $\mathbf{v}_i$. We define the payload for step $i$ as $\Delta_i = -i$, ensuring $\Delta_i < \Delta_{i-1}$. For every index $k$, for which $\mathbf{v}_i[k]=1$, we insert $T(k,\star)\mapsto\Delta_i$. For every index $k$, for which $\mathbf{v}_i[k]=0$, we make no insert; the payload of $T(k,\star)$ is then $\Delta_t$ for the most recent step $t < i$ with $\mathbf{v}_t[k]=1$, or $\infty$ if there is no such step. In both cases it is at least $\Delta_{i-1}$. 
        
    \paragraph*{Recovering the Result for the OMv Problem from the Query Result.}
    Consider the value of the join for a specific $j$:
    \[ P(j) = \min_{k} \left( S(j,k) + T(k) \right). \]
    \begin{itemize}
        \item \textbf{Case 1: $(\mathbf{M}\mathbf{v}_i)_j = 1$.} 
        There exists some $k$ such that $\mathbf{M}_{jk}=1$ and $\mathbf{v}_i[k]=1$. For this $k$, $S(j,k)=0$ and $T(k)$ has the fresh payload $\Delta_i$. 
        \[ P(j) \le 0 + \Delta_i = \Delta_i. \]
        Every payload in $T$ is at least $\Delta_i$ and every payload in $S$ is at least $0$, so also $P(j) \geq \Delta_i$; hence $P(j) = \Delta_i$.
        
        \item \textbf{Case 2: $(\mathbf{M}\mathbf{v}_i)_j = 0$.}
        For all $k$, either $\mathbf{M}_{jk}=0$ (so $S(j,k)=\infty$) or $\mathbf{v}_i[k]=0$ (so $T(k)$ is stale or absent, with payload $\ge \Delta_{i-1}$).
        \[ P(j) \ge 0 + \Delta_{i-1} > \Delta_i. \]
    \end{itemize}
    By filtering for tuples where the payload is exactly $\Delta_i$, we correctly recover exactly the indices $j$ where the result vector is $1$.

    \paragraph*{Time Analysis.} Starting from the empty database, the reduction executes at most $n^2$ inserts to encode $\mathbf{M}$ into $S$ and $\bigO(n^2)$ inserts into $U_1, \ldots, U_m$ (at most $n^2$ per relation, and $m$ is a constant). In each of the $n$ steps, it inserts at most $n$ tuples into $T$ and makes one enumeration request. In total, the reduction executes $N = \bigO(n^2)$ inserts and $n$ enumeration requests.

    Assume there exists an algorithm that maintains $Q$ with $\bigO(N^{\frac{1}{2}-\gamma})$ amortized update time and $\bigO(N^{\frac{1}{2}-\gamma})$ enumeration delay, for some $\gamma>0$. Then the $N$ inserts take at most $N \cdot \bigO(N^{\frac{1}{2}-\gamma}) = \bigO(n^{3-2\gamma})$ time in total, and the $n$ enumeration requests, each of which returns at most $n$ tuples, take $n \cdot (n+1) \cdot \bigO(n^{1-2\gamma}) = \bigO(n^{3-2\gamma})$ time in total. Together with the $\bigO(n^2)$ time for the comparisons that recover $\mathbf{M}\mathbf{v}_i$ from the enumerated payloads, the OMv problem is solved in time $\bigO(n^{3-2\gamma}) + \bigO(n^{3-2\gamma}) + \bigO(n^2) = \bigO(n^{3-2\gamma})$, which is subcubic for every $\gamma>0$ and contradicts the OMv conjecture.

\paragraph*{Generalization to Idempotent Strictly Ordered Semirings.}
Let $K = (K, +_K, \cdot_K, 0_K, 1_K)$ be an idempotent strictly ordered semiring with chain $0_K = c_0 \prec_K c_1 \prec_K c_2 \prec_K \cdots$ (Def.~\ref{def:temporal-chain-semiring}). The tropical proofs of this section reason about payloads with the arithmetic of numbers. In $K$, this role is taken by two properties of the natural order (Sec.~\ref{sec:class-k}), each immediate from the definition $a \preceq_K b$ iff $a + b = b$:
\begin{enumerate}[(i)]
    \item $a + b$ is the least upper bound of $a$ and $b$, since $a + (a + b) = a + b$, and $a + b + c = c$ whenever $a + c = c$ and $b + c = c$;
    \item multiplication is monotone, since $a + b = b$ implies $a \cdot c + b \cdot c = b \cdot c$.
\end{enumerate}

We translate the payloads of the tropical reductions of both cases, which are elements of $\mathbb{T}$, into elements of $K$. The reductions use three kinds of payloads: (1) absent tuples have payload $\infty$, the additive identity of $\mathbb{T}$; in $K$, they have payload $0_K$; (2) the tuples of $S$ and of $U_1, \ldots, U_m$ have payload $0$, the multiplicative identity of $\mathbb{T}$; in $K$, they get payload $1_K$; (3) the tuples inserted at step $i$ have payload $\Delta_i$; in $K$, they get payload $c_i$. The database constructions and the steps of the reductions are otherwise unchanged.

It remains to redo the correctness arguments in $K$. Consider step $i$ of the reduction of Case 1. Every payload in $R$ and $T$ is some $c_t$ with $0 \leq t \leq i$, hence at most $c_i$. The query answer is the sum of all products $R(j) \cdot S(j,k) \cdot T(k)$, that is, their least upper bound by (i).
\begin{itemize}
    \item If $\mathbf{u}_i^\top \mathbf{M} \mathbf{v}_i = 1$, there is a pair $(j,k)$ such that $R(j)$ and $T(k)$ were both inserted at step $i$ and $\mathbf{M}_{jk} = 1$; its product is $c_i \cdot c_i$. Every other product is at most $c_i \cdot c_i$ by (ii). The answer is thus $c_i \cdot c_i$.
    \item If $\mathbf{u}_i^\top \mathbf{M} \mathbf{v}_i = 0$, then in every product with $\mathbf{M}_{jk} = 1$, $R(j)$ or $T(k)$ is stale or absent, so its payload is some $c_t$ with $0 \leq t < i$, hence at most $c_{i-1}$. The product is then at most $c_{i-1} \cdot c_i$ by (ii), which is strictly below $c_i \cdot c_i$ by Def.~\ref{def:temporal-chain-semiring}. The answer is thus strictly below $c_i \cdot c_i$.
\end{itemize}
Testing whether the answer equals $c_i \cdot c_i$ therefore recovers $\mathbf{u}_i^\top \mathbf{M} \mathbf{v}_i$. In Case 2, the same argument with a single factor shows that the answer for $j$ is $c_i$ if $(\mathbf{M}\mathbf{v}_i)_j = 1$ and at most $c_{i-1} \prec_K c_i$ otherwise.

The reduction for $K$ executes the same inserts and enumeration requests as the tropical reduction. Its own work consists of computing $c_i$ from $c_{i-1}$ and $c_i \cdot c_i$ once per step, and of comparing each enumerated answer with $c_i \cdot c_i$ (Case 1) or with $c_i$ (Case 2); each of these operations takes constant time by the assumptions of Secs.~\ref{sec:prelims} and~\ref{sec:class-k}, and the number of comparisons is the same as in the tropical analyses. The time analyses and the contradictions with the OMv conjecture are therefore unchanged. For $K = \mathbb{T}$ and the chain of Example~\ref{ex:iso}, the reduction for $K$ is the tropical reduction with $\Delta_i = -i$.


\end{document}